\newtheorem{thm}{Theorem}
\newtheorem{lem}[thm]{Lemma}
\newtheorem{proof}[thm]{proof}
\newtheorem{defn}[thm]{Definition}
\newtheorem{rem}[thm]{Remark}
\newtheorem{exam}[thm]{Example}
\numberwithin{equation}{section}
\begin{document}

\title{Optimal Locally Repairable Linear Codes}

\author{Wentu Song,~Son Hoang Dau,~Chau Yuen~and~Tiffany Jing Li
\thanks{W. Song, S. H. Dau and C. Yuen are with Singapore University of
Technology and Design, Singapore
       (e-mails: \{wentu\_song, sonhoang\_dau, yuenchau\}@sutd.edu.sg).}
\thanks{T. J. Li is with the Department of Electrical and Computer Engineering,
Lehigh University, Bethlehem, PA 18015, USA (e-mail:
jingli@ece.lehigh.edu).}
}

\maketitle

\begin{abstract}

Linear erasure codes with local repairability are desirable for
distributed data storage systems. An $[n,k,d]$ code having
all-symbol $(r,\delta)$-locality, denoted as $(r,\delta)_a$, is
considered optimal if it also meets the minimum Hamming distance
bound. The existing results on the existence and the  construction
of optimal $(r, \delta)_a$ codes are limited to only the special
case of $\delta=2$, and to only two small regions within this
special case, namely, $m=0$ or $m\ge (v+\delta-1)>(\delta-1)$,
where $m=n \text{\ mod\ }(r+\delta-1)$ and $v=k~ \text{mod}\ r$.
This paper investigates the existence conditions and presents
deterministic constructive algorithms for optimal $(r,\delta)_a$
codes with general $r$ and $\delta$. First, a structure theorem is
derived for general optimal $(r,\delta)_a$ codes which helps
illuminate some of their structure properties. Next, the entire
problem space with arbitrary $n$, $k$, $r$ and $\delta$ is divided
into eight different cases (regions) with regard to the specific
relations of these parameters. For  two cases, it is rigorously
proved that no optimal $(r,\delta)_a$ could exist. For four other
cases the  optimal $(r,\delta)_a$ codes are shown to exist,
deterministic constructions are proposed and the lower bound on
the required field size for these algorithms to work is provided.
Our new constructive algorithms not only cover more  cases, but
for the same cases where previous algorithms exist, the new
constructions require a considerably smaller field, which
translates to potentially lower computational complexity. Our
findings substantially enriches the knowledge  on  $(r,\delta)_a$
codes, leaving only two cases in which the existence of optimal
codes are yet to be determined.

\end{abstract}


\IEEEpeerreviewmaketitle

\section{Introduction}

The sheer volume of today's digital data has made {\it distributed
storage systems} $($DSS$)$ not only massive in scale but also
critical in importance. Every day, people knowingly or unknowingly
connect to various private and public distributed storage systems,
include large data centers (such as the Google data centers and
Amazon Clouds) and peer-to-peer storage systems (such as
OceanStore \cite{Rhea}, Total Recall \cite{Bhagwan}, and DHash++
\cite{Dabek}). In a distributed storage system, a data file is
stored at a distributed collection of storage devices/nodes in a
network. Since any storage device is individually
 unreliable and subject to failure (i.e. erasure), redundancy must be introduced to provide the much-needed system-level protection against data loss due to device/node failure.

The simplest form of redundancy is {\it replication}. By storing $c$
identical copies of a file at $c$ distributed nodes, one copy per node, a $c$-replication system can guarantee the data availability as long as no more than $(c\!-\!1)$ nodes fail. Such systems are very easy to implement, but extremely inefficient in storage space utilization, incurring tremendous waste in devices and equipment, building space, and cost for powering and cooling. More sophisticated systems employing {\it erasure coding} \cite{Weather02} can expect to considerably improve the storage efficiency. Consider a file that is divided into $k$ equal-size fragments. A judiciously-designed $[n,k]$ erasure (systematic) code can be employed to encode the $k$ data fragments (terms {\it systematic symbols} in the coding jargon) into $n$ fragments (termed {\it coded symbols}) stored in $n$ different nodes. If the $[n,k,d]$ code reaches the Singleton bound such that the minimum Hamming distance satisfies $d=n-k+1$, then the code is {\it maximum distance separable} (MDS) and offers redundancy-reliability optimality.  With an $[n,k]$ MDS erasure code, the original file can be recovered from any set of $k$ encoded fragments, regardless of whether they are systematic or parity. In other words, the system can tolerate up to $(n-k)$ concurrent device/node failure without jeopardizing the data availability.

Despite the huge potentials of MDS erasure codes, however, practical application of these codes in massive storage networks have been difficult. Not only are simple (i.e. requires very little computational complexity) MDS codes very difficult to construct, but data repair would in general require the access of $k$ other encoded fragments \cite{Rodrigues05}, causing considerable input/output (I/O) bandwidth that would pose huge challenges to a typical storage network.

\begin{figure}[htbp]
\begin{center}
\includegraphics[scale=0.73]{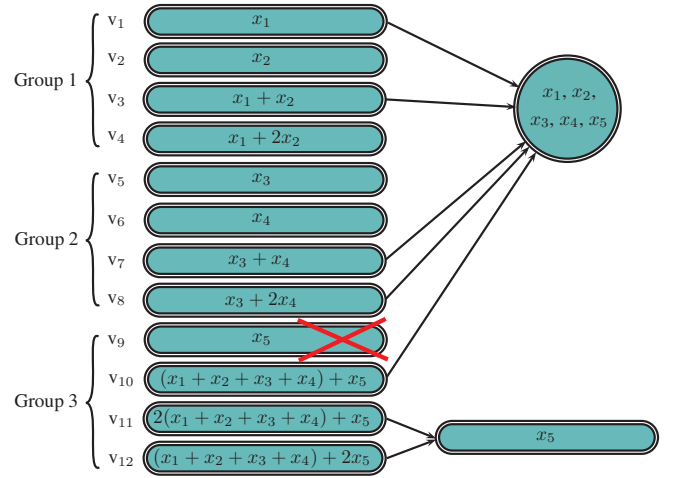}
\end{center}
\vspace{-5pt}
\caption{An example of how a locally repairable linear code is
used to construct a distributed storage system: a file $\mathcal
F$ is first split into five equal packets $\{x_1,\cdots,x_5\}$ and then
is encoded into $12$ packets, using a $(2,3)_a$ linear code. These
$12$ encoded packets are stored at $12$ nodes
$\{\text{v}_{1},\cdots,\text{v}_{12}\}$, which are divided into
three groups $\{v_1,v_2,v_3,v_4\}$, $\{v_5,v_6,v_7,v_8\}$ and
$\{v_9,v_{10},v_{11},v_{12}\}$. Each group can perform local repair of up to
two node-failures. For example, if Node $\text{v}_9$ fails, it
can be repaired by any two packets among
$\text{v}_{10},\text{v}_{11}$ and $\text{v}_{12}$. Moreover, the
entire file $\mathcal F$ can be recovered by five packets from any five nodes
$\text{v}_{i_1}, \cdots,\text{v}_{i_5}$ which intersect each
group with at most two packets. For example, $\mathcal F$ can be
recovered from five packets stored at
$\text{v}_1,\text{v}_3,\text{v}_7,\text{v}_8$ and $\text{v}_{10}$.
}\label{fig-dss}
\vspace{-17pt}
\end{figure}

\vskip 10pt
Motivated by the desire to reduce repair cost in
 the design of  erasure codes for distributed storage systems, Gopalan
\emph{et al.}~\cite{Gopalan12} introduced the interesting notion
of {\it symbol locality} in linear codes. The $i$th coded symbol
of an $[n,k]$ linear code ${\mathcal C}$ is said to have locality
$r~(1\leq r\leq k)$ if it can be recovered by accessing at most
$r$ other symbols in $\mathcal C$. The concept was further
generalized to $(r,\delta)$ locality by Prakash \emph{et al.}
\cite{Prakash12}, to address the situation of multiple device
failures.

According to \cite{Prakash12}, the $i$th code symbol $c_i, 1\leq
i\leq n$, in an $[n,k]$ linear code $\mathcal C$ is said to have
locality $(r,\delta)$ if there exists an index set
$S_i\subseteq[n]$ containing $i$ such that $|S_i|-\delta+1\leq r$
and each symbol $c_j, j\in S_i$, can be reconstructed by any
$|S_i|-\delta+1$ symbols in $\{c_\ell;\ell\in S_i\text{~and~}
\ell\neq j\}$, where $\delta\geq 2$ is an integer. Thus, when
$\delta = 2$, the notion of locality in \cite{Prakash12} reduces
to the notion of locality in \cite{Gopalan12}. Two cases of
$(r,\delta)$ codes are introduced in the literature: An
$(r,\delta)_i$ code is a systematic linear code whose {\it
information symbols} all have locality $(r,\delta)$; and an
$(r,\delta)_a$ code is a linear code all of whose {\it symbols}
have locality $(r,\delta)$. Hence, an $(r,\delta)_a$ code is also
referred to as having \emph{all-symbol locality} $(r,\delta)$, and
an $(r,\delta)_i$ code is also referred to as having
\emph{information locality} $(r,\delta)$. A symbol with
$(r,\delta)$ locality -- given that at the most $(\delta-1)$
symbols are erased -- can be deduced by reading at most $r$ other
unerased symbols.

Clearly, codes with a low symbol locality, such as $r<k$, impose a
low I/O bandwidth and repair cost in a distributed storage system.
In a DSS system, one can use ``group'' to describe storage nodes
situated in the same physical location which enjoy a higher
communication bandwidth and a shorter communication distance than
storage nodes belonging to different groups. In the case of node
failure, a \emph{locally repairable code} makes it possible to
efficiently recover data stored in the failed node by downloading
information from nodes in the same group (or in a minimal number
of other groups).  Fig. \ref{fig-dss} provides a simple example of
how an $(r,\delta)_a$ code is used to construct a distributed
storage system. In this example, $\mathcal C$ is a $(2,3)_a$
linear code of length $12$ and dimension $5$. Note that a failed
node can be reconstructed by accessing only two other existing
nodes, while it takes five existing nodes to repair a failed node
if a $[12,5]$ MDS code is used.

\renewcommand\figurename{Fig}
\begin{figure*}[htbp]
\begin{center}
\includegraphics[width=18.2cm]{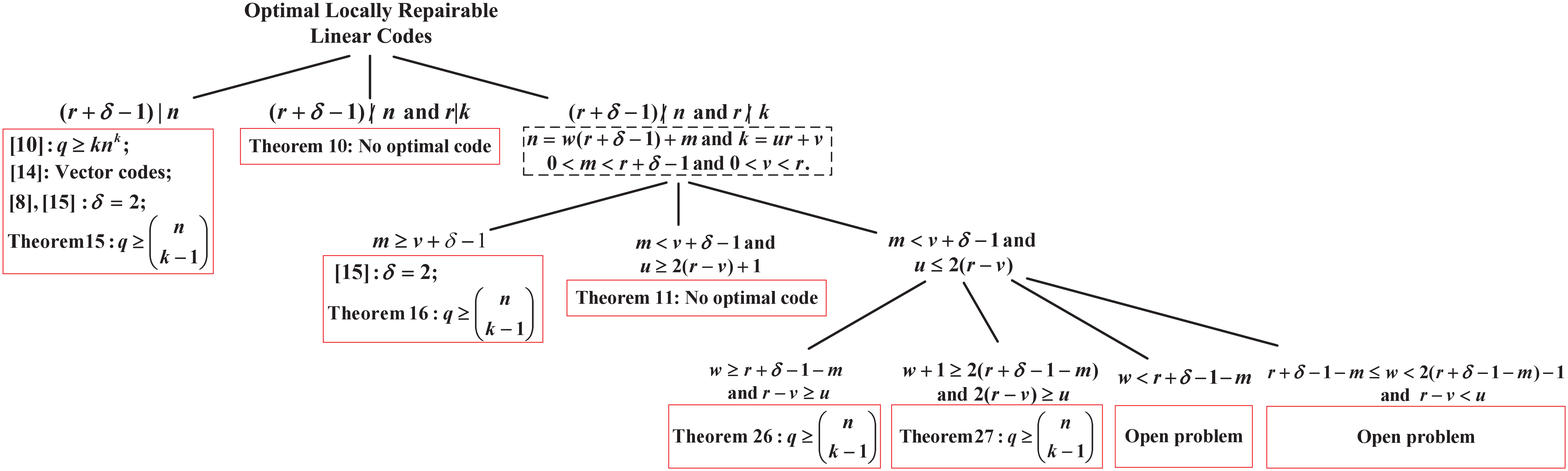}
\end{center}
\caption{Summary of existence of optimal $(r,\delta)_a$ linear
codes.}\label{sumy}
\end{figure*}

\subsection{Related Work}
Locality was identified as a repair cost metric for distributed
storage systems independently by Oggier \emph{et al.}
\cite{Oggier11}, Gopalan \emph{et al.} \cite{Gopalan12} and
PaPailiopoulos \emph{et al.} \cite{Papail121} using different
terms. In \cite{Gopalan12}, Gopalan \emph{et al.} introduced the
concept of symbol locality of linear codes and established a tight
bound for the redundancy in terms of the message length, the
distance, and the locality of information coordinates. A
generalized concept, i.e., $(r,\delta)$ locality, was addressed by
Prakash \emph{et al.} \cite{Prakash12}. It was proved in
\cite{Prakash12} that the minimum distance $d$ of an
$(r,\delta)_i$ linear code $\mathcal C$ is upper bounded by
\begin{align}
d\leq n-k+1-\left(\left \lceil\frac{k}{r}\right \rceil-1\right )
(\delta-1)\label{eqn:1}
\end{align}
where $n$ and $k$ are the length and dimension of $\mathcal C$
respectively. It was also proved that a class of codes known as
pyramid codes \cite{Huang07} achieve this bound. Since an
$(r,\delta)_a$ code is also an $(r,\delta)_i$ code, (\ref{eqn:1})
also presents an upper bound for the minimum distance of
$(r,\delta)_a$ codes.

Locality of general codes (linear or nonlinear) and bounds on the
minimum distance for a given locality were presented in parallel
and subsequent works \cite{Papail122,Rawat12}. An $(r,\delta)_a$
code (systematic or not) is also termed a \emph{locally repairable
code (LRC)}, and  $(r,\delta)_a$ codes that achieve the minimum
distance bound are called \emph{optimal}.

It was proved in \cite{Prakash12} that there exists optimal
locally repairable linear codes when $(r+\delta-1)|n$ and
$q>kn^k$. Under the condition that $(r+\delta-1)|n$, a
construction method of optimal locally repairable vector codes was
proposed in \cite{Rawat12}, where maximal rank distance (MRD)
codes were used along with MDS array codes. For the special case
of $\delta=2$, Tamo \emph{et al.} \cite{Tamo13} proposed an
explicit construction of optimal LRCs when
$$(r+1)|n$$ or $$n~\text{mod}~(r+1)-1\geq k~\text{mod}~r>0.
\footnote{Note that this condition is equivalent to the condition
that $m\geq v+1$, where $n=w(r+1)+m$ and $k=u(r+1)+v$ satisfying
$0<m<r+1$ and $0<v<r$.}$$ Except for the special case that
$n~\text{mod}~(r+1)-1\geq k~\text{mod}~r>0$, no results are known
about whether there exists optimal $(r,\delta)_a$ code when
$(r+\delta-1)\nmid n$.

Up to now, designing LRCs with optimal distance remains an intriguing open
problem  for most coding
parameters $n, k, r$ and $\delta$.
Since large fields involve rather complicated and expensive computation,
a related interesting open problem asks how to limit the design
(of optimal LRCs) over relatively smaller fields.

\subsection{Main Results}
In this paper, we investigate the structure properties and the
construction of optimal $(r,\delta)_a$ linear codes of length $n$
and dimension $k$. A simple property of optimal $(r,\delta)_a$
linear codes is proved in Lemma \ref{low-bound}, which shows that
$\frac{n}{r+k-1}\geq\frac{k}{r}$ for any optimal $(r,\delta)_a$
linear code. Hence  we impose this condition of
$\frac{n}{r+k-1}\geq\frac{k}{r}$ throughout our discussion of
optimal $(r,\delta)_a$ codes.

The main results of this
paper include:

(\romannumeral1) We prove a structure theorem for the optimal
$(r,\delta)_a$ linear codes for $r|k$. This structure theorem
indicates that it is possible for optimal $(r,\delta)_a$ linear
codes, a sub-class of optimal $(r,\delta)_i$ linear code, to have
a simpler structure than otherwise.

(\romannumeral2) We prove that there exist no
optimal $(r,\delta)_a$ linear codes for
\begin{align}
(r+\delta-1)\nmid n~\text{and}~r|k \label{eqn:2}
\end{align}
or
\begin{align}
m<v+\delta-1~\text{and}~u\geq 2(r-v)+1 \label{eqn:3}
\end{align}
where $n=w(r+\delta-1)+m$ and $k=ur+v$ such that $0<v<r$ and
$0<m<r+\delta-1$ (Theorems \ref{non-exst} and \ref{non-exst-1}).

(\romannumeral3) We propose a deterministic algorithm for
constructing optimal $(r,\delta)_a$ linear codes over any field of
size $q\geq\binom{n}{k-1}$ when
\begin{align}
(r+\delta-1)|n \label{eqn:4}
\end{align}
or
\begin{align}
m\geq v+\delta-1 \label{eqn:5}
\end{align}
where $n=w(r+\delta-1)+m$ and $k=ur+v$ such that $0<v<r$ and
$0<m<r+\delta-1$ (Theorem \ref{opt-ext-1} and \ref{opt-ext-2}).

(\romannumeral4) We propose another deterministic algorithm for
constructing optimal $(r,\delta)_a$ linear codes over any field of
size $q\geq\binom{n}{k-1}$ when
\begin{align}
w\geq r+\delta-1-m \text{\ and\ } \text{min}\{r-v,w\}\geq u
\label{eqn:6}
\end{align}
or
\begin{align}
w+1\geq 2(r+\delta-1-m)~\text{and}~\text{min}\{2(r-v),w\}\geq u
\label{eqn:7}
\end{align}
 where $n=w(r+\delta-1)+m$ and $k=ur+v$ such that
$0<v<r$ and $0<m<r+\delta-1$ (Theorem \ref{opt-ext-3} and
\ref{opt-ext-4}).

A summary of our results is given in Fig \ref{sumy}. Note that if
none of the conditions in (\ref{eqn:2})-(\ref{eqn:5}) holds, it
then follows that  $$m<v+\delta-1~\text{and}~u\leq 2(r-v).$$ In
that case, if condition (\ref{eqn:6}) does not hold, we have
$w<r+\delta-1-m~\text{or}~r-v<u$; and if condition (\ref{eqn:7})
does not hold, we have $w+1<2(r+\delta-1-m)$, i.e.,
$w<2(r+\delta-1-m)-1$. Hence, if, neither condition (\ref{eqn:6})
nor condition (\ref{eqn:7}) holds (in addition to
(\ref{eqn:2})-(\ref{eqn:5})), then one of the following two
conditions must be satisfied:
\begin{align}
w<r+\delta-1-m, \label{eqn:8}
\end{align} or
\begin{align}
r+\delta-1-m\leq w<2(r+\delta-1-m)-1\ \text{and}\ r-v<u.\label{eqn:9}
\end{align}
In other words, if none of the conditions
(\ref{eqn:2})-(\ref{eqn:7}) holds, then either (\ref{eqn:8}) or
(\ref{eqn:9}) will hold. From our existence proof and/or
constructive results, the existence of optimal $(r,\delta)_a$
linear code is not known only for a limited scope with parameters
described by (\ref{eqn:8}) and (\ref{eqn:9}).

The remainder of the paper is organized as follows. In Section
\uppercase\expandafter{\romannumeral2}, we present the notions
used in the paper as well as some preliminary results about
$(r,\delta)_a$ linear codes.
 In Section
\uppercase\expandafter{\romannumeral3}, we investigate the
structure of optimal $(r,\delta)_a$ linear codes when $r|k$
(should they exist). In Section
\uppercase\expandafter{\romannumeral4}, we consider the
non-existence conditions for optimal $(r,\delta)_a$ linear codes
under conditions (\ref{eqn:2}) and (\ref{eqn:3}). A construction
of optimal $(r,\delta)_a$ linear codes for conditions
(\ref{eqn:4}) and (\ref{eqn:5}) is presented in Section
\uppercase\expandafter{\romannumeral5}, and a construction of
optimal $(r,\delta)_a$ linear codes for conditions (\ref{eqn:6})
and (\ref{eqn:7}) is presented in Section
\uppercase\expandafter{\romannumeral6}. Finally, we conclude the
paper in Section \uppercase\expandafter{\romannumeral7}.

\section{Locality of Linear Codes}

For two positive integers $t_1$ and $t_2 ~(t_1\leq t_2)$, we
denote $[t_1,t_2]=\{t_1,t_1+1,\cdots,t_2\}$ and
$[t_2]=\{1,2,\cdots,t_2\}$. For any set $S$, the size
$($cardinality$)$ of $S$ is denoted by $|S|$. If $I$ is a subset
of $S$ and $|I|=r$, then we say that $I$ is an $r$-subset of $S$.
Let $\mathbb F_q^k$ be the $k$-dimensional vector space over the
$q$-ary field $\mathbb F_q$. For any subset $X\subseteq\mathbb
F_q^k$, we use $\langle X\rangle$ to denote the subspace of
$\mathbb F_q^k$ spanned by $X$.

In the sequel, whenever we speak of an $(r,\delta)_a$ or
$(r,\delta)_i$ code, we will by default assume it is an $[n,k,d]$
linear code (i.e., its length, dimension and minimum distance are
$n,k$ and $d$ respectively).

Suppose $\mathcal C$ is an $[n,k,d]$ linear code over $\mathbb
F_q$, and $G=(G_1,\cdots,G_n)$ is a generating matrix of $\mathcal
C$, where $G_i, i\in[n],$ is the $i$th column of $G$. We denote by
$\mathcal G=\{G_1,\cdots,G_n\}$ the collection of columns of $G$.
It is well known that the distance property is captured by the
following condition (e.g. \cite{Tsfasman}).

\vskip 10pt
\begin{lem}\label{fact}
An $[n,k]$ code $\mathcal C$ has a minimum distance $d$, if and
only if $|S|\leq n-d$ for every $S\subseteq\mathcal G$ having
$\text{Rank}(S)\leq k-1$. Equivalently, $\text{Rank}(T)=k$ for
every $T\subseteq\mathcal G$ of size $n-d+1$.
\end{lem}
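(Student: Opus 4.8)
The plan is to prove the standard characterization of minimum distance in terms of the column rank of the generator matrix. I would first fix a generating matrix $G=(G_1,\cdots,G_n)$ and recall the basic duality between codewords and column dependencies: a nonzero codeword $c=xG$ has support contained in a set $T\subseteq[n]$ of coordinates if and only if the vector $x\in\mathbb F_q^k$ is orthogonal to every column $G_j$ with $j\notin T$. Equivalently, writing $S=\mathcal G\setminus T'$ for the columns indexed outside the support, the existence of a nonzero codeword supported on a coordinate set of size at most $s$ is equivalent to the existence of a set of $n-s$ columns whose span is a proper subspace of $\mathbb F_q^k$, i.e. has rank at most $k-1$.

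The key step is to translate ``$\mathcal C$ has minimum weight $\geq d'$'' into a rank condition. By definition the minimum distance equals the minimum Hamming weight of a nonzero codeword. I would argue the contrapositive on both sides. A nonzero codeword of weight $\le n-d$ exists if and only if there is a nonzero $x$ vanishing on the complement of a coordinate set of size $\le n-d$; such an $x$ is precisely a nonzero vector lying in the orthogonal complement of the columns outside the support. Counting dimensions, a nonzero $x$ orthogonal to a column set $T$ exists exactly when $\mathrm{Rank}(T)\le k-1$. Choosing $T$ to consist of the columns indexed by the zero-coordinates of the codeword shows: there is a nonzero codeword of weight $\le n-d$ iff there is some $S\subseteq\mathcal G$ with $\mathrm{Rank}(S)\le k-1$ and $|S|\ge d$ (the columns outside the support). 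Rearranging, the minimum distance is $d$ exactly when every $S\subseteq\mathcal G$ with $\mathrm{Rank}(S)\le k-1$ satisfies $|S|\le n-d$, which is the first stated condition.

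For the equivalence of the two formulations I would simply take complements inside $\mathcal G$. If every rank-deficient column set $S$ has $|S|\le n-d$, then any column set $T$ with $|T|=n-d+1$ cannot be rank-deficient, so $\mathrm{Rank}(T)=k$; conversely, if every size-$(n-d+1)$ subset has full rank $k$, then any $S$ with $\mathrm{Rank}(S)\le k-1$ must have $|S|\le n-d$, since otherwise it would contain a full-rank $(n-d+1)$-subset, forcing $\mathrm{Rank}(S)=k$, a contradiction. This uses only monotonicity of rank under inclusion.

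The main obstacle, such as it is, is purely bookkeeping: keeping the correspondence between a codeword's \emph{support} and the complementary set of columns consistent, and being careful that $d$ is \emph{exactly} the minimum distance rather than merely a lower bound (which requires both directions of the iff, i.e. exhibiting a weight-$d$ codeword when some rank-$(k-1)$ set has size exactly $n-d$). Since this is a classical fact, I would keep the argument terse and lean entirely on the dimension count for the orthogonal-complement step and on rank monotonicity for the complementary reformulation.
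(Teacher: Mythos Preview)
The paper does not actually prove this lemma: it is stated as a well-known fact with a citation to \cite{Tsfasman} and used throughout without justification. So there is no ``paper's own proof'' to compare against; your proposal is simply supplying the standard argument, and the overall strategy (relate a codeword's zero coordinates to a rank-deficient set of columns via the orthogonal-complement dimension count, then take complements for the second formulation) is exactly the right one.

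That said, you have the bookkeeping slip you warned yourself about. In the middle paragraph you write ``there is a nonzero codeword of weight $\le n-d$ iff there is some $S\subseteq\mathcal G$ with $\mathrm{Rank}(S)\le k-1$ and $|S|\ge d$,'' and then claim this rearranges to the desired statement. It does not: that equivalence is true but irrelevant. What you need is ``there is a nonzero codeword of weight $\le d-1$ iff there is some rank-deficient $S$ with $|S|\ge n-d+1$,'' whose negation is precisely ``minimum distance $\ge d$ iff every rank-deficient $S$ has $|S|\le n-d$.'' You have swapped the roles of $d$ and $n-d$ in that one sentence. Fix that and the argument goes through cleanly; your final paragraph on the complementary reformulation is already correct.
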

\vskip 10pt

For any subset $S\subseteq[n]$, let $\mathcal C|_{S}$ denote the
punctured code of $\mathcal C$ associated with the coordinate set
$S$. That is, $\mathcal C|_{S}$ is obtained from $\mathcal C$ by
deleting all symbols $c_i, i\in[n]\backslash S$, in each codeword
$(c_1,\cdots,c_n)\in\mathcal C$.

\vskip 10pt
\begin{defn}[\cite{Prakash12}]\label{def-locality}
Suppose $1\leq r\leq k$ and $\delta\geq 2$. The $i$th code symbol
$c_i, 1\leq i\leq n$, in an $[n,k,d]$ linear code $\mathcal C$ is
said to have locality $(r,\delta)$ if there exists a subset
$S_i\subseteq[n]$ such that
\begin{itemize}
    \item [(1)] $|S_i|\leq r+\delta-1$;
    \item [(2)] The minimum distance of the punctured code
    $\mathcal C|_{S_i}$ is at least $\delta$.
\end{itemize}
\end{defn}
\vskip 10pt

\begin{rem}\label{rem-loty}
Let $G=(G_1,\cdots,G_n)$ be a generating matrix of $\mathcal C$.
By Lemma \ref{fact}, it is easy to see that the second condition in
Definition \ref{def-locality} is equivalent to the following
condition
\begin{itemize}
    \item [(2$'$)] $\text{Rank}(\{G_\ell; \ell\in I\})=
    \text{Rank}(\mathcal G_i)$ for any subset $I\subseteq S_i$ of size
    $|I|=|S_i|-\delta+1$, where $\mathcal G_i=\{G_\ell; \ell\in S_i\}$;
\end{itemize}
\end{rem}
\vskip 10pt

Moreover, by conditions (1) and (2$'$), we have
$$\text{Rank}(\mathcal G_i)=\text{Rank}(\{G_\ell; \ell\in
S_i\})\leq|S_i|-\delta+1\leq r.$$
That is, $\forall i'\in S_i$ and
$\forall I\subseteq S_i\backslash\{i'\}$ of size
$|I|=|S_i|-\delta+1$, $G_{i'}$ is an $\mathbb F_q$-linear
combination of $\{G_\ell;\ell\in I\}$. This means that the symbol
$c_{i'}$ can be reconstructed by the $|S_i|-\delta+1$ symbols in
$\{c_\ell;\ell\in I\}$.

 An $(r,\delta)_a$ code $\mathcal C$ is said to be
\emph{optimal} if the minimum distance $d$ of $\mathcal C$
achieves the bound in (\ref{eqn:1}).

The following remark follows naturally from Definition
\ref{def-locality} and Remark \ref{rem-loty}.

\vskip 10pt
\begin{rem}\label{rem-locality}
If $\mathcal C$ is an $(r,\delta)_a$ code and $G=(G_1,\cdots,G_n)$
is a generating matrix of $\mathcal C$, then we can always find a
collection $\mathcal S=\{S_1,\cdots, S_t\}$, where
$S_i\subseteq[n], i=1,\cdots,t$, such that
\begin{itemize}
  \item [(1)] $|S_i|\leq r+\delta-1, i=1,\cdots,t$;
  \item [(2)] $\text{Rank}(\{G_\ell; \ell\in I\})=
  \text{Rank}(\mathcal G_i)\leq r,
  \forall i\in[t]$ and $I\subseteq S_i$ of size $|I|=|S_i|-\delta+1$,
  where $\mathcal G_i=\{G_\ell; \ell\in S_i\}$;
  \item [(3)] $\cup_{i\in[t]}S_i=[n]$ and $\cup_{i\in[t]\backslash\{j\}}
  S_i\neq[n],\forall j\in[t]$.
\end{itemize}
We call the set $\mathcal S=\{S_1,\cdots,S_t\}$ an
$(r,\delta)$-\emph{cover set} of $\mathcal C$.
\end{rem}
\vskip 10pt

The following lemma presents a simple property of $(r,\delta)_a$ codes.

\vskip 10pt
\begin{lem}\label{low-bound}
An $(r,\delta)_a$ code $\mathcal C$  satisfies
\begin{itemize}
  \item [1)] The minimum distance $d\geq\delta$.
  \item [2)] If $\mathcal C$ is an optimal $(r,\delta)_a$ code, then
  $\frac{n}{r+\delta-1}\geq\frac{k}{r}$.
\end{itemize}
\end{lem}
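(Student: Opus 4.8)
The plan is to establish the two parts in sequence, using Part 1) as the key input to Part 2).

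For Part 1), I would argue directly about minimum-weight codewords. Let $c=(c_1,\dots,c_n)$ be a nonzero codeword of $\mathcal C$ of minimum weight, so that $\mathrm{wt}(c)=d$, and fix any coordinate $i$ in the support of $c$. By the all-symbol locality property (Definition \ref{def-locality}), there is a local set $S_i\subseteq[n]$ with $i\in S_i$ such that the punctured code $\mathcal C|_{S_i}$ has minimum distance at least $\delta$. The restriction $c|_{S_i}$ is a codeword of $\mathcal C|_{S_i}$, and it is nonzero because $c_i\neq 0$ and $i\in S_i$; hence $\mathrm{wt}(c|_{S_i})\geq\delta$. Since puncturing can only delete entries, $\delta\leq\mathrm{wt}(c|_{S_i})\leq\mathrm{wt}(c)=d$, which gives $d\geq\delta$.

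For Part 2), I would combine the inequality $d\geq\delta$ with optimality. If $\mathcal C$ is optimal, then $d$ equals the right-hand side of (\ref{eqn:1}), so $\delta\leq n-k+1-(\lceil k/r\rceil-1)(\delta-1)$. Rearranging isolates $\lceil k/r\rceil(\delta-1)\leq n-k$. Multiplying through by $r>0$ and using $\lceil k/r\rceil\geq k/r$, hence $r\lceil k/r\rceil\geq k$, I obtain $k(\delta-1)\leq r\lceil k/r\rceil(\delta-1)\leq(n-k)r$. This is exactly $(n-k)r\geq k(\delta-1)$, which, since $r$ and $r+\delta-1$ are positive, is equivalent after cross-multiplication to $\frac{n}{r+\delta-1}\geq\frac{k}{r}$.

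Both parts are elementary; the only points requiring care are (i) confirming that the restricted word $c|_{S_i}$ is genuinely a \emph{nonzero} codeword of the punctured code, which relies on $i\in S_i$ (guaranteed by the locality definition, or alternatively by the cover set of Remark \ref{rem-locality}), and (ii) the handling of the ceiling in Part 2), where the inequality $r\lceil k/r\rceil\geq k$ does the essential work. I do not anticipate a substantial obstacle: the role of the lemma is simply to pin down the regime $\frac{n}{r+\delta-1}\geq\frac{k}{r}$ in which optimal codes can possibly exist, thereby justifying the standing assumption imposed in the sequel.
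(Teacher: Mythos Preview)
Your proposal is correct and follows essentially the same approach as the paper: for Part~1) you both restrict a nonzero codeword to a local set where it remains nonzero and invoke the $\delta$ lower bound on the punctured code, and for Part~2) you both combine $d\geq\delta$ with the optimality equation and then use $\lceil k/r\rceil\geq k/r$ to reach $nr\geq k(r+\delta-1)$. The only cosmetic differences are that the paper phrases Part~1) via the cover set $\mathcal S$ from Remark~\ref{rem-locality} rather than a single local set $S_i$, and in Part~2) it multiplies through by $r$ before applying the ceiling inequality rather than after.
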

\begin{proof}
1) Let $\mathcal S=\{S_1,\cdots,S_t\}$ be an $(r,\delta)$-cover
set of $\mathcal C$. For any $0\neq(c_1,\cdots,c_n)\in\mathcal C$,
since $\cup_{i\in[t]}S_i=[n]$, there is an $i\in[t]$ such that the
punctured codeword $(c_j)_{j\in S_i}$ is nonzero in $\mathcal
C|_{S_i}$. By the second condition of Definition \ref{def-locality}, the Hamming
weight of $(c_j)_{j\in S_i}$ is at least $\delta$. Thus, the Hamming
weight of $(c_1,\cdots,c_n)$ is at least $\delta$. Since
$0\neq(c_1,\cdots,c_n)\in\mathcal C$ is arbitrary, the minimum
distance $d\geq\delta$.

2) Since $\mathcal C$ is an optimal $(r,\delta)_a$ code, from the minimum distance bound in (\ref{eqn:1}),
$$n=d+k-1+\left(\left\lceil\frac{k}{r}\right\rceil-1\right)(\delta-1).$$
From claim 1),
$d\geq\delta$; which leads to
$$n\geq\delta+k-1+\left(\left\lceil\frac{k}{r}\right\rceil-1\right)(\delta-1).$$ Hence,
\vspace{-0.05in}\begin{eqnarray*} nr&\geq&
r(\delta+k-1)+r(\lceil\frac{k}{r}\rceil-1)(\delta-1)\\
&\geq&r(\delta+k-1)+r(\frac{k}{r}-1)(\delta-1)\\
&=&k(r+\delta-1)
\end{eqnarray*}
which implies that $\frac{n}{r+\delta-1}\geq\frac{k}{r}$.
\end{proof}

\section{Structure of Optimal $(r,\delta)_a$ Code when $r|k$}
In this section, we prove a structure theorem for optimal
$(r,\delta)_a$ codes under the condition of $r|k$.

Throughout this section, we assume that $\mathcal C$ is an
$(r,\delta)_a$ code over the field $\mathbb F_q$,  $\mathcal
S=\{S_1,\cdots,S_t\}$ is an $(r,\delta)$-cover set of $\mathcal
C$, where $S_i\subseteq[n]$, $i=1,\cdots t$, and
$G=(G_1,\cdots,G_n)$ is a generating matrix of $\mathcal C$. We
denote $\mathcal G=\{G_1,\cdots,G_n\}$ and $\mathcal G_i=\{G_\ell;
\ell\in S_i\}$\footnote{When $G_i$ and $G_j$ are viewed as vectors
of $\mathbb F_q^k$, it is possible for $G_i=G_j$ where $i\neq j$.
However, when treating them as two different columns of $G$, we
shall view $G_i$ and $G_j$ as two separate elements in $\mathcal
G$ (even though they may be identical).}. Then for any
$I\subseteq[t]$, we have
\begin{align}
|\cup_{i\in I}\mathcal G_i|=|\{G_i; i\in\cup_{\ell\in I}S_\ell\}|
=|\cup_{i\in I}S_i| \label{eqn:10}
\end{align} and by
Remark \ref{rem-locality}, we get
\begin{align}
\cup_{i\in[t]}\mathcal G_i=\mathcal G ~\text{and} ~
\cup_{i\in[t]\backslash\{j\}}\mathcal G_i\neq\mathcal G, \forall
j\in[t].\label{eqn:11}
\end{align}

We first give some lemmas to help prove our main results.

\vskip 10pt
\begin{lem}\label{rank-sum}
Consider three sets $A,B,X\subseteq\mathbb F_q^k$. If $C$ is a
subset of $X$ satisfies: $\text{Rank}(B\cup C)=\text{Rank}(A\cup
B\cup C),$ then
$$\text{Rank}(X\cup A\cup B)-|B|\leq\text{Rank}(X).$$
\end{lem}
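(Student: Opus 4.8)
The plan is to reinterpret the hypothesis as a span-containment statement and then invoke the elementary fact that adjoining $|B|$ vectors can raise the rank by at most $|B|$. First I would observe that the rank equality $\text{Rank}(B\cup C)=\text{Rank}(A\cup B\cup C)$ is equivalent to the assertion that every vector of $A$ already lies in the span of $B\cup C$; in the notation of the paper this reads $A\subseteq\langle B\cup C\rangle$. This is really the only idea in the argument: the hypothesis, presented as an equality of two ranks, is telling us that $A$ contributes nothing new beyond $B\cup C$.

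Next, since $C$ is by hypothesis a subset of $X$, I would note the containment $\langle B\cup C\rangle\subseteq\langle B\cup X\rangle$, whence $A\subseteq\langle B\cup X\rangle$ as well. Consequently $A$ adds nothing to the span of $X\cup B$, so $\langle X\cup A\cup B\rangle=\langle X\cup B\rangle$, giving the rank identity $\text{Rank}(X\cup A\cup B)=\text{Rank}(X\cup B)$.

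Finally I would close with the standard subadditivity bound for spans: since $\langle X\cup B\rangle=\langle X\rangle+\langle B\rangle$ and $\dim\langle B\rangle\leq|B|$, we have $\text{Rank}(X\cup B)\leq\text{Rank}(X)+|B|$. Combining this with the previous identity yields $\text{Rank}(X\cup A\cup B)\leq\text{Rank}(X)+|B|$, which rearranges to the desired inequality $\text{Rank}(X\cup A\cup B)-|B|\leq\text{Rank}(X)$. I do not anticipate any real obstacle here; the entire proof hinges on correctly translating the rank equality into the containment $A\subseteq\langle B\cup C\rangle$, after which every remaining step is routine linear algebra. The only point that warrants a line of care is that these are multisets of columns rather than genuine subsets of vectors (as flagged in the paper's footnote), but since $\text{Rank}$ and $\langle\cdot\rangle$ depend only on the spans, the multiset subtlety does not affect any of the inequalities above.
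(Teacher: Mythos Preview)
Your proof is correct and essentially identical to the paper's: both arguments use $C\subseteq X$ together with the hypothesis to conclude $\text{Rank}(X\cup A\cup B)=\text{Rank}(X\cup B)$, and then bound the latter by $\text{Rank}(X)+|B|$. The only cosmetic difference is that you phrase the first step via the span containment $A\subseteq\langle B\cup C\rangle\subseteq\langle X\cup B\rangle$, whereas the paper writes it as a chain of rank equalities $\text{Rank}(X\cup A\cup B)=\text{Rank}(X\cup C\cup A\cup B)=\text{Rank}(X\cup B\cup C)=\text{Rank}(X\cup B)$.
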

\begin{proof}
Since $C\subseteq X$ and $\text{Rank}(B\cup C)=\text{Rank}(A\cup
B\cup C)$, we have \vspace{-0.05in}\begin{eqnarray*}
\text{Rank}(X\cup A\cup B)&=&\text{Rank}(X\cup C\cup A\cup B)\\
&=&\text{Rank}(X\cup B\cup C)\\&=&\text{Rank}(X\cup B)\\&\leq&
\text{Rank}(X)+\text{Rank}(B)\\&\leq&\text{Rank}(X)+|B|.
\end{eqnarray*}
Therefore, $\text{Rank}(X\cup A\cup B)-|B|\leq\text{Rank}(X)$.
\end{proof}
\vskip 10pt

\begin{lem}\label{cup-rank}
Suppose $\{i_1,\cdots,i_\ell\}\subseteq[t]$ such that $\mathcal
G_{i_j}\nsubseteq\langle\cup_{\lambda=1}^{j-1}\mathcal
G_{i_\lambda}\rangle$, $j=2,\cdots,\ell$. Then
$$|\cup_{j=1}^{\ell}S_{i_j}|\geq\text{Rank}(\cup_{j=1}^{\ell}\mathcal
G_{i_j})+\ell(\delta-1).$$
\end{lem}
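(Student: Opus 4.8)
The plan is to induct on $\ell$, the number of groups, with the engine being a single-group redundancy fact drawn from condition (2) of Remark \ref{rem-locality}: since any $|S_i|-\delta+1$ of the columns in $\mathcal G_i$ already span $\langle\mathcal G_i\rangle$, deleting any $\delta-1$ columns from $\mathcal G_i$ leaves its span, and hence its rank, unchanged; in particular $\text{Rank}(\mathcal G_i)\le|S_i|-(\delta-1)$. This immediately settles the base case $\ell=1$, namely $|S_{i_1}|\ge\text{Rank}(\mathcal G_{i_1})+(\delta-1)$.

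For the inductive step I would write $W=\cup_{\lambda=1}^{\ell-1}S_{i_\lambda}$ and $U=\cup_{\lambda=1}^{\ell-1}\mathcal G_{i_\lambda}$, and split $\mathcal G_{i_\ell}$ along $W$: let $B$ be the sub-collection of columns of $\mathcal G_{i_\ell}$ sitting at coordinates in $S_{i_\ell}\cap W$ (so $B\subseteq U$) and let $A$ be the columns at the fresh coordinates $S_{i_\ell}\setminus W$, so that $|A|=|S_{i_\ell}\setminus W|$ by the column/coordinate identification (\ref{eqn:10}). The first thing to establish is $|A|\ge\delta-1$. If instead $|A|\le\delta-2$, then deleting the at most $\delta-1$ columns $A$ from $\mathcal G_{i_\ell}$ would preserve its span by the redundancy fact, giving $\langle\mathcal G_{i_\ell}\rangle=\langle B\rangle\subseteq\langle U\rangle$ and contradicting the hypothesis $\mathcal G_{i_\ell}\nsubseteq\langle\cup_{\lambda=1}^{\ell-1}\mathcal G_{i_\lambda}\rangle$.

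With $|A|\ge\delta-1$ in hand I would bound the rank increment via Lemma \ref{rank-sum}. Pick any $A_1\subseteq A$ with $|A_1|=\delta-1$ and set $A_2=A\setminus A_1$. Deleting the $\delta-1$ columns $A_1$ from $\mathcal G_{i_\ell}=A_1\cup A_2\cup B$ preserves the span, i.e. $\text{Rank}(A_2\cup B)=\text{Rank}(A_1\cup A_2\cup B)$. Applying Lemma \ref{rank-sum} with $X=U$, the lemma's $A$ equal to $A_1$, the lemma's $B$ equal to $A_2$, and $C=B\subseteq U$, the lemma's hypothesis is precisely this rank equality, and since $B\subseteq U$ its conclusion reads $\text{Rank}(U\cup\mathcal G_{i_\ell})-|A_2|\le\text{Rank}(U)$, that is $\text{Rank}(U\cup\mathcal G_{i_\ell})\le\text{Rank}(U)+|A|-(\delta-1)$.

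Finally I would close the induction: by cardinality additivity $|\cup_{j=1}^\ell S_{i_j}|=|W|+|A|$, and combining the induction hypothesis $|W|\ge\text{Rank}(U)+(\ell-1)(\delta-1)$ with the rank-increment bound rearranged as $|A|\ge\text{Rank}(\cup_{j=1}^\ell\mathcal G_{i_j})-\text{Rank}(U)+(\delta-1)$ yields $|\cup_{j=1}^\ell S_{i_j}|\ge\text{Rank}(\cup_{j=1}^\ell\mathcal G_{i_j})+\ell(\delta-1)$. I expect the only genuine subtlety to be the step $|A|\ge\delta-1$: this is where the local MDS-type structure of condition (2) in Remark \ref{rem-locality} must be combined with the spanning hypothesis, and it is exactly what forbids groups from overlapping so heavily that a new group contributes fewer than $\delta-1$ fresh coordinates. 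The remainder is bookkeeping with (\ref{eqn:10}) and Lemma \ref{rank-sum}.
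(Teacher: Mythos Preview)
Your proof is correct and follows essentially the same route as the paper's: induction on $\ell$, the contradiction step using condition (2) of Remark \ref{rem-locality} to force enough fresh coordinates, then Lemma \ref{rank-sum} applied with $X=\cup_{\lambda<\ell}\mathcal G_{i_\lambda}$, the $(\delta-1)$-piece of the fresh columns playing the role of $A$, the remainder playing $B$, and the overlap playing $C$. The only cosmetic difference is that the paper actually obtains the slightly stronger bound $|A|>\delta-1$ (the same span-preservation argument works when $|A|=\delta-1$ as well), whereas you stop at $|A|\ge\delta-1$; your weaker bound is already enough to pick the $(\delta-1)$-subset $A_1$, so nothing is lost.
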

\begin{proof} We prove this lemma by induction.

From Remark \ref{rem-loty}, $|S_{i_1}|\geq\text{Rank}(\mathcal
G_{i_1})+(\delta-1)$. Hence the claim holds for $\ell=1$.

Now consider $\ell\ge 2$. We assume  that the claim holds for $\ell-1$, i.e.,
\begin{align}
|\cup_{j=1}^{\ell-1}S_{i_j}|\geq\text{Rank}(\cup_{j=1}^{\ell-1}\mathcal
G_{i_j})+(\ell-1)(\delta-1).\label{eqn:12}
\end{align}
We shall prove that the claim is true for $\ell$.

First, we point out that $|\mathcal
G_{i_\ell}\backslash(\cup_{j=1}^{\ell-1}\mathcal
G_{i_j})|>\delta-1.$ In fact, if $|\mathcal
G_{i_\ell}\backslash(\cup_{j=1}^{\ell-1}\mathcal
G_{i_j})|\leq\delta-1$, then $|\mathcal
G_{i_\ell}\cap(\cup_{j=1}^{\ell-1}\mathcal G_{i_j}|\geq|\mathcal
G_{i_\ell}|-(\delta-1)$. From condition (2) of Remark \ref{rem-locality},
$\mathcal G_{i_\ell}\subseteq\langle\mathcal
G_{i_\ell}\cap(\cup_{j=1}^{\ell-1}\mathcal
G_{i_j})\rangle\subseteq \langle\cup_{j=1}^{\ell-1}\mathcal
G_{i_j}\rangle$, which presents a contradiction to the assumption that $\mathcal
G_{i_\ell}\nsubseteq\langle\cup_{j=1}^{\ell-1}\mathcal
G_{i_j}\rangle$. Thus,
$$|\mathcal G_{i_\ell}\backslash(\cup_{j=1}^{\ell-1}\mathcal G_{i_j})|>\delta-1.$$

Let $X=\cup_{j=1}^{\ell-1}\mathcal G_{i_j}$ and $C=\mathcal
G_{i_\ell}\cap(\cup_{j=1}^{\ell-1}\mathcal G_{i_j})=\mathcal
G_{i_\ell}\cap X$. Let $A$ be a fixed $(\delta-1)$-subset of
$\mathcal G_{i_\ell}\backslash(\cup_{j=1}^{\ell-1}\mathcal
G_{i_j})$ and $B=(\mathcal
G_{i_\ell}\backslash\cup_{j=1}^{\ell-1}\mathcal G_{i_j})\backslash
A$.

From condition (2) of Remark \ref{rem-locality}, $\text{Rank}(B\cup
C)=\text{Rank}(A\cup B\cup C).$ Then, from Lemma \ref{rank-sum}, we
get
$$\text{Rank}(X\cup A\cup B)-|B|\leq\text{Rank}(X)$$ i.e.,
\begin{align}
\text{Rank}(\cup_{j=1}^{\ell}\mathcal G_{i_j})-|B|\leq
\text{Rank}(\cup_{j=1}^{\ell-1}\mathcal G_{i_j}).\label{eqn:13}
\end{align}
Clearly, $\cup_{j=1}^{\ell}\mathcal G_{i_j}$ is a disjoint union
of $A, B$ and $\cup_{j=1}^{\ell-1}\mathcal G_{i_j}$. Hence,
\begin{eqnarray*} |\cup_{j=1}^{\ell}\mathcal
G_{i_j}|&=&|\cup_{j=1}^{\ell-1}\mathcal
G_{i_j}|+|A|+|B|\nonumber\\&=&|\cup_{j=1}^{\ell-1}\mathcal
G_{i_j}|+(\delta-1)+|B|\nonumber
\end{eqnarray*}
and from (\ref{eqn:10}), we get
\begin{align}
|\cup_{j=1}^{\ell}S_{i_j}|=|\cup_{j=1}^{\ell}\mathcal G_{i_j}|
=|\cup_{j=1}^{\ell-1}S_{i_j}|+(\delta-1)+|B|.\label{eqn:14}
\end{align}
Combining (\ref{eqn:12})-(\ref{eqn:14}), we have
\begin{eqnarray*} |\cup_{j=1}^{\ell}S_{i_j}|
&=&|\cup_{j=1}^{\ell-1}S_{i_j}|+(\delta-1)+|B|
\\&\geq&\text{Rank}(\cup_{j=1}^{\ell-1}\mathcal
G_{i_j})+\ell(\delta-1)+|B|\\
&\geq&\text{Rank}(\cup_{j=1}^{\ell}\mathcal
G_{i_j})-|B|+\ell(\delta-1)+|B|\\&=&\text{Rank}(\cup_{j=1}^{\ell}\mathcal
G_{i_j})+\ell(\delta-1)
\end{eqnarray*}
which completes the proof.
\end{proof}
\vskip 10pt

\begin{lem}\label{not-ctn}
Suppose $\mathcal C$ is an optimal $(r,\delta)_a$ code. Then
\begin{itemize}
  \item [1)] $t\geq\lceil\frac{n}{r+\delta-1}\rceil\geq\lceil\frac{k}{r}\rceil$.
  \item [2)] If $J\subseteq[t]$ and $|J|\leq\lceil\frac{k}{r}\rceil-1$,
  then $\text{Rank}(\cup_{i\in J}\mathcal G_i)\leq k-1$ and
  $\mathcal G_h\nsubseteq\langle\cup_{i\in
  J}\mathcal G_i\rangle, \forall h\in[t]\backslash J$.
  \item [3)] If $J\subseteq[t]$ and $|J|=\lceil\frac{k}{r}\rceil$,
  then $\text{Rank}(\cup_{i\in J}\mathcal G_i)=k$ and
  $|\cup_{i\in J}S_i|\geq k+\lceil\frac{k}{r}\rceil(\delta-1)$.
\end{itemize}
\end{lem}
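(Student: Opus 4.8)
The plan is to establish the three claims in order, obtaining 3) from 2), and to use the distance characterization of Lemma \ref{fact} as the single engine for producing contradictions. For 1), I would simply count coordinates: since $\cup_{i\in[t]}S_i=[n]$ and $|S_i|\le r+\delta-1$, we have $n\le t(r+\delta-1)$, so $t\ge\lceil n/(r+\delta-1)\rceil$; the second inequality then follows because the ceiling function is nondecreasing and Lemma \ref{low-bound} gives $n/(r+\delta-1)\ge k/r$. The rank bound in 2) is equally direct: $\text{Rank}(\cup_{i\in J}\mathcal G_i)\le\sum_{i\in J}\text{Rank}(\mathcal G_i)\le|J|r\le(\lceil k/r\rceil-1)r$, and writing $k=ur+v$ with $0\le v<r$ shows $(\lceil k/r\rceil-1)r\le k-1$ in both cases $v=0$ and $v>0$.

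The heart of the matter is the non-containment statement in 2), which I would prove by contradiction. Assuming $\mathcal G_h\subseteq\langle\cup_{i\in J}\mathcal G_i\rangle$ for some $h\in[t]\setminus J$ with $|J|\le\lceil k/r\rceil-1$, I would build a column set of rank exactly $k-1$ but size at least $n-d+1$, contradicting Lemma \ref{fact}. First extract from $J$ a sub-collection forming a chain (each new group enlarging the span) that spans $\langle\cup_{i\in J}\mathcal G_i\rangle$, then \emph{pad} it with further groups from $[t]\setminus(J\cup\{h\})$ until the chain has length exactly $\lceil k/r\rceil-1$; this is possible since a chain of that length has rank at most $(\lceil k/r\rceil-1)r\le k-1<k$, so some group other than $h$ always strictly enlarges the span (the group $h$ never can, lying in the span of the chain), and by 1) enough groups are available. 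Lemma \ref{cup-rank} then endows the padded chain with a coordinate set of size at least $\rho'+(\lceil k/r\rceil-1)(\delta-1)$, where $\rho'\le k-1$ is its rank. Next I would append $S_h$: by the minimality condition (3) of Remark \ref{rem-locality}, $S_h$ owns at least one private coordinate absent from every other $S_i$, hence from the chain, so the size grows by at least $1$ while the rank stays $\rho'$. Finally I would append just enough fresh columns to raise the rank from $\rho'$ to exactly $k-1$, adding $(k-1)-\rho'$ coordinates. Summing, the size is at least $k-1+(\lceil k/r\rceil-1)(\delta-1)+1=(n-d)+1$, the desired contradiction.

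For 3) with $|J|=\lceil k/r\rceil$, the non-containment just proved shows that any ordering of $J$ is a chain, so Lemma \ref{cup-rank} applies and gives $|\cup_{i\in J}S_i|\ge\text{Rank}(\cup_{i\in J}\mathcal G_i)+\lceil k/r\rceil(\delta-1)$. To see the rank equals $k$, I would again suppose $\text{Rank}(\cup_{i\in J}\mathcal G_i)\le k-1$ and append fresh columns to push the rank up to exactly $k-1$; combined with the chain bound this yields a set of rank $k-1$ and size at least $k-1+\lceil k/r\rceil(\delta-1)>n-d$, contradicting Lemma \ref{fact} (the strict inequality uses $\delta\ge2$). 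Hence the rank is $k$, and substituting this back into the chain bound gives $|\cup_{i\in J}S_i|\ge k+\lceil k/r\rceil(\delta-1)$.

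The main obstacle, and the step I would watch most carefully, is the size bookkeeping in 2): one must simultaneously (i) secure the full bonus $(\lceil k/r\rceil-1)(\delta-1)$ from Lemma \ref{cup-rank}, which forces padding the chain to the exact length $\lceil k/r\rceil-1$ rather than using only the chain already hidden inside $J$, (ii) extract the one extra coordinate from the redundant group $h$ via cover minimality, and (iii) top up the rank to precisely $k-1$ so that Lemma \ref{fact} applies. A naive count that omits the padding falls short of $n-d$ by exactly the bonus the shorter chain fails to collect, so pinning down the chain length is the crucial technical point.
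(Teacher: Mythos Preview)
Your proposal is correct and follows essentially the same approach as the paper: both arguments hinge on Lemma~\ref{cup-rank} to produce a column set of rank $k-1$ and size at least $n-d+1$, then invoke Lemma~\ref{fact} for the contradiction. The only cosmetic difference is in the bookkeeping of part~2): the paper extends the chain with full groups until the rank reaches $k$ (so the chain has length $\ell\ge\lceil k/r\rceil$) and then trims the last group to hit rank exactly $k-1$, whereas you pad the chain to length exactly $\lceil k/r\rceil-1$ and afterwards top up the rank with arbitrary individual columns; both routes collect the same bonus $(\lceil k/r\rceil-1)(\delta-1)$ and the same extra coordinate from $S_h$ via cover minimality.
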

\begin{proof}
1) (Proof by contradiction) Suppose $t\leq\lceil\frac{n}{r+\delta-1}\rceil-1$. Then from
Remark \ref{rem-locality}, $$|S_i|\leq r+\delta-1.$$ Hence,
\begin{eqnarray*} n&=&|\cup_{i\in[t]}S_i|\\
&\leq&t(r+\delta-1)\\
&\leq&(\lceil\frac{n}{r+\delta-1}\rceil-1)(r+\delta-1)\\&<&n
\end{eqnarray*} which presents a contradiction. Hence, it must hold
 that $t\geq\lceil\frac{n}{r+\delta-1}\rceil$.

Moreover, from Claim 2) of Lemma \ref{low-bound},
$\frac{n}{r+\delta-1}\geq\frac{k}{r}$. Thus,
$$t\geq\lceil\frac{n}{r+\delta-1}\rceil\geq\lceil\frac{k}{r}\rceil.$$

2) From Remark \ref{rem-loty}, $\text{Rank}(\mathcal G_i)\leq r,
\forall i\in[t]$. Hence, if $|J|\leq\lceil\frac{k}{r}\rceil-1$,
then
$$\text{Rank}(\cup_{i\in J}\mathcal G_i)\leq r|J|\leq
r(\lceil\frac{k}{r}\rceil-1)<r\frac{k}{r}=k.$$ i.e.,
$\text{Rank}(\cup_{i\in J}\mathcal G_i)\leq k-1.$

Now, suppose $\mathcal G_h\subseteq\langle\cup_{i\in J}\mathcal
G_i\rangle$, and we will see a contradiction results. First, we can find a subset
$J_0=\{i_1,\cdots,i_s\}\subseteq J$ such that $\mathcal
G_h\subseteq\langle\cup_{\lambda=1}^{s}\mathcal G_{i_s}\rangle$
and $\mathcal G_h\nsubseteq\langle\cup_{i\in J'}\mathcal
G_i\rangle$ for any proper subset $J'$ of $J_0$. In particular, we
have
$$\mathcal G_{i_j}\nsubseteq\langle\cup_{\lambda=1}^{j-1}\mathcal
G_{i_\lambda}\rangle, j=2,\cdots,s.$$ Note that $|J_0|\leq
|J|\leq\lceil\frac{k}{r}\rceil-1$. By the proved result, we have
$$\text{Rank}(\cup_{i\in J_0}\mathcal G_i)\leq k-1.$$
Next, we can find a sequence $\mathcal G_{i_1},\cdots,\mathcal
G_{i_s}, \mathcal G_{i_{s+1}},\cdots, \mathcal G_{i_\ell}$ such
that $\ell\geq\lceil\frac{k}{r}\rceil,
\text{Rank}(\cup_{j=1}^\ell\mathcal G_{i_j})=k$ and $\mathcal
G_{i_j}\nsubseteq\langle\cup_{\lambda=1}^{j-1}\mathcal
G_{i_\lambda}\rangle, j=2,\cdots,\ell$. In particular,
$\text{Rank}(\cup_{j=1}^{\ell-1}\mathcal G_{i_j})\leq k-1$.
Therefore, there exists a $\mathcal G'_{i_\ell}\subseteq\mathcal
G_{i_\ell}$ such that $\text{Rank}((\cup_{j=1}^{\ell-1}\mathcal
G_{i_j})\cup\mathcal G'_{i_\ell})=k-1$. Denote
$(\cup_{j=1}^{\ell-1}\mathcal G_{i_j})\cup\mathcal G'_{i_\ell}=S$.
Then $\text{Rank}(S)=k-1$ and
\begin{eqnarray}
\nonumber |\mathcal
G'_{i_\ell}\backslash\cup_{j=1}^{\ell-1}\mathcal
G_{i_j}|&\geq&\text{Rank}(S)-\text{Rank}(\cup_{j=1}^{\ell-1}\mathcal
G_{i_j})\\
&=&(k-1)-\text{Rank}(\cup_{j=1}^{\ell-1}\mathcal
G_{i_j}). \label{eqn:15}
\end{eqnarray}
From Lemma \ref{cup-rank},
\begin{align}
|\cup_{j=1}^{\ell-1}\mathcal
G_{i_j}|\geq\text{Rank}(\cup_{j=1}^{\ell-1}\mathcal
G_{i_j})+(\ell-1)(\delta-1).
\label{eqn:16}
\end{align}
Then by equations (\ref{eqn:15}) and (\ref{eqn:16}),
\begin{eqnarray} \nonumber|S|&=&|\mathcal
G'_{i_\ell}\backslash\cup_{j=1}^{\ell-1}\mathcal
G_{i_j}|+|\cup_{j=1}^{\ell-1}\mathcal G_{i_j}|\\
\nonumber&\geq&(k-1)+(\ell-1)(\delta-1)\\
&\geq&k-1+(\lceil\frac{k}{r}\rceil-1)(\delta-1). \label{eqn:17}
\end{eqnarray}
Since $h\in[t]\backslash J$, $\mathcal G_h\neq \mathcal G_{i_j},
j=1,\cdots,s$. Moreover, since $\mathcal
G_h\subseteq\langle\cup_{\lambda=1}^{s}\mathcal G_{i_s}\rangle$
and $\mathcal
G_{i_j}\nsubseteq\langle\cup_{\lambda=1}^{j-1}\mathcal
G_{i_\lambda}\rangle, j=2,\cdots,\ell$, so $\mathcal G_h\neq
\mathcal G_{i_j}, j=s+1,\cdots,\ell$. From equation (\ref{eqn:11}), we have
$\mathcal G_h\nsubseteq\cup_{j=1}^\ell \mathcal G_{i_j}$. Then, from
equation (\ref{eqn:17}), we get
$$|\mathcal G_h\cup S|>|S|\geq
k-1+(\lceil\frac{k}{r}\rceil-1)(\delta-1).$$ Since we assumed
$\mathcal G_h\subseteq\langle\cup_{\lambda=1}^{s}\mathcal
G_{i_s}\rangle\subseteq\langle S\rangle$, then
$\text{Rank}(\mathcal G_h\cup S)=\text{Rank}(S)=k-1$. By Lemma
\ref{fact}, we have
$$d\leq n-|\mathcal G_h\cup
S|<n-k+1-(\lceil\frac{k}{r}\rceil-1)(\delta-1),$$ which
contradicts the assumption that $\mathcal C$ is an optimal
$(r,\delta)_a$ code. Hence, it must be that $\mathcal
G_h\nsubseteq\langle\cup_{i\in J}\mathcal G_i\rangle$.\footnote{In
this proof, for any $(r,\delta)_a$ code $\mathcal C$, we obtain a
subset $S\subseteq\mathcal G$ such that $|S|\geq
k-1+(\lceil\frac{k}{r}\rceil-1)(\delta-1)$ and Rank$(S)=k-1$. Then
by Lemma \ref{fact}, the minimum distance of $\mathcal C$ is
$d\leq n-k+1-(\lceil\frac{k}{r}\rceil-1)(\delta-1)$, which also
provides a proof of the minimum distance bound in (\ref{eqn:1}).}

3) Suppose $J=\{i_1,\cdots,i_s\}$, where
$s=\lceil\frac{k}{r}\rceil$. By claim 2), $$\mathcal
G_{i_j}\nsubseteq\langle\cup_{\lambda=1}^{j-1}\mathcal
G_{i_\lambda}\rangle, j=2,\cdots,s.$$

First, we have $\text{Rank}(\cup_{i\in J}\mathcal G_i)=k$.
Otherwise, as in the proof of claim 2), we can find a sequence
$\mathcal G_{i_1},\cdots,\mathcal G_{i_s}$, $\mathcal
G_{i_{s+1}},\cdots, \mathcal
G_{i_\ell}~(\ell>s=\lceil\frac{k}{r}\rceil)$ and a set
$S=(\cup_{j=1}^{\ell-1}\mathcal G_{i_j})\cup\mathcal
G'_{i_\ell}~(\mathcal G'_{i_\ell}\subseteq\mathcal G_{i_\ell})$
such that $$|S|\geq k-1+(\ell-1)(\delta-1)>
k-1+(\lceil\frac{k}{r}\rceil-1)(\delta-1).$$ By Lemma \ref{fact},
$$d\leq n-|S|<n-k+1-(\lceil\frac{k}{r}\rceil-1)(\delta-1)$$ which
contradicts the assumption that $\mathcal C$ is an optimal
$(r,\delta)_a$ code. Therefore,  we have $\text{Rank}(\cup_{i\in
J}\mathcal G_i)=k$.

Now, by Lemma \ref{cup-rank}, \begin{eqnarray*} |\cup_{i\in
J}S_i|&\geq&\text{Rank}(\cup_{i\in J}\mathcal
G_i)+\lceil\frac{k}{r}\rceil(\delta-1)
\\&=&k+\lceil\frac{k}{r}\rceil(\delta-1).
\end{eqnarray*}
This completes the proof.
\end{proof}
\vskip 10pt

We now  present our main theorem of this section.

\vskip 10pt
\begin{thm}\label{stru-opt}
Suppose $\mathcal C$ is an optimal $(r,\delta)_a$ linear code. If
$r|k$ and $r<k$, then the following conditions hold:
\begin{itemize}
\item [1)] $S_1,\cdots,S_t$ are mutually disjoint; \item [2)]
$|S_i|=r+\delta-1, \forall i\in[t]$, and the punctured code
$\mathcal C|_{S_i}$ is an $[r+\delta-1,r,\delta]$ MDS code.
\end{itemize}
In particular, we have $(r+\delta-1)\mid n$.
\end{thm}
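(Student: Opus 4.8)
The plan is to exploit Lemma~\ref{not-ctn} together with the divisibility $r\mid k$, which makes $\lceil k/r\rceil=k/r$ an honest integer; write $s=k/r$, and note $s\ge 2$ because $r<k$. Since $t\ge s$ by part~1) of Lemma~\ref{not-ctn}, every index $i\in[t]$ sits inside at least one $s$-subset $J\subseteq[t]$, and any two distinct indices $a,b$ can be placed together in such a $J$ (this is where $s\ge2$ is used).

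The heart of the argument is to squeeze one quantity between a lower and an upper bound that happen to coincide precisely because $k=sr$. Fix any $J\subseteq[t]$ with $|J|=s$. On one hand, part~3) of Lemma~\ref{not-ctn} gives $|\cup_{i\in J}S_i|\ge k+s(\delta-1)=s(r+\delta-1)$. On the other hand, condition~(1) of Remark~\ref{rem-locality} gives the trivial bound $|\cup_{i\in J}S_i|\le\sum_{i\in J}|S_i|\le s(r+\delta-1)$. Hence every inequality in this chain is an equality. Equality in $\sum_{i\in J}|S_i|=s(r+\delta-1)$ with each $|S_i|\le r+\delta-1$ forces $|S_i|=r+\delta-1$ for all $i\in J$, and equality in $|\cup_{i\in J}S_i|=\sum_{i\in J}|S_i|$ forces the sets $\{S_i:i\in J\}$ to be pairwise disjoint.

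From here both claims follow globally. Since each $i$ lies in some such $J$, I get $|S_i|=r+\delta-1$ for every $i\in[t]$; and for distinct $a,b$ I pick $J\ni a,b$ of size $s$ to conclude $S_a\cap S_b=\emptyset$, so $S_1,\dots,S_t$ are mutually disjoint. Disjointness together with $\cup_i S_i=[n]$ then yields $n=\sum_i|S_i|=t(r+\delta-1)$, i.e.\ $(r+\delta-1)\mid n$. For the MDS statement I first upgrade the rank: choosing $J\ni i$ of size $s$, part~3) gives $\text{Rank}(\cup_{j\in J}\mathcal G_j)=k=sr$, while subadditivity and $\text{Rank}(\mathcal G_j)\le r$ (condition~(2) of Remark~\ref{rem-locality}) give $sr=\text{Rank}(\cup_{j\in J}\mathcal G_j)\le\sum_{j\in J}\text{Rank}(\mathcal G_j)\le sr$; equality forces $\text{Rank}(\mathcal G_i)=r$. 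Thus $\mathcal C|_{S_i}$ has length $r+\delta-1$ and dimension $\text{Rank}(\mathcal G_i)=r$, its distance is $\ge\delta$ by condition~(2) of Definition~\ref{def-locality}, and the Singleton bound caps the distance of an $[r+\delta-1,r]$ code at $\delta$; so the distance is exactly $\delta$ and $\mathcal C|_{S_i}$ is an $[r+\delta-1,r,\delta]$ MDS code.

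The step I expect to carry the weight is not any single computation but the observation that $r\mid k$ collapses the lower bound of Lemma~\ref{not-ctn} part~3) and the trivial union bound to the same value $s(r+\delta-1)$; disjointness, the exact sizes, and (after the rank upgrade) the MDS property all fall out of that one equality. I would double-check the boundary case $s=2$ (i.e.\ $k=2r$), where placing two indices in a common $s$-subset is still possible and the argument runs unchanged, which is exactly why the hypothesis $r<k$ (equivalently $s\ge 2$) is essential.
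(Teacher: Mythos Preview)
Your proposal is correct and follows essentially the same approach as the paper: both pick an arbitrary pair of indices, embed them in a $\lceil k/r\rceil$-subset $J$, and use part~3) of Lemma~\ref{not-ctn} together with the trivial upper bounds $|S_i|\le r+\delta-1$ and $\text{Rank}(\mathcal G_i)\le r$ to force equality throughout, yielding disjointness, exact sizes, and rank~$r$. Your treatment of the MDS conclusion via the Singleton bound is slightly more explicit than the paper's (which simply cites Lemma~\ref{fact} and Remark~\ref{rem-loty}), but the substance is the same.
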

\begin{proof}
Since $r|k$ and $r<k$, then $k=\ell r$ for some $\ell\geq 2$. By
1) of Lemma \ref{not-ctn}, $t\geq\lceil\frac{k}{r}\rceil=\ell$.
Let $\{i_1,i_2\}\subseteq[t]$ be arbitrarily chosen. Let $J$ be an
$\ell$-subset of $[t]$ such that $\{i_1,i_2\}\subseteq J$. Then by
3) of Lemma \ref{not-ctn},
\begin{align}
\text{Rank}(\cup_{i\in J}\mathcal G_i)=k=\ell r, \label{eqn:18}
\end{align} and
\begin{align}
\left|\cup_{i\in J}\mathcal S_{i}\right|\geq k+\ell(\delta-1)=
\ell(r+\delta-1).\label{eqn:19}
\end{align}
Since $|S_{i}|\leq r+\delta-1$ and by Remark \ref{rem-locality},
$\text{Rank}(\mathcal G_i)\leq r$, then equations (\ref{eqn:18})
and (\ref{eqn:19}) imply that $\text{Rank}(\mathcal G_i)=r$,
$|S_{i}|=r+\delta-1$, and $\{S_{i}\}_{i\in J}$ are mutually
disjoint.

In particular, $\text{Rank}(\mathcal G_{i_1})=\text{Rank}(\mathcal
G_{i_2})=r$, $\mathcal G_{i_1}\cap\mathcal G_{i_2}=\emptyset$ and
$|S_{i_1}|=|S_{i_2}|=r+\delta-1$. Since $i_1$ and $i_2$ are
arbitrarily chosen, we have proved that $\text{Rank}(\mathcal
G_i)=r$, $|S_{i}|=r+\delta-1$, and $\{S_{i}\}_{i\in J}$ are
mutually disjoint. Hence, $(r+\delta-1)\mid n$. Moreover, by Lemma
\ref{fact} and Remark \ref{rem-loty}, $\mathcal C|_{S_i}$ is an
$[r+\delta-1,r,\delta]$ MDS code.
\end{proof}
\vskip 10pt

In \cite{Prakash12}, it was proved that if $\mathcal C$ is an
optimal $(r,\delta)_i$ code, then there exists a collection
$\{S_1,\cdots,S_a\}\subseteq\{S_1,\cdots,S_t\}$ which has the same
properties in Theorem \ref{stru-opt}, where $a$ is a
properly-defined value. Thus, Theorem \ref{stru-opt} shows that as
a sub-class of optimal $(r,\delta)_i$ codes, optimal
$(r,\delta)_a$ codes tend to have a simpler structure than
otherwise.

\section{Non-existence Conditions of Optimal $(r,\delta)_a$ Linear Codes}
In this section, we derive two sets of conditions under which
there exists no optimal $(r,\delta)_a$ linear codes. From the
minimum distance bound in (\ref{eqn:1}), we know that when $r=k$,
optimal $(r,\delta)_a$ linear codes are exactly MDS codes. Hence,
in this section, we focus on the case of $r<k$.

The first result is obtained directly from Theorem \ref{stru-opt}.

\vskip 10pt
\begin{thm}\label{non-exst}
If $(r+\delta-1)\nmid n$ and $r|k~$, then there exist no
optimal $(r,\delta)_a$ linear codes.
\end{thm}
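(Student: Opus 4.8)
The plan is to prove the statement by contradiction, leveraging Theorem~\ref{stru-opt} directly; indeed the claim is essentially the contrapositive of the ``in particular'' clause of that structure theorem. So I would suppose, for contradiction, that there exists an optimal $(r,\delta)_a$ linear code $\mathcal C$ whose parameters satisfy both $(r+\delta-1)\nmid n$ and $r\mid k$, and aim to derive a contradiction.

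First I would dispose of the boundary case $r=k$. As noted at the start of this section, when $r=k$ the bound (\ref{eqn:1}) collapses to the Singleton bound $d\le n-k+1$, so an optimal $(r,\delta)_a$ code is merely an MDS code; this case is set aside, and we work under the standing assumption $r<k$. Thus $\mathcal C$ is an optimal $(r,\delta)_a$ code with $r\mid k$ and $r<k$ --- precisely the hypotheses of Theorem~\ref{stru-opt}.

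Next I would invoke Theorem~\ref{stru-opt}. Its conclusion yields that $S_1,\dots,S_t$ are mutually disjoint with $|S_i|=r+\delta-1$ for every $i\in[t]$, and in particular that $(r+\delta-1)\mid n$. (Concretely, disjointness together with $\cup_{i\in[t]}S_i=[n]$ forces $n=\sum_{i\in[t]}|S_i|=t(r+\delta-1)$.) This directly contradicts the hypothesis $(r+\delta-1)\nmid n$, so no such optimal code can exist, which is what we wanted.

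I do not expect a genuine obstacle here, since all the technical work has already been carried out in Lemmas~\ref{rank-sum}--\ref{not-ctn} and distilled into Theorem~\ref{stru-opt}; the proof is a short corollary. The only point requiring care is to confirm that both hypotheses of Theorem~\ref{stru-opt} are in force --- namely $r\mid k$ (given) and $r<k$ (the standing assumption of this section) --- so that the divisibility conclusion $(r+\delta-1)\mid n$ may legitimately be applied to reach the contradiction.
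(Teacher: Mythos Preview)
Your proposal is correct and matches the paper's own proof essentially verbatim: both argue by contradiction, invoke Theorem~\ref{stru-opt} (under the section's standing assumption $r<k$) to deduce $(r+\delta-1)\mid n$, and note that this contradicts the hypothesis $(r+\delta-1)\nmid n$. Your extra remark disposing of the $r=k$ boundary case is a nice bit of care that the paper handles only implicitly via the standing assumption at the start of the section.
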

\begin{proof}
If $\mathcal C$ is an optimal $(r,\delta)_a$ linear code and
$r|k$, then by Theorem \ref{stru-opt}, $(r+\delta-1)|n$, which
contradicts the condition that $(r+\delta-1)\nmid n$. Hence, there
exist no optimal $(r,\delta)_a$ linear codes when
$(r+\delta-1)\nmid n$ and $r|k$.
\end{proof}
\vskip 10pt

When $(r+\delta-1)\nmid n$ and $r\nmid k$, we provide in the below
a set of conditions under which no optimal $(r,\delta)_a$ code
exists.

\vskip 10pt
\begin{thm}\label{non-exst-1}
Suppose $n=w(r+\delta-1)+m$ and $k=ur+v$, where $0<m<r+\delta-1$
and $0<v<r$. If $m<v+\delta-1$ and $u\geq 2(r-v)+1$, then there
exist no optimal $(r,\delta)_a$ codes.
\end{thm}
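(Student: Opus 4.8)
The plan is to assume, for contradiction, that an optimal $(r,\delta)_a$ code $\mathcal C$ exists, fix an $(r,\delta)$-cover set $\mathcal S=\{S_1,\dots,S_t\}$ and a generating matrix $G$, and attach to each block three parameters: its size $s_i=|S_i|$, its rank $\rho_i=\text{Rank}(\mathcal G_i)$, and the two deficiencies $d_i=(r+\delta-1)-s_i\ge 0$ and $e_i=r-\rho_i\ge 0$. Since $0<v<r$ we have $\lceil k/r\rceil=u+1$, so Lemma \ref{not-ctn} gives $t\ge\lceil n/(r+\delta-1)\rceil=w+1$ and tells us that every $(u+1)$-subset $J\subseteq[t]$ already satisfies $\text{Rank}(\cup_{i\in J}\mathcal G_i)=k$ and $|\cup_{i\in J}S_i|\ge k+(u+1)(\delta-1)$. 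Combining the latter with $|\cup_{i\in J}S_i|\le n$ and the hypothesis $m<v+\delta-1$ yields $(w-u)(r+\delta-1)\ge (v+\delta-1)-m>0$, hence $w\ge u+1$ and $t\ge u+2$. This forces at least two ``surplus'' blocks beyond the $u+1$ needed to span $\mathbb F_q^k$, and is the first place where $m<v+\delta-1$ is used.

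Next I would extract the block structure from a single local inequality. Writing $\text{ov}(J)=\sum_{i\in J}s_i-|\cup_{i\in J}S_i|\ge 0$ for the coordinate overlap inside $J$, and using $s_i\le r+\delta-1$ together with Lemma \ref{not-ctn}(3), a short computation gives, for every $(u+1)$-subset $J$,
\[
\text{ov}(J)+\sum_{i\in J}d_i\le r-v,\qquad \sum_{i\in J}e_i\le r-v .
\]
Because $u+1\ge 2(r-v)+2>r-v$, no $u+1$ blocks can all have $e_i\ge 1$; hence at most $u$ blocks are rank-deficient and at least $t-u$ of them satisfy $e_i=0$, i.e.\ are full $[r+\delta-1,r,\delta]$ MDS blocks exactly as in Theorem \ref{stru-opt} (here $d_i\le e_i$ forces $d_i=0$ as well). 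Packing all deficient blocks into one $(u+1)$-subset also yields the global bounds $\sum_i e_i\le r-v$ and $\sum_i d_i\le r-v$. On the other hand, letting $E=\sum_j(\mu_j-1)$ denote the total coordinate overlap, where $\mu_j$ counts the blocks containing coordinate $j$, the elementary identity $(t-w)(r+\delta-1)=m+E+\sum_i d_i$ together with $t\ge w+1$ gives $E+\sum_i d_i\ge (r+\delta-1)-m\ge r-v+1$. Thus \emph{globally} the overlap-plus-deficiency strictly exceeds $r-v$, while \emph{locally} on every $(u+1)$-subset it is at most $r-v$.

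Finally I would turn this tension into a contradiction, for which I see two routes. The direct route aggregates the inequality $\text{ov}(J)+\sum_{i\in J}d_i\le r-v$ over all $(u+1)$-subsets; a convexity estimate on the per-coordinate contribution of $\text{ov}(J)$ produces $\binom{t-2}{u-1}E+\binom{t-1}{u}\sum_i d_i\le\binom{t}{u+1}(r-v)$, which in the tight case $t=u+2$ becomes $u\,E+(u+1)\sum_i d_i\le (u+2)(r-v)$, and combined with $E+\sum_i d_i\ge r-v+1$ collapses to $u\le 2(r-v)$, directly contradicting $u\ge 2(r-v)+1$ and pinpointing the origin of the factor $2$. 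The equivalent extremal route builds, from $u$ of the full MDS blocks (which span a space of dimension $k-v$) together with at least $v$ further columns forced into a common hyperplane by the surplus blocks, a set $S\subseteq\mathcal G$ with $\text{Rank}(S)=k-1$ and $|S|\ge k+u(\delta-1)>k-1+u(\delta-1)$; by Lemma \ref{fact} this gives $d<n-k+1-u(\delta-1)$, contradicting optimality. The main obstacle is the globalization step: one must control coordinates shared among three or more blocks and, above all, rule out the large-$t$ configurations in which the surplus blocks are coordinate-disjoint but rank-overlapping, so that the local budget $r-v$ cannot absorb the forced global surplus. This is precisely where the full strength $u\ge 2(r-v)+1$, rather than merely $u>r-v$, is consumed.
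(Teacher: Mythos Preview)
Your setup is sound: the local inequality $\text{ov}(J)+\sum_{i\in J}d_i\le r-v$ for every $(u{+}1)$-subset $J$, the global identity $E+\sum_i d_i=(t-w)(r+\delta-1)-m\ge r-v+1$, and the bound $t\ge u+2$ are all correct and are exactly the tension that must be exploited. The gap is in the last step. Your averaging route gives $\binom{t-2}{u-1}E+\binom{t-1}{u}\sum_i d_i\le\binom{t}{u+1}(r-v)$, and combining this with $E+\sum_i d_i\ge r-v+1$ yields a contradiction only when $t=u+2$; for $t>u+2$ the right-hand side grows like $t(t-1)/[u(u+1)]$ while the left grows only linearly in $t$, so the inequality is slack and no contradiction results. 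Since nothing in the hypotheses forces $t=u+2$ (the minimality of the cover set only gives $t\le n$), the argument is incomplete. Your ``extremal route'' does not close the gap either: $u$ full-rank blocks need not span exactly $k-v$ dimensions (Lemma~\ref{not-ctn} only gives rank between $k-r$ and $k-v$), and the phrase ``at least $v$ further columns forced into a common hyperplane'' is not justified.

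What the paper does instead is a \emph{concentration} step rather than an averaging step. After padding every $S_i$ to a set $T_i$ of size exactly $r+\delta-1$ (so all $d_i$ vanish and the entire surplus lives in the overlap), it looks at the multiply-covered coordinates $P=\{j:|A_j|\ge 2\}$ where $A_j=\{i:j\in T_i\}$. A greedy prefix argument along $P$ locates at most $\lambda\le r-v+1$ coordinates whose combined incidence sets $A_1,\dots,A_{\lambda-1},B_\lambda$ have total size exactly $\lambda+(r-v+1)$; since each $|A_j|\ge 2$, the union $B$ of these incidence sets has $|B|\le 2(r-v+1)\le u+1$, the last inequality being precisely where $u\ge 2(r-v)+1$ enters. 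Any $(u{+}1)$-subset $J\supseteq B$ then carries overlap at least $r-v+1$ on those $\lambda$ coordinates alone, so $|\cup_{i\in J}T_i|\le (u+1)(r+\delta-1)-(r-v+1)=k-1+(u+1)(\delta-1)$, contradicting Lemma~\ref{not-ctn}(3). In short, the missing idea is to localize the global excess $E\ge r-v+1$ into a single $(u{+}1)$-subset via a pigeonhole on coordinates, not to spread it out by averaging.
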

\begin{proof}
We prove this theorem by contradiction.

Suppose $\mathcal C$ is an optimal $(r,\delta)_a$ code over the
field $\mathbb F_q$ and $\mathcal S=\{S_1,\cdots,S_t\}$ is an
$(r,\delta)$-cover set of $\mathcal C$. Then by claim 1) of Lemma
\ref{not-ctn}, we have
\begin{align}
t\geq\left\lceil\frac{n}{r+\delta-1}\right\rceil=w+1.\label{eqn:20}
\end{align}
Moreover, by 3) of Lemma \ref{not-ctn}, for any
$\lceil\frac{k}{r}\rceil$-subset $J$ of $[t]$, $$|\cup_{i\in
J}S_i|\geq k+\left\lceil\frac{k}{r}\right\rceil(\delta-1).$$

For each $i\in[t]$, if $|S_i|<r+\delta-1$, let $T_i\subseteq[n]$
be such that $S_i\subseteq T_i$ and $|T_i|=r+\delta-1$; If
$|S_i|=r+\delta-1$, let $T_i=S_i$. Then clearly,
$$\cup_{i\in[t]}T_i=\cup_{i\in[t]}S_i=[n]$$ and for any
$\lceil\frac{k}{r}\rceil$-subset $J$ of $[t]$,
\begin{align} |\cup_{i\in
J}T_i|\geq k+\left\lceil\frac{k}{r}\right\rceil(\delta-1).\label{eqn:21}
\end{align}


Let $M=(m_{i,j})_{t\times n}$ be a $t\times n$ matrix such that
$m_{i,j}=1$ if $j\in T_i$, and $m_{i,j}=0$ otherwise. For each
$j\in[n]$, let
$$A_j=\{i\in[t];m_{i,j}=1\}.$$ Then $|A_j|$ is the number of $T_i
~(i\in[t])$ satisfying $j\in T_i$, and this number equals  the
number of $1$s in the $j$th column of $M$. Since
$\cup_{i\in[t]}T_i=[n]$, then $|A_j|>0, \forall j\in[n]$. On the
other hand, by the construction of $M$, for each $i\in[t]$,
$T_i=\{j\in[n];m_{i,j}=1\}.$ Thus, the number of the $1$s in each
row of $M$ is $r+\delta-1$. It then follows that the total number
of the $1$s in $M$ is
\begin{align}
\sum_{j=1}^n|A_j|=\sum_{i=1}^t|T_i|=t(r+\delta-1).\label{eqn:22}
\end{align}
Combining (\ref{eqn:20}) and (\ref{eqn:22}), we have
\begin{align} \nonumber
\sum_{j=1}^n|A_j|\geq&(w+1)(r+\delta-1)\\
=&n+(r+\delta-1-m).\label{eqn:23}
\end{align}
Since $m<v+\delta-1$, then $$r+\delta-1-m>r-v.$$ Hence from (\ref{eqn:23}), we
have
\begin{align}
\sum_{j=1}^n|A_j|\geq n+(r-v+1).\label{eqn:24}
\end{align}

Let $P=\{j\in[n];|A_j|>1\}$. From (\ref{eqn:24}), $P\neq\emptyset$ and
$$\sum_{j\in P}|A_j|\geq|P|+(r-v+1).$$
Without loss of generality, assume $P=\{1,\cdots,\ell\}$. Since
$|A_j|>1, \forall j\in P$, we can find a number
$\lambda\in\{1,\cdots,\ell\}$ such that
$\sum_{j=1}^{\lambda-1}|A_{j}|<\lambda+(r-v)$ and
$\sum_{j=1}^{\lambda}|A_{j}|\geq\lambda+(r-v+1)$. This means that
we can find a subset $B_{\lambda}\subseteq A_{\lambda}$ such that
$|B_{\lambda}|>1$ and
\begin{align}
\sum_{j=1}^{\lambda-1}|A_{j}|+|B_{\lambda}|=\lambda+r-v+1.\label{eqn:25}
\end{align}
Also note that
\begin{align}
\lambda\leq r-v+1,\label{eqn:26}
\end{align} because otherwise,
$\sum_{j=1}^{\lambda-1}|A_{j}|+|B_{\lambda}|\geq
2\lambda>\lambda+r-v+1$, which contradicts (\ref{eqn:25}).

Let $B=(\cup_{j=1}^{\lambda-1}A_{j})\cup B_\lambda$. Then from
(\ref{eqn:25}),
$$|B|=|(\cup_{j=1}^{\lambda-1}A_{j})\cup B_\lambda|
\leq\sum_{j=1}^{\lambda-1}|A_{i}|+|B_{\lambda}|\leq 2(r-v+1).$$
Since $u\geq 2(r-v)+1$, then $2(r-v+1)\leq u+1$, we get
$$|B|\leq u+1=\left\lceil\frac{k}{r}\right\rceil.$$
Let $J$ be a $\lceil\frac{k}{r}\rceil$-subset of $[t]$ such that
$B\subseteq J$. By the construction of $M$ and $B$, for each
$j\in\{1,\cdots,\lambda-1\}$, there are at least $|A_j|$ subsets
in $\{T_i;i\in B\}$ containing $j$, and there are at least
$|B_\lambda|$ subsets in $\{T_i;i\in B\}$ containing $\lambda$. Hence,
\begin{align}
|\cup_{i\in
J}T_i|\leq|J|(r+\delta-1)-(\sum_{j=1}^{\lambda-1}|A_{j}|+
|B_{\lambda}|-\lambda).\label{eqn:27}
\end{align}
Combining (\ref{eqn:25}) and (\ref{eqn:27}), we
have
\begin{eqnarray*} |\cup_{i\in
J}T_i|&\leq&\lceil\frac{k}{r}\rceil(r+\delta-1)-(r-v+1)\\
&=&ur+v-1+\lceil\frac{k}{r}\rceil(\delta-1)\\
&=&k-1+\lceil\frac{k}{r}\rceil(\delta-1).
\end{eqnarray*}
which contradicts (\ref{eqn:21}).

Thus, we can conclude that there exist no optimal $(r,\delta)_a$
linear codes when $m<v+\delta-1$ and $u\geq 2(r-v)+1$.
\end{proof}
\vskip 10pt

{\it Example:} We now provide an example to help illustrate the
method used in the proof of Theorem \ref{non-exst-1}. Let
$n=13,r=\delta=2$ and $k=7$. Suppose $T_1=\{1,2,3\}$,
$T_2=\{4,5,6\}$, $T_3=\{7,8,9\}, T_4=\{10,11,12\}, T_5=\{1,5,13\}$
and $T_6=\{5,8,13\}$. Following the notations in the proof of
Theorem \ref{non-exst-1}, we have
\begin{center}
\includegraphics[height=3cm]{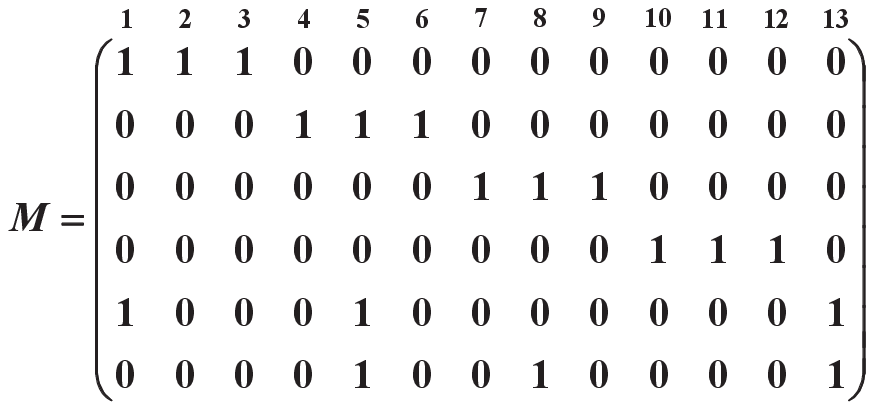}
\end{center}
Therefore, $A_1=\{1,5\}, A_5=\{2,5,6\}, A_8=\{3,6\},
A_{13}=\{5,6\}$, and $P=\{1,5,8,13\}$. Note that
$|A_1|+|A_5|=5>2+(r-v+1)$. Let $B_2=\{2,5\}\subseteq A_5$ and
$B=A_1\cup B_2=\{1,2,5\}$; then $|B|<4=\lceil\frac{k}{r}\rceil$.
Let $J=\{1,2,3,5\}\supseteq B$, then $\cup_{i\in
J}T_i=\{1,2,3,4,5,6,7,8,9,13\}.$  Hence, $|\cup_{i\in
J}T_i|=10<11=k+\lceil\frac{k}{r}\rceil(\delta-1)$. (See the
illustration of $M$ below.)
\begin{center}
\includegraphics[height=3cm]{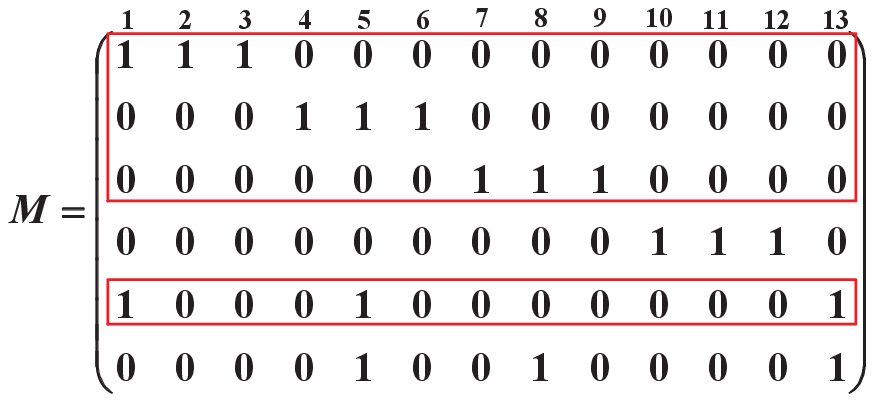}
\end{center}
More generally, in this example, for any $t\geq 5$ and
$\{T_1,\cdots,T_t\}$ such that $|T_i|=r+\delta-1=3$ and
$\cup_{i=1}^tT_i=[n]=\{1,\cdots,13\}$, we can always find a
$J\subseteq[t]$ such that $|\cup_{i\in
J}T_i|<11=k+\lceil\frac{k}{r}\rceil(\delta-1)$.

In general,  since $0<v<r$, then $r-v\leq r-1$. If $k>2r^2+r$,
then we have $u\geq 2(r-1)+1\geq 2(r-v)+1$. Hence, when
$0<n~\text{mod~}(r+\delta-1)<(k~\text{mod}~r)+\delta-1$ and
$k>2r^2+r$, then by Theorem \ref{non-exst-1}, there exist no
optimal $(r,\delta)_a$ codes.

\section{Construction of Optimal $(r,\delta)_a$ Codes: Algorithm 1}

In this section, we propose a deterministic algorithm for
constructing optimal $(r,\delta)_a$ linear codes over the field of
size $q\geq\binom{n}{k-1}$, when $(r+\delta-1)|n$ or $m\geq
v+\delta-1$, where $n=w(r+\delta-1)+m$ and $k=ur+v$ satisfying
$0<v<r$ and $0<m<r+\delta-1$. Recall that when $(r+\delta-1)|n$,
it was proved in \cite{Prakash12} that optimal $(r,\delta)_a$
linear codes exist over the field of size $q>kn^k$. Note that our
method requires a much smaller field than what's shown in
\cite{Prakash12}, and hence it also has a lower complexity for
implementation.

To present our method, we will use the following definitions and
notations, most of which follow from \cite{Gopalan12}.

\vskip 10pt
\begin{defn}\label{core}
Let $\mathcal S=\{S_1,\cdots, S_t\}$ be a partition of $[n]$ and
$\delta\leq|S_i|\leq r+\delta-1, \forall i\in[t]$. A subset
$S\subseteq[n]$ is called an $(\mathcal S,r)$-\emph{core} if
$|S\cap S_i|\leq|S_i|-\delta+1, \forall i\in[t]$. If $S$ is an
$(\mathcal S,r)$-core and $|S|=k$, then $S$ is called an
$(\mathcal S,r,k)$-\emph{core}.
\end{defn}
\vskip 10pt

Clearly, if $S\subseteq[n]$ is an $(\mathcal S,r)$-core and
$S'\subseteq S$, then $S'$ is also an $(\mathcal S,r)$-core. In
particular, if $S\subseteq[n]$ is an $(\mathcal S,r)$-core and
$S'$ is a $k$-subset of $S$, then $S'$ is an $(\mathcal
S,r,k)$-core.

Before presenting our construction method, we first give a lemma,
which will take an important role in our discussion.

\vskip 10pt
\begin{lem}\label{sub-space}
Let $X_1,\cdots,X_\ell$ and $X$ be $\ell+1$ subspaces of $\mathbb
F_q^k$ and $X\nsubseteq X_i, \forall i\in[\ell]$. If $q\geq\ell$,
then $X\nsubseteq\cup_{i=1}^\ell X_i$.
\end{lem}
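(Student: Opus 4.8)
The plan is to argue by contradiction, reducing the claim to the well-known fact that a vector space over $\mathbb{F}_q$ cannot be written as a union of too few proper subspaces. Suppose, for contradiction, that $X\subseteq\cup_{i=1}^\ell X_i$. Passing to a minimal subcover, I may assume the cover is irredundant, say $X\subseteq\cup_{i\in I}X_i$ with $I\subseteq[\ell]$ minimal. Here $|I|\geq 2$, since $|I|=1$ would give $X\subseteq X_i$, contradicting the hypothesis $X\nsubseteq X_i$. Relabel so that $1\in I$.

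Next I would produce a suitable test line inside $X$. By irredundancy there is a vector $x\in X$ lying in $X_1$ but in none of the other $X_i$, $i\in I\setminus\{1\}$ (otherwise $X_1$ could be dropped from the cover). Since $X\nsubseteq X_1$, there is also a vector $y\in X$ with $y\notin X_1$. I then consider the $q$ vectors $p_\beta=y+\beta x$ for $\beta\in\mathbb{F}_q$; these all lie in $X$ because $X$ is a subspace, and they are pairwise distinct because $x\neq 0$.

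The crux is a counting argument on where these $q$ points can land. First, no $p_\beta$ lies in $X_1$: if $y+\beta x\in X_1$, then since $x\in X_1$ we would get $y\in X_1$, a contradiction. Hence every $p_\beta$ is covered by $\cup_{i\in I\setminus\{1\}}X_i$. Second, each subspace $X_i$ with $i\in I\setminus\{1\}$ contains at most one of the $p_\beta$: if $p_\beta,p_{\beta'}\in X_i$ with $\beta\neq\beta'$, then $(\beta-\beta')x=p_\beta-p_{\beta'}\in X_i$ forces $x\in X_i$, contradicting the choice of $x$. Therefore the $q$ distinct points are distributed among the $|I|-1$ subspaces $X_i$, $i\in I\setminus\{1\}$, each receiving at most one, giving $q\leq|I|-1\leq\ell-1<\ell$, which contradicts the hypothesis $q\geq\ell$.

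The point requiring the most care is the reduction to an irredundant cover and the extraction of the vector $x$ that is private to $X_1$; everything else is elementary. An alternative route avoids minimality altogether by a direct cardinality estimate: each $Y_i:=X\cap X_i$ is a proper subspace of $X$ (as $X\nsubseteq X_i$), so $|Y_i|\leq|X|/q$, and using that all the $Y_i$ share the zero vector one gets $|\cup_i Y_i|\leq 1+\ell(|X|/q-1)$, which is strictly less than $|X|$ once $q\geq\ell$. I would present the line argument as the primary proof, since it uses the hypothesis $q\geq\ell$ most transparently.
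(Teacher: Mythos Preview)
Your primary argument is correct and uses the same core idea as the paper's proof: pick a vector $x$ that lies in exactly one of the covering subspaces, pick $y\in X$ outside that subspace, and observe that on the affine line $\{y+\beta x:\beta\in\mathbb F_q\}$ each remaining subspace can absorb at most one point. The paper packages this as an induction on $\ell$ (the inductive hypothesis supplies the private vector $x$), whereas you package it as a contradiction via a minimal subcover (irredundancy supplies $x$); the two framings are logically interchangeable and neither gains anything over the other.

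Your alternative cardinality estimate, $|\cup_i(X\cap X_i)|\leq 1+\ell(|X|/q-1)<|X|$ when $q\geq\ell$, is a genuinely different and slightly slicker route that the paper does not take; it avoids both induction and the minimal-cover reduction, at the cost of using that $\mathbb F_q$ is finite in a more essential way.
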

\begin{proof}
We prove this lemma by induction.

Clearly, the claim is true when $\ell=1$.

Now, we suppose that the claim is true for $\ell-1$, i.e.,
$$X\nsubseteq\cup_{i=1}^{\ell-1} X_i.$$ Then there exists an
$x\in X$ such that $x\notin\cup_{i=1}^{\ell-1} X_i$. If $x\notin
X_\ell$, then $x\notin\cup_{i=1}^{\ell} X_i$ and
$X\nsubseteq\cup_{i=1}^\ell X_i$. So we assume $x\in X_\ell$.

Since $X\nsubseteq X_\ell$, there exists a $y\in X$ such that
$y\notin X_\ell$. Then for any $\{a,a'\}\subseteq\mathbb F_q$ and
$i\in\{1,\cdots,\ell-1\}$,
$$\{ax+y,a'x+y\}\nsubseteq X_i.$$ $($Otherwise,
$(a-a')x=(ax+y)-(a'x+y)\in X_i$, which contradicts to the
assumption that $x\notin\cup_{i=1}^{\ell-1} X_i.)$

Since $q\geq\ell$, we can pick a subset
$\{a_1,\cdots,a_\ell\}\subseteq\mathbb F_q$. Then $\{a_1x+y,
\cdots, a_\ell x+y\}\nsubseteq\cup_{i=1}^{\ell-1} X_i.
~($Otherwise, by the Pigeonhole principle, there is a subset
$\{a_{i_1},a_{i_2}\}\subseteq\{a_1,\cdots,a_\ell\}$ and a
$j\in\{1,\cdots,\ell-1\}$ such that
$\{a_{i_1}x+y,a_{i_2}x+y\}\subseteq X_j$, which contradicts to the
proven result that for any $\{a,a'\}\subseteq\mathbb F_q$ and
$i\in\{1,\cdots,\ell-1\}$, $\{ax+y,a'x+y\}\nsubseteq X_i.)$
Without loss of generality, assume
$a_1x+y\notin\cup_{i=1}^{\ell-1} X_i$. Note that $x\in X_\ell$ and
$y\notin X_\ell$, then $a_1x+y\notin X_\ell$. Hence,
$a_1x+y\notin\cup_{i=1}^{\ell} X_i$. On the other hand, since
$x,y\in X$, then $a_1x+y\in X$. So $X\nsubseteq\cup_{i=1}^\ell
X_i$, which completes the proof.
\end{proof}

\vskip 10pt

We present our construction method in the following theorem.

\vskip 10pt
\begin{thm}\label{opt-suf-1}
Let $\mathcal S=\{S_1,\cdots,S_t\}$ be a partition of $[n]$ and
$\delta\leq|S_i|\leq r+\delta-1, \forall i\in[t]$. Suppose
$t\geq\lceil\frac{k}{r}\rceil$ and for any
$\lceil\frac{k}{r}\rceil$-subset $J$ of $[t]$, $|\cup_{i\in
J}S_i|\geq k+\lceil\frac{k}{r}\rceil(\delta-1)$. If
$q\geq\binom{n}{k-1}$, then there exists an optimal $(r,\delta)_a$
linear code over $\mathbb F_q$.
\end{thm}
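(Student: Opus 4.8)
The plan is to build a generating matrix $G=(G_1,\dots,G_n)$ column by column so that (i) within each block $S_i$ the columns $\{G_\ell:\ell\in S_i\}$ form an $[\,|S_i|,\,d_i,\,\delta\,]$ MDS code, where $d_i:=|S_i|-\delta+1\le r$, and (ii) every $(\mathcal S,r,k)$-core is linearly independent. Property (i) makes each punctured code $\mathcal C|_{S_i}$ have minimum distance $\delta$, so by Definition \ref{def-locality} and Remark \ref{rem-loty} the resulting code is $(r,\delta)_a$ with $\mathcal S$ as a cover set; property (ii) forces the optimal distance. In fact it suffices to enforce (ii) alone: since every $d_i$-subset of a block is itself a core and extends to a size-$k$ core (as some $\lceil k/r\rceil$ blocks already give $\sum d_i\ge k$), independence of all size-$k$ cores implies (i) as well.

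First I would reduce optimality to (ii). Because each block is MDS, for any $T\subseteq[n]$ the rank of $\{G_\ell:\ell\in T\}$ equals that of a maximal core $T'\subseteq T$ obtained by keeping at most $d_i$ indices from each $S_i$; here $|T\cap S_i|-|T'\cap S_i|\le\delta-1$, and every block missed by $T'$ is also missed by $T$. Assuming every core of size at most $k$ is independent (so its rank equals its size), any $T$ with $\mathrm{Rank}(T)\le k-1$ has $|T'|\le k-1$. Let $f$ be the number of blocks that $T'$ fills completely; the $d_i$'s of those blocks sum to at most $|T'|\le k-1<k$, so by the hypothesis that any $\lceil k/r\rceil$ blocks satisfy $\sum_i d_i\ge k$ we must have $f\le\lceil k/r\rceil-1$. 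Hence $|T|\le|T'|+f(\delta-1)\le k-1+(\lceil k/r\rceil-1)(\delta-1)$, and Lemma \ref{fact} gives $d\ge n-k+1-(\lceil k/r\rceil-1)(\delta-1)$; combined with (\ref{eqn:1}) this is equality.

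Then I would construct $G$ greedily, placing the $d_i$ ``primary'' columns of each block before its remaining $\delta-1$ columns. When choosing $G_j$ with $j\in S_i$, the forbidden set is $\bigcup_{S'}\langle\{G_\ell:\ell\in S'\}\rangle$ over all already-placed cores $S'$ with $|S'|\le k-1$ and $S'\cup\{j\}$ again a core (these already subsume the local MDS constraints, since the earlier columns of $S_i$ form such a core). If fewer than $d_i$ columns of $S_i$ are placed, I pick $G_j$ from the ambient space $X=\mathbb F_q^k$; otherwise $V_i=\langle\text{primary columns of }S_i\rangle$ is fixed and I pick $G_j$ from $X=V_i$. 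In either case each forbidden $\langle S'\rangle$ is a proper subspace of $X$: for $X=\mathbb F_q^k$ this holds as $\dim\langle S'\rangle=|S'|\le k-1$; for $X=V_i$, if $V_i\subseteq\langle S'\rangle$ then replacing the at most $d_i-1$ block-$i$ indices of $S'$ by the $d_i$ primary indices of $S_i$ yields a strictly larger core whose span still lies in $\langle S'\rangle$, contradicting the independence of cores of size at most $k$ (or of a size-$k$ sub-core). By Lemma \ref{sub-space}, since it suffices to avoid the spans of the inclusion-maximal cores of size at most $k-1$ among the placed columns, whose number is at most $\binom{n}{k-1}$, and since $q\ge\binom{n}{k-1}$, a valid $G_j$ exists. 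The running invariant that every core of size at most $k$ is independent is preserved, because avoiding $\langle S'\rangle$ makes $G_j$ independent of $\{G_\ell:\ell\in S'\}$ for each such $S'$.

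The main obstacle I anticipate is the properness of the forbidden subspaces inside $V_i$ at the secondary steps: the delicate point is the core-replacement argument showing $V_i\not\subseteq\langle S'\rangle$, which is exactly where the MDS structure of the blocks and the running independence invariant must be combined. A secondary point is bounding the count of distinct forbidden subspaces by $\binom{n}{k-1}$, which is what pins down the field-size requirement and must be done by restricting attention to cores (not arbitrary $(k-1)$-subsets, which would over-constrain the blocks) and then to their inclusion-maximal representatives.
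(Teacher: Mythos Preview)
Your proposal is correct and takes essentially the same approach as the paper: greedy column-by-column construction maintaining independence of all $(\mathcal S,r,k)$-cores, with Lemma~\ref{sub-space} applied at each step and the ``swap the block-$i$ indices of $S'$ for the primaries of $S_i$'' argument to show $V_i\not\subseteq\langle S'\rangle$ (this is exactly the paper's Claim~3, phrased there via the existence of an $\eta\in(S_i\cap\Omega)\setminus S_0$ making $S_0\cup\{\eta\}$ a $k$-core); your optimality reduction is the contrapositive of the paper's Claim~2. The one point to firm up is the $\binom{n}{k-1}$ count on forbidden subspaces: as written, it relies on every inclusion-maximal forbidden $S'$ having size exactly $k-1$, which is guaranteed only once all primary columns have been placed (so that any shorter $S'$ can be extended inside $\Omega_0$); hence your placement order should put all primary columns before any secondary ones, which is precisely what the paper enforces by starting from an $[L,k]$ MDS code on $\Omega_0$ and only then running the extension loop.
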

\begin{proof}
For each $i\in[t]$, let $U_i$ be an $(|S_i|-\delta+1)$-subset of
$S_i$. Let $\Omega_0=\cup_{i\in[t]}U_i$ and $L=|\Omega_0|$. Let
$J$ be a $\lceil\frac{k}{r}\rceil$-subset of $[t]$. Since
$\cup_{i\in J}U_i\subseteq\Omega_0$, from the assumptions of this
theorem,
$$L=|\Omega_0|\geq|\cup_{i\in J}U_i|=|\cup_{i\in
J}S_i|-\lceil\frac{k}{r}\rceil(\delta-1)\geq k.$$

The construction of an optimal $(r,\delta)_a$ code consists of the
following two steps:

\emph{Step 1}: Construct an $[L,k]$ MDS code $\mathcal C_0$ over
$\mathbb F_q$. Since $q\geq\binom{n}{k-1}\geq n>L$, such an MDS
code exists over $\mathbb F_q$. Let $G'$ be a generating matrix of
$\mathcal C_0$. We index the columns of $G'$ by $\Omega_0$, i.e.,
$G'=(G_\ell)_{\ell\in\Omega_0}$, where $G_\ell$ is a column of
$G'$ for each $\ell\in\Omega_0$.

\emph{Step 2}: Extend $\mathcal C_0$ to an optimal $(r,\delta)_a$
code $\mathcal C$ over $\mathbb F_q$. This can be achieved by the
following algorithm.

\vspace{0.12in} \noindent \textbf{Algorithm 1:}

\noindent 1. ~ Let $\Omega=\Omega_0$.

\noindent 2. ~ $i$ runs from $1$ to $t$.

\noindent 3. ~ ~ While $S_{i}\backslash\Omega\neq\emptyset$:

\noindent 4. ~ ~ ~ ~Pick a $\lambda\in S_{i}\backslash\Omega$ and
let $G_\lambda\in\langle\{G_\ell;~\ell\in S_i\cap\Omega\}\rangle$

~ ~ ~ ~be such that for any $(\mathcal S,r,k)$-core
$S~\subseteq\Omega~\cup~\{\lambda\}$,

~ ~ ~ ~$\{G_\ell; ~\ell\in S\}$ is linearly independent.

\noindent 5. ~ ~ ~ ~$\Omega=\Omega\cup\{\lambda\}$.

\noindent 6. ~ Let $\mathcal C$ be the linear code generated by
the matrix $~G=$

~ $(G_1,\cdots,G_n)$.
\vskip 10pt

To complete the proof of Theorem \ref{opt-suf-1}, we need to prove
three claims: In Claim~1 and Claim~2 below we show that the code
$\mathcal{C}$ output by Algorithm~1 is indeed an optimal
$(r,\delta)_a$ linear code over $\mathbb F_q$; In Claim~3, we
prove that the vector $G_\lambda$ described in Line 4 of
Algorithm~1 can always be found, hence the algorithm does
terminate successfully. \\

\noindent\textbf{Claim 1:} The code $\mathcal{C}$ output by
Algorithm~1 is an $(r,\delta)_a$ linear code over $\mathbb F_q$.

By Definition \ref{def-locality} and Remark \ref{rem-loty}, we aim
to show that for every $i \in [t]$ and for every subset $I \subset
S_i$ with $|I|=|S_i|-\delta+1$, it holds that
\begin{equation}
\label{eq:h-1} \text{Rank}(\{G_\ell\}_{\ell\in
I})=\text{Rank}(\{G_\ell\}_{\ell\in S_i}).
\end{equation}
Since in Line 4 of Algorithm 1, we choose
$G_\lambda\in\langle\{G_\ell;\ell\in S_i\cap\Omega\}\rangle$, we
have
$$\text{Rank}(\{G_\ell\}_{\ell\in(S_i\cap\Omega)\cup\{\lambda\}})
=\text{Rank}(\{G_\ell\}_{\ell\in S_i\cap\Omega}).$$ By induction,
\begin{equation}
\label{eq:h-2}
\begin{split}
\text{Rank}(\{G_\ell\}_{\ell\in
S_i})&= \text{Rank}(\{G_\ell\}_{\ell\in S_i\cap\Omega_0})\\
&= \text{Rank}(\{G_\ell\}_{\ell\in U_i})\\
&= |S_i|-\delta+1.
\end{split}
\end{equation}
Suppose $i\in[t]$ and $I\subseteq S_i$ such that
$|I|=|S_i|-\delta+1$. Then $|I|=|S_i|-\delta+1\leq r\leq k$. Since
$t\geq\lceil\frac{k}{r}\rceil$, we can find a
$\lceil\frac{k}{r}\rceil$-subset $J'$ of $[t]$ such that $i\in
J'$. For each $j\in J'$, let $W_j$ be an $(|S_j|-\delta+1)$-subset
of $S_j$ such that $W_i=I$. Clearly, $\cup_{j\in J'}W_j$ is an
$(\mathcal S,r)$-core. From the assumption of this lemma,
$$|\cup_{j\in J'}S_j|\geq k+\lceil\frac{k}{r}\rceil(\delta-1).$$ Hence
$$|\cup_{j\in J'}W_j|=|\cup_{j\in J'}S_j|-
\lceil\frac{k}{r}\rceil(\delta-1)\geq k.$$ Let $S$ be a $k$-subset
of $\cup_{j\in J'}W_j$ such that $I\subseteq S$, then $S$ is an
$(\mathcal S,r,k)$-core. Therefore, $\{G_\ell;\ell\in S\}$ is
linearly independent, which in turn implies that $\{G_\ell;\ell\in
I\}$ is also linearly independent. Therefore,
\begin{equation}
\label{eq:h-3} \text{Rank}(\{G_\ell\}_{\ell\in I}) = |I| =
|S_i|-\delta+1.
\end{equation}
Combining (\ref{eq:h-2}) and (\ref{eq:h-3}) we obtain (\ref{eq:h-1}). \\

\noindent\textbf{Claim 2:} The code $\mathcal{C}$ output by
Algorithm~1 has minimum distance achieving the upper bound
(\ref{eqn:1}), and hence is an optimal $(r,\delta)_a$ linear code.

According to Lemma~\ref{fact} and (\ref{eqn:1}), it suffices to
prove that for any subset $T\subseteq[n]$ of size
$|T|=k+(\lceil\frac{k}{r}\rceil-1)(\delta-1)$,
\[
\text{Rank}(\{G_\ell\}_{\ell\in T}) = k.
\]
Let
$$J=\{j\in[t];|T\cap S_j|\geq|S_j|-\delta+1\}.$$ For each
$j\in J$, let $W_j$ be an $(|S_j|-\delta+1)$-subset of $T\cap
S_j$; For each $j\in[t]\backslash J$, let $W_j=T\cap S_j$. Then
$\cup_{j\in[t]}W_j$ is an $(\mathcal S,r)$-core. We consider the
following two cases:

Case 1: $|J|\geq\lceil\frac{k}{r}\rceil$. Without loss of
generality, assume that $|J|=\lceil\frac{k}{r}\rceil$\footnote{If
$|J|>\lceil\frac{k}{r}\rceil$, then pick a
$\lceil\frac{k}{r}\rceil$-subset $J_0$ of $J$, and replace $J$ by
$J_0$ in our discussion.}. Since $|\cup_{j\in J}S_j|\geq
k+\lceil\frac{k}{r}\rceil(\delta-1)$, then
$$|\cup_{j\in[t]}W_j|\geq|\cup_{j\in J}W_j|\geq k.$$

Case 2: $|J|\leq\lceil\frac{k}{r}\rceil-1$. In that case,
 $$|\cup_{j\in
[t]}W_j|\geq |T|-|J|(\delta-1)\geq
|T|-(\lceil\frac{k}{r}\rceil-1)(\delta-1)\geq k.$$

In both cases, $|\cup_{j\in [t]}W_j|\geq k$. Let $S$ be a
$k$-subset of $\cup_{j\in J}W_j$, then $S$ is an $(\mathcal
S,r,k)$-core. Therefore, $\{G_\ell;\ell\in S\}$ are linearly
independent and
$$\text{Rank}(\{G_\ell\}_{\ell\in
T})=\text{Rank}(\{G_\ell\}_{\ell\in S})=k.$$ From equation
(\ref{eqn:1}) and Lemma \ref{fact}, we get
$$d=n-k+1-(\lceil\frac{k}{r}\rceil-1)(\delta-1),$$ where $d$ is
the minimum distance of $\mathcal C$. Thus, $\mathcal C$ is an
optimal $(r,\delta)_a$ code.\\

\noindent\textbf{Claim 3:} The vector $G_\lambda$ in Line 4 of
Algorithm 1 can always be found.

The proof of this claim is based on a classical technique in
network coding $($e.g., \cite{Li03,Jaggi05}$)$. Since
$G'=(G_\ell)_{\ell\in\Omega_0}$ is a generating matrix of the MDS
code $\mathcal C_0$, then for any $(\mathcal S,r,k)$-core
$S\subseteq\Omega_0$, $\{G_\ell;\ell\in S\}$ is linearly
independent. By induction, we can assume that for any $(\mathcal
S,r,k)$-core $S\subseteq\Omega$, $\{G_\ell;\ell\in S\}$ are
linearly independent.

Let $\Lambda$ be the set of all $S_0\subseteq\Omega$ such that
$S_0\cup\{\lambda\}$ is an $(\mathcal S,r,k)$-core. By Definition
\ref{core}, for any $S_0\in\Lambda$, $$|S_0|=k-1,$$
$$|S_0\cap S_j|\leq|S_j|-\delta+1, \ \forall j\in[t]\backslash\{i\},$$ and
$$|S_0\cap S_i|\leq|S_i|-\delta.$$ Note that
$$U_i\subseteq S_i\cap\Omega_0\subseteq S_i\cap\Omega.$$
Hence
$$|S_i\cap\Omega|\geq|U_i|=|S_i|-\delta+1.$$ Thus, there is an
$\eta\in(S_i\cap\Omega)\backslash S_0$. Since $S_1,\cdots,S_{t}$
are mutually disjoint, $\eta\notin S_j, \forall
j\in[t]\backslash\{i\}$. Therefore,
$$|(S_0\cup\{\eta\})\cap S_j|\leq|S_j|-\delta+1, j=1,\cdots, t.$$ Then
$S_0\cup\{\eta\}\subseteq\Omega$ is an $(\mathcal S,r,k)$-core. By
assumption, $\{G_\ell\}_{\ell\in S_0\cup\{\eta\}}$ is linearly
independent. Hence
$$G_\eta\notin\langle\{G_\ell\}_{\ell\in S_0}\rangle, $$ and
$$\langle\{G_\ell\}_{\ell\in
S_i\cap\Omega}\rangle\nsubseteq\langle\{G_\ell\}_{\ell\in
S_0}\rangle.$$ Since $q\geq\binom{n}{k-1}\geq |\Lambda|$, by Lemma
\ref{sub-space},
$$\langle\{G_\ell\}_{\ell\in
S_i\cap\Omega}\rangle\nsubseteq(\cup_{S_0\in\Lambda}\langle\{G_\ell\}_{\ell\in
S_0}\rangle).$$ Let $G_\lambda$ be a vector in
$\langle\{G_\ell\}_{\ell\in
S_i\cap\Omega}\rangle\backslash(\cup_{S_0\in\Lambda}\langle\{G_\ell\}_{\ell\in
S_0}\rangle)$. Then for any $S_0\in\Lambda$, $\{G_\ell\}_{\ell\in
S_0\cup\{\lambda\}}$ are linearly independent.

Suppose $S\subseteq\Omega\cup\{\lambda\}$ is an $(\mathcal
S,r,k)$-core. If $\lambda\notin S$, then $S\subseteq\Omega$ and by
assumption, $\{G_\ell;\ell\in S\}$ is linearly independent. If
$\lambda\in S$, then $S_0=S\backslash\{\lambda\}\in\Lambda$ and by
the selection of $G_\lambda$, $\{G_\ell;\ell\in S\}$ is linearly
independent. Hence we always have that $\{G_\ell;\ell\in S\}$ is
linearly independent. Thus, the vector $G_\lambda$ satisfies the
requirement of Algorithm 1.
\end{proof}
\vskip 10pt

From the proof of Theorem \ref{opt-suf-1}, we can see that
$\mathcal S=\{S_1,\cdots,S_t\}$ is in fact an $(r,\delta)$-cover
set of the code $\mathcal C$, where $\mathcal C$ is the output of
Algorithm 1. The following example demonstrates how does Algorithm
1 work.

{\it Example:} We now construct an optimal $(r,\delta)_a$ linear
code with $r=\delta=2,k=3$ and $n=6$. Let
$S_1=\{1,2,3\},S_2=\{4,5,6\}$ and $\mathcal S=\{S_1,S_2\}$. Let
$U_1=\{1,2\}, U_2=\{4,5\}$ and $\Omega_0=U_1\cup U_2=\{1,2,4,5\}$.
Our construct involves the following two steps.

Step 1: Construct a $[4,3]$ MDS code, where $4=|\Omega_0|$. Let
$G'=(G_1,G_2,G_4,G_5)$ be a generating matrix of such code.

Step 2: Extend $G'=(G_1,G_2,G_4,G_5)$ to a matrix
$G=(G_1,G_2,G_3,G_4,G_5,G_6)$ such that $G$ is a generating matrix
of an optimal $(2,2)_a$ linear code.

It remains to determine $G_3$ and $G_6$ via two iterations.
\begin{enumerate}
    \item $i = 1$: $\Omega = \{1,2,4,5\}$ and $S_1 \setminus \Omega = \{3\}$.
We can verify that $\{1,4,3\},\{1,5,3\}$, $\{2,4,3\}$, $\{2,5,3\}$
and $\{4,5,3\}$ are all subsets of $\{1,2,3,4,5\}$ which is an
$(\mathcal S,r,k)$-core and contains the index $3$. Let
$\Lambda=\{\{1,4\},\{1,5\}$, $\{2,4\}$, $\{2,5\},\{4,5\}\}$. Since
$G'=(G_1,G_2,G_4,G_5)$ generates an MDS code, then $G_1,G_2$ and
$G_4$ are linearly independent. So $\langle
G_1,G_2\rangle\nsubseteq\langle G_1,G_4\rangle$. Similarly,
$\langle G_1,G_2\rangle\nsubseteq\langle G_i,G_j\rangle, \forall
\{i,j\}\in\Lambda$. By Lemma \ref{sub-space}, if
$q\geq|\Lambda|=5$, then $\langle
G_1,G_2\rangle\nsubseteq\cup_{\{i,j\}\in\Lambda}\langle
G_i,G_j\rangle$. Note that $S_1 \cap \Omega = \{1,2\}$. Therefore,
let
$$G_3\in\langle G_1,G_2\rangle\backslash(\cup_{\{i,j\}\in\Lambda}\langle
G_i,G_j\rangle).$$ Then for any $(\mathcal S,r,k)$-core
$S\subseteq\{1,2,3,4,5\}$, $\{G_\ell; \ell\in S\}$ is linearly
independent.
    \item $i = 2$: $\Omega = \{1,2,3,4,5\}$ and $S_2 \setminus \Omega = \{6\}$.
Similarly, we can verify that $\{1,2,6\},\{1,3,6\}$, $\{2,3,6\}$,
$\{1,4,6\},\{1,5,6\}$, $\{2,4,6\}$ and $\{2,5,6\}$ are all subsets
which is an $(\mathcal S,r,k)$-core and contains the index $6$.
Let $\Lambda=\{\{1,2\}$, $\{1,3\}$, $\{2,3\}, \{1,4\},\{1,5\}$,
$\{2,4\}$, $\{2,5\}\}$. Clearly, $\langle
G_4,G_5\rangle\nsubseteq\langle G_i,G_j\rangle, \forall
\{i,j\}\in\Lambda$. By Lemma \ref{sub-space}, if
$q\geq|\Lambda|=7$, then $\langle
G_4,G_5\rangle\nsubseteq\cup_{\{i,j\}\in\Lambda}\langle
G_i,G_j\rangle$. As $S_2 \cap \Omega = \{4,5\}$, let
$$G_6\in\langle G_4,G_5\rangle\backslash(\cup_{\{i,j\}\in\Lambda}\langle
G_i,G_j\rangle).$$ Then for any $(\mathcal S,r,k)$-core $S$,
$\{G_\ell; \ell\in S\}$ is linearly independent. Thus, we can
obtain a matrix $G=(G_1,G_2,G_3,G_4,G_5,G_6)$ such that  for any
$(\mathcal S,r,k)$-core $S$, $\{G_\ell; \ell\in S\}$ is linearly
independent. Let $\mathcal C$ be the linear code generated by $G$.
Then $\mathcal C$ is an optimal $(2,2)_a$ linear code.
\end{enumerate}
We can in fact employ a smaller field than $\mathbb{F}_7$. The
following is a generating matrix of an optimal $(2,2)_a$ linear
code:
\begin{equation*}
G=\left(\begin{array}{cccccc}
1 & 0 & 1 & 0 & 1 & 1\\
0 & 1 & 1 & 0 & \alpha & \alpha\\
0 & 0 & 0 & 1 & 1 & \alpha\\
\end{array}\right)
\end{equation*}
over the field $\mathbb F_4=\{0,1,\alpha,1+\alpha\}$, where
$\alpha^2=1+\alpha$. \vskip 10pt

In the rest of this section, we shall use Theorem \ref{opt-suf-1}
to prove that optimal $(r,\delta)_a$ linear codes exist over a
field of size $q\geq\binom{n}{k-1}$ when $(r+\delta-1)|n$ or
$m\geq v+\delta-1$, where $n=w(r+\delta-1)+m$ and $k=ur+v$
satisfying $0<v<r$ and $0<m<r+\delta-1$. By Claim 2) of Lemma
\ref{low-bound}, $\frac{n}{r+\delta-1}\geq\frac{k}{r}$ is a
necessary condition for the existence of optimal $(r,\delta)_a$
linear codes. For this reason, we assume
$\frac{n}{r+\delta-1}\geq\frac{k}{r}$ holds in both cases.

\vskip 10pt
\begin{thm}\label{opt-ext-1}
Suppose $(r+\delta-1)|n$. 
If $q\geq\binom{n}{k-1}$, then there exists an optimal
$(r,\delta)_a$ linear code over $\mathbb F_q$.
\end{thm}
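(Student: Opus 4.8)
The plan is to reduce everything to Theorem \ref{opt-suf-1} by exhibiting a suitable partition of $[n]$. Since $(r+\delta-1)\mid n$, I write $n=w(r+\delta-1)$ and split $[n]$ into $w$ consecutive blocks $S_1,\dots,S_w$, each of size exactly $r+\delta-1$; I set $\mathcal S=\{S_1,\dots,S_w\}$ and $t=w$. This is automatically a partition of $[n]$ into blocks of the prescribed size, so the requirement $\delta\le|S_i|\le r+\delta-1$ holds at once (using $r\ge 1$ for the lower bound).

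Next I would verify the two counting hypotheses of Theorem \ref{opt-suf-1}. For $t\ge\lceil k/r\rceil$, I invoke the standing assumption $\frac{n}{r+\delta-1}\ge\frac{k}{r}$, which is justified for optimal codes by Lemma \ref{low-bound}; since $\frac{n}{r+\delta-1}=w$ is an integer bounded below by $k/r$, it follows that $w\ge\lceil k/r\rceil$. For the size condition, I fix an arbitrary $\lceil k/r\rceil$-subset $J$ of $[t]$. Because the blocks are pairwise disjoint, $|\cup_{i\in J}S_i|=\lceil k/r\rceil(r+\delta-1)$, which I rewrite as $\lceil k/r\rceil r+\lceil k/r\rceil(\delta-1)$; since $\lceil k/r\rceil r\ge k$, this is at least $k+\lceil k/r\rceil(\delta-1)$, exactly the required bound.

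Having checked all the hypotheses, I apply Theorem \ref{opt-suf-1} to this uniform partition: as $q\ge\binom{n}{k-1}$, the theorem produces an optimal $(r,\delta)_a$ linear code over $\mathbb F_q$, which completes the proof.

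There is no genuinely hard step here: the substantive work is entirely packaged inside Theorem \ref{opt-suf-1} (and its constructive Algorithm 1), and all that remains is to confirm that the equal-block partition satisfies that theorem's assumptions. The only place calling for a moment of care is the size inequality, but it collapses to the trivial fact $\lceil k/r\rceil r\ge k$. I would also make sure that the standing assumption $\frac{n}{r+\delta-1}\ge\frac{k}{r}$ is indeed in force for this statement, since it is precisely what guarantees $t\ge\lceil k/r\rceil$.
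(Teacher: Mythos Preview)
Your proposal is correct and follows essentially the same approach as the paper: partition $[n]$ into $t=n/(r+\delta-1)$ blocks of size $r+\delta-1$, use the standing assumption $\frac{n}{r+\delta-1}\ge\frac{k}{r}$ to get $t\ge\lceil k/r\rceil$, verify the union-size inequality via $\lceil k/r\rceil(r+\delta-1)\ge k+\lceil k/r\rceil(\delta-1)$, and invoke Theorem~\ref{opt-suf-1}. Your write-up even spells out the inequality $\lceil k/r\rceil r\ge k$ a bit more explicitly than the paper does.
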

\begin{proof}
Let $n=t(r+\delta-1)$. 
Note that we have assumed that
$\frac{n}{r+\delta-1}\geq\frac{k}{r}$. Then
$$t=\lceil\frac{n}{r+\delta-1}\rceil\geq\lceil\frac{k}{r}\rceil.$$
Let $\{S_1,\cdots,S_t\}$ be a partition of $\{1,\cdots,n\}$ such
that $|S_i|=r+\delta-1, i=1,\cdots,t$.

For any $J\subseteq[t]$ of size $|J|=\lceil\frac{k}{r}\rceil$,
$$|\cup_{i\in J}S_i|=\lceil\frac{k}{r}\rceil(r+\delta-1)\geq
k+\lceil\frac{k}{r}\rceil(\delta-1).$$ By Theorem \ref{opt-suf-1},
if $q\geq\binom{n}{k-1}$, then there exists an optimal
$(r,\delta)_a$ code over $\mathbb F_q$.
\end{proof}
\vskip 10pt

If $(r+\delta-1)|n$ and $\delta\leq d$, then following a similar
line of proof in \cite{Prakash12}, we can show that
$t=\lceil\frac{n}{r+\delta-1}\rceil\geq\lceil\frac{k}{r}\rceil$.
Under these two conditions, it was proved in \cite{Prakash12} that
there exists an optimal $(r,\delta)_a$ code over the field
$\mathbb F_q$ of size $q>kn^k$. Our method requires a field of
size only $\binom{n}{k-1}$, which is at the largest a fraction
$\frac{1}{k!}$ of $kn^k$.

\vskip 10pt
\begin{thm}\label{opt-ext-2}
Suppose $n=w(r+\delta-1)+m$ and $k=ur+v$, where $0<m<r+\delta-1$
and $0<v<r$. Suppose $m\geq v+\delta-1$ and $d\geq\delta$. If
$q\geq\binom{n}{k-1}$, then there exists an optimal $(r,\delta)_a$
linear code over $\mathbb F_q$.
\end{thm}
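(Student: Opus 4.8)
The plan is to reduce the statement to Theorem~\ref{opt-suf-1} by exhibiting a partition $\mathcal S=\{S_1,\dots,S_t\}$ of $[n]$ that meets its three hypotheses; once such a partition is produced, the existence of an optimal $(r,\delta)_a$ code over any field with $q\geq\binom{n}{k-1}$ is immediate. Writing $n=w(r+\delta-1)+m$ with $0<m<r+\delta-1$, the natural choice is to take $t=w+1$ parts, namely $w$ ``full'' blocks $S_1,\dots,S_w$ each of size $r+\delta-1$, together with one ``small'' block $S_{w+1}$ of size $m$. Since $0<v<r$ we have $\lceil k/r\rceil=u+1$, so I would verify the three conditions of Theorem~\ref{opt-suf-1} for this partition.

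First I would check the block-size condition $\delta\leq|S_i|\leq r+\delta-1$. The full blocks obviously satisfy it, so the only part needing attention is $S_{w+1}$: from $m\geq v+\delta-1$ and $v\geq 1$ we get $m\geq\delta$, while $m<r+\delta-1$ gives $m\leq r+\delta-1$, so the bound holds. Next, to obtain $t=w+1\geq\lceil k/r\rceil=u+1$ I need $w\geq u$; this follows from the standing assumption $\frac{n}{r+\delta-1}\geq\frac{k}{r}$ (equivalently from $d\geq\delta$), since $w+\frac{m}{r+\delta-1}\geq u+\frac{v}{r}$ together with $\frac{m}{r+\delta-1}<1$ forces $w\geq u$.

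The heart of the argument, and the step I expect to be the main obstacle, is the core condition: for every $(u+1)$-subset $J\subseteq[t]$ one needs $|\cup_{i\in J}S_i|\geq k+(u+1)(\delta-1)$. I would split into two cases. If $J$ consists only of full blocks, then $|\cup_{i\in J}S_i|=(u+1)(r+\delta-1)=(u+1)r+(u+1)(\delta-1)$, and the inequality reduces to $(u+1)r\geq ur+v$, i.e.\ $r\geq v$, which holds. If $J$ contains the small block $S_{w+1}$, then $J$ consists of $u$ full blocks plus $S_{w+1}$, so $|\cup_{i\in J}S_i|=u(r+\delta-1)+m=ur+u(\delta-1)+m$, and since $k+(u+1)(\delta-1)=ur+v+u(\delta-1)+(\delta-1)$, the required inequality reduces precisely to $m\geq v+\delta-1$, exactly the hypothesis of this case. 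Thus the delicate point is that the single small block is just large enough to keep every union over a $\lceil k/r\rceil$-subset above the threshold, and it is the assumption $m\geq v+\delta-1$ that makes it pass. Having verified all three hypotheses, Theorem~\ref{opt-suf-1} then yields the desired optimal $(r,\delta)_a$ linear code over $\mathbb F_q$.
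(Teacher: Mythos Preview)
Your proposal is correct and follows essentially the same approach as the paper: choose $t=w+1$ with $w$ blocks of size $r+\delta-1$ and one block of size $m$, verify the hypotheses of Theorem~\ref{opt-suf-1} by splitting into the two cases according to whether $J$ contains the small block, and observe that the case $t\in J$ reduces exactly to the assumption $m\geq v+\delta-1$. The paper derives $w+1\geq u+1$ slightly more directly via $\lceil n/(r+\delta-1)\rceil\geq\lceil k/r\rceil$, but your argument for $w\geq u$ is equivalent.
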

\begin{proof}
Let $t=w+1$. Since we have assumed that
$\frac{n}{r+\delta-1}\geq\frac{k}{r}$, we get
$$t=w+1=\lceil\frac{n}{r+\delta-1}\rceil\geq\lceil\frac{k}{r}\rceil= u+1.$$
Note that
$n-m=w(r+\delta-1)$. Let $\{S_1,\cdots,S_w\}$ be a partition of
$\{1,\cdots,n-m\}$ and $S_t=[n-m+1,n]$.

For any $J\subseteq[t]$ of size $|J|=\lceil\frac{k}{r}\rceil$, we
have the following two cases:

Case 1: $t\notin J$. Then
$$|\cup_{i\in J}S_i|=\left\lceil\frac{k}{r}\right\rceil(r+\delta-1)\geq
k+\left\lceil\frac{k}{r}\right\rceil(\delta-1).$$

Case 2: $t\in J$. Since $m\geq v+\delta-1$, then
\begin{eqnarray*}
|\cup_{i\in J}S_i|&=&(\left\lceil\frac{k}{r}\right\rceil-1)(r+\delta-1)+m,\\
&\geq&(\left\lceil\frac{k}{r}\right\rceil-1)(r+\delta-1)+v+\delta-1,\\
&=&k+\left\lceil\frac{k}{r}\right\rceil(\delta-1).
\end{eqnarray*}

Hence, for any $\lceil\frac{k}{r}\rceil$-subset $J$ of $[t]$,
$|\cup_{i\in J}S_i|\geq k+\lceil\frac{k}{r}\rceil(\delta-1)$. By
Theorem \ref{opt-suf-1}, if $q\geq\binom{n}{k-1}$,  there
exists an optimal $(r,\delta)_a$ code over $\mathbb F_q$.
\end{proof}
\vskip 10pt

When $\delta=2$, the conditions of Theorem \ref{opt-ext-1} and
Theorem \ref{opt-ext-2} become $(r+1)|n$ and
$n~\text{mod}~(r+1)-1\geq k~\text{mod}~r>0$ respectively. For this
special case, Tamo \emph{et al.} \cite{Tamo13} introduced a
different construction method which is very easy to implement. However,
the method in \cite{Tamo13} requires the field size $q=O(n^k)$,
which is larger than the field size $q=\binom{n}{k-1}$ of our method.

\section{Construction of Optimal $(r,\delta)_a$ Codes: Algorithm 2}
In this section, we present yet another method for constructing
optimal $(r,\delta)_a$ codes. This constructive method also points
out two other sets of coding parameters where optimal
$(r,\delta)_a$ codes exist. As the method in Section
\uppercase\expandafter{\romannumeral5}, this method construct an
optimal $(r,\delta)_a$ code which has a given set $\mathcal S$ as
its $(r,\delta)$-cover set. The difference is that the set
$\mathcal S$ used by this method has a more complicated structure.
We again borrow the notion of \emph{core} from \cite{Gopalan12}.

\vskip 10pt
\begin{defn}\label{frame}
Let $\mathcal S=\{S_1,\cdots,S_t\}$ be a collection of
$(r+\delta-1)$-subsets of $[n]$, $\mathcal
A=\{A_1,\cdots,A_\alpha,B\}$ be a partition of $[t]$ and
$\Psi=\{\xi_1,\cdots,\xi_\alpha\}\subseteq[n]$. We say that
$\mathcal S$ is an $(\mathcal A,\Psi)$-\emph{frame} over the set
$[n]$, if the following two conditions are satisfied:
\begin{itemize}
  \item [(1)] For each $j\in[\alpha]$, $\{\xi_j\}=\cap_{\ell\in A_j}S_\ell$
  and $\{S_i\backslash\{\xi_j\}; i\in A_j\}$ are mutually disjoint;
  \item [(2)] $\{\cup_{\ell\in A_j}S_\ell;j\in[\alpha]\}\cup\{S_j;j\in B\}$
  is a partition of $[n]$.
\end{itemize}
\end{defn}
\vskip 10pt

\begin{exam}\label{eg-core}
Let $\mathcal S=\{S_1,\cdots,S_8\}$ be what's shown in Fig \ref{fig-core}.
Clearly $\mathcal S$ is an $(\mathcal A,\Psi)$-frame over $[n]$,
where the subsets $S_1,S_2,S_3$ have a common element $\xi_1=1$,
and the subsets $S_4,S_5$ have a common element $\xi_2=14$.
\end{exam}

\renewcommand\figurename{Fig}
\begin{figure}[htbp]
\begin{center}
\includegraphics[width=5cm]{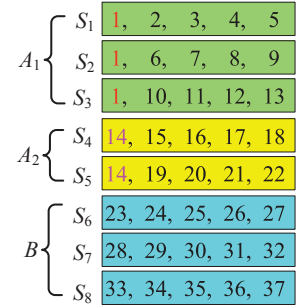}
\end{center}
\caption{An $(\mathcal A,\Psi)$-frame, where $n=37, r=\delta=3,
t=8, A_1=\{1,2,3\}$, $A_2=\{4,5\}, B=\{6,7,8\}, \mathcal A=\{A_1,
A_2, B\}$ and $\Psi=\{1,14\}$.}\label{fig-core}
\end{figure}

\vskip 10pt
\begin{defn}\label{g-core}
A subset $S\subseteq[n]$ is said to be an $(\mathcal
S,r)$-\emph{core} if the following three conditions hold:
\begin{itemize}
  \item [(1)] If $j\in[\alpha]$ and $\xi_j\in S$, then
  $|S\cap S_i|\leq r, \forall i\in A_j$;
  \item [(2)] If $j\in[\alpha]$ and $\xi_j\notin S$, then
  there is an $i_j\in A_j$ such that $|S\cap S_{i_j}|\leq r$ and
  $|S\cap S_i|\leq r-1, \forall i\in A_j\backslash\{i_j\}$;
  \item [(3)] If $i\in B$, then $|S\cap S_i|\leq r$.
\end{itemize}
Additionally, if $S\subseteq[n]$ is an $(\mathcal S,r)$-core and $|S|=k$, then
$S$ is called an $(\mathcal S,r,k)$-\emph{core}.
\end{defn}
\vskip 10pt

Clearly, if $S\subseteq[n]$ is an $(\mathcal S,r)$-core and
$S'\subseteq S$, then $S'$ is also an $(\mathcal S,r)$-core. In
particular, if $S\subseteq[n]$ is an $(\mathcal S,r)$-core and
$S'$ is a $k$-subset of $S$, then $S'$ is an $(\mathcal
S,r,k)$-core.

{\it Example \ref{eg-core} continued:}
In Example \ref{eg-core}, let $k=7$. Then $\{1,2,3,6,7,10,11\}$
and $\{2,3,4,6,7,28,33\}$ are both $(\mathcal S,r,k)$-core.
However, $S=\{2,3,4,6,7,8,28\}$ and $S'=\{2,6,15,23,24,25,26\}$
are not $(\mathcal S,r)$-core, because $S$ does not satisfy Condition (2)
and $S'$ does not satisfy Condition (3) of
Definition \ref{g-core}.

\vskip 10pt
\begin{lem}\label{core-form}
Let $\mathcal S$ be an $(\mathcal A,\Psi)$-frame as in Definition
\ref{frame}. Suppose $t\geq\lceil\frac{k}{r}\rceil$ and for any
$\lceil\frac{k}{r}\rceil$-subset $J$ of $[t]$, $|\cup_{i\in J}S_i|
\geq k+\lceil\frac{k}{r}\rceil(\delta-1)$. Then the following
hold:
\begin{itemize}
  \item [1)] If $T\subseteq[n]$ has size $|T|\geq k+(\lceil\frac{k}{r}\rceil-1)
  (\delta-1)$, then there is an $S\subseteq T$ such that $S$ is an
  $(\mathcal S,r,k)$-core.
  \item [2)] For any $i\in[t]$ and $I\subseteq S_i$ of size $|I|=r$, there is an
  $(\mathcal S,r,k)$-core $S$ such that $I\subseteq S$.
\end{itemize}
\end{lem}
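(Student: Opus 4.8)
The plan is to prove both claims by reducing them to a single capacity estimate. Since any subset of an $(\mathcal S,r)$-core is again an $(\mathcal S,r)$-core, it suffices to produce, in each claim, an $(\mathcal S,r)$-core of size \emph{at least} $k$ (with the required containment) and then delete surplus elements to reach size exactly $k$. Throughout I write $\rho=\lceil\frac{k}{r}\rceil$. The central observation I would establish first is a per-block decomposition: by condition (2) of Definition \ref{frame}, the blocks $\{\cup_{\ell\in A_j}S_\ell\}_{j\in[\alpha]}$ together with $\{S_i\}_{i\in B}$ partition $[n]$, and the conditions of Definition \ref{g-core} are imposed independently on each block, so the maximum size of a core contained in a given set is the sum of the local maxima over blocks. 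A short calculation I would record is that, for any $J\subseteq[t]$, filling the sets $\{S_\ell\}_{\ell\in J}$ to their core capacity yields a core of size exactly $|\cup_{\ell\in J}S_\ell|-|J|(\delta-1)$: each set loses $\delta-1$, and the shared pivot $\xi_j$ is exactly recovered by the ``$+1$'' that a group contributes, whether one keeps $\xi_j$ under condition (1) or drops it and uses a special index under condition (2).

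For claim 1), given $T$ with $|T|\geq k+(\rho-1)(\delta-1)$, I would build a canonical local-maximum core $W\subseteq T$ by truncating $T$ block by block, always retaining $\xi_j$ when $\xi_j\in T$ (I would check this choice is optimal, the drop-option never yielding a strictly larger local core). Call a set $S_\ell$ \emph{saturated} if the truncation discards elements from it; a direct count shows each saturated set discards at most $\delta-1$ elements and retains at least $r$ elements of $T$ in its disjoint part. Let $J$ be the collection of saturated sets and split on its size. If $|J|\leq\rho-1$, the total loss is at most $(\rho-1)(\delta-1)$, so $|W|\geq|T|-(\rho-1)(\delta-1)\geq k$ and we are done. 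If $|J|\geq\rho$, I would pick any $\rho$-subset $J_0\subseteq J$ and rebuild a fresh core on $\cup_{\ell\in J_0}S_\ell$ filling each of these sets to capacity; because each set in $J_0$ is saturated it contains enough elements of $T$ to realize its full local capacity, so by the capacity identity and the hypothesis this core has size $|\cup_{\ell\in J_0}S_\ell|-\rho(\delta-1)\geq k$.

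Claim 2) is then the same construction with no constraint to $T$: given $I\subseteq S_i$ with $|I|=r$, since $t\geq\rho$ I would choose a $\rho$-subset $J\ni i$ and fill $\{S_\ell\}_{\ell\in J}$ to capacity while pinning exactly the $r$ coordinates of $I$ into $S_i$ (designating $i$ as the special index of its group $A_j$, or, when $\xi_j\in I$, keeping $\xi_j$ under condition (1)). This is an $(\mathcal S,r)$-core of size $|\cup_{\ell\in J}S_\ell|-\rho(\delta-1)\geq k$ containing $I$, and trimming outside $I$ gives the desired $(\mathcal S,r,k)$-core.

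The hard part will be the bookkeeping around the shared pivots $\xi_j$: I expect the off-by-one between $\delta$ and $\delta-1$ arising for the ``non-special'' sets of a group (whose budget is $r-1$ rather than $r$) to be the main source of error, and the argument really hinges on the clean capacity identity $|\cup_{\ell\in J}S_\ell|-|J|(\delta-1)$ surviving the overcounting of $\xi_j$. Two checks deserve particular care: that a saturated set always retains at least $r$ elements of $T$ in its disjoint part, so that Case 1's rebuilt core is genuinely realizable inside $T$; and that the core rebuilt on $J_0$ is valid \emph{globally}, in particular that the sets of a group lying outside $J_0$ still satisfy Definition \ref{g-core}(1)--(2) --- which holds because each such set meets $\cup_{\ell\in J_0}S_\ell$ only in $\{\xi_j\}$.
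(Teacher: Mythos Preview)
Your proposal is correct and matches the paper's proof essentially step for step: the paper likewise defines the set $J=\{\ell:|T\cap S_\ell|\geq r\}$ of ``saturated'' indices, builds the block-by-block truncated core $W\subseteq T$ (with the same case distinction on whether $J\cap A_j=\emptyset$ and whether $\xi_j\in T$), and then splits on $|J|$ versus $\lceil k/r\rceil$, invoking exactly your capacity identity $|\cup_{\ell\in J}S_\ell|-|J|(\delta-1)$ in the large-$|J|$ case via the computation $\sum_{j\in\Theta(J)}|J\cap A_j|(r-1)+|\Theta(J)|+|J\cap B|r$. Your anticipated difficulties around the pivots $\xi_j$ and the global validity of the rebuilt core are precisely where the paper spends its bookkeeping, and claim 2) is handled identically by choosing a $\lceil k/r\rceil$-subset $J\ni i$ and filling to capacity with $W_i=I$.
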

\begin{proof}
1) Let $$J=\{\ell\in[t]; |T\cap S_\ell|\geq r\}.$$

For each $j\in[\alpha]$ and $\ell\in A_j$, we pick a subset
$W_\ell\subseteq T$ as follows:

 \romannumeral1) If $J\cap
A_j=\emptyset$, then let $W_\ell=T\cap S_\ell$ for each $\ell\in
A_j$.

 \romannumeral2) If $J\cap A_j\neq\emptyset$ and $\xi_j\in
T$, then for each $\ell\in J\cap A_j$, let $W_\ell$ be an
$r$-subset of $T\cap S_\ell$ satisfying $\xi_j\in W_\ell$, and for
each $\ell\in A_j\backslash J$, let $W_\ell=T\cap S_\ell$.

\romannumeral3) If $J\cap A_j\neq\emptyset$ and $\xi_j\notin T$,
then fix an $\ell_j\in J\cap A_j$, and let $W_{\ell_j}$ be an
$r$-subset of $T\cap S_{\ell_j}$, let $W_\ell$ be an
$(r-1)$-subset of $T\cap S_\ell$ for each $\ell\in J\cap
A_j\backslash\{\ell_j\}$, and let $W_\ell=T\cap S_\ell$ for each
$\ell\in A_j\backslash J$.

Moreover, for each $\ell\in J\cap B$, let $W_\ell$ be an
$r$-subset of $T\cap S_\ell$, and for each $\ell\in B\backslash
J$, let $W_\ell=T\cap S_\ell$.

Let $W=\cup_{\ell\in[t]}W_\ell$, then by Definition \ref{g-core},
$W$ is an $(\mathcal S,r)$-core. We now prove that $|W|\geq k$.
Let
$$\Theta(J)=\{j\in[\alpha]; J\cap A_j\neq\emptyset\}.$$ 
We need to consider the following two cases:

Case 1: $|J|\geq\lceil\frac{k}{r}\rceil$. Without loss of
generality, assume $|J|=\lceil\frac{k}{r}\rceil$\footnote{If
$|J|>\lceil\frac{k}{r}\rceil$, then pick a
$\lceil\frac{k}{r}\rceil$-subset $J_0$ of $J$, and replace $J$ by
$J_0$ in our discussion.}. Then from  the assumption of this lemma,
\begin{align}
|\cup_{\ell\in J}S_\ell|\geq k+|J|(\delta-1).\label{eqn:28}
\end{align}
By Definition \ref{frame},
\begin{align}
|\cup_{\ell\in J}S_\ell| = & \sum_{j\in\Theta(J)}|J\cap
A_j|(r+\delta-2) \nonumber\\
&\ \ +|\Theta(J)|+|J\cap B|(r+\delta-1).
\label{eqn:29}
\end{align}
Since $\mathcal A=\{A_1,\cdots,A_\alpha,B\}$ is a partition of
$[t]$, $\{J\cap A_j;j\in\Theta(J)\}\cup\{J\cap B\}$ is a partition
of $J$ and
\begin{align}
|J|=\sum_{j\in\Theta(J)}|J\cap A_j|+|J\cap
B|.\label{eqn:30}
\end{align}
Combining (\ref{eqn:28})$-$(\ref{eqn:30}), we have
\begin{align}
\sum_{j\in\Theta(J)}|J\cap A_j|(r-1)+|\Theta(J)|+|J\cap B|r\geq
k.\label{eqn:31}
\end{align}
By the construction of $W$, we have
\begin{align}
|\cup_{\ell\in J}W_\ell|=\sum_{j\in\Theta(J)}|J\cap A_j|(r-1)+
|\Theta(J)|+|J\cap B|r.\label{eqn:32}
\end{align}
Equations (\ref{eqn:31}) and (\ref{eqn:32}) imply that
$$|W|\geq|\cup_{\ell\in J}W_\ell|\geq k.$$

Case 2: $|J|<\lceil\frac{k}{r}\rceil$. By the construction of $W$,
for each $j\in[\alpha]$ and $\ell\in J\cap A_\ell$, $W_\ell$ is
obtained by deleting at most $(\delta-1)$ elements from $T\cap
S_\ell$. We thus have
$$|\cup_{\ell\in A_j}W_\ell|\geq|T\cap(\cup_{\ell\in
A_j}S_\ell)|-|J\cap A_j|(\delta-1).$$ Moreover,
$$|\cup_{\ell\in B}W_\ell|\geq|\cup_{\ell\in B}(T\cap S_\ell)|-|J\cap
B|(\delta-1).$$ Then $$|W|=|\cup_{\ell\in[t]}W_\ell|
\geq|T|-|J|(\delta-1).$$ Note that $|T|\geq
k+(\lceil\frac{k}{r}\rceil-1)(\delta-1)$ and
$|J|<\lceil\frac{k}{r}\rceil$. Therefore
$$|W|\geq|T|-(\lceil\frac{k}{r}\rceil-1)(\delta-1)=k.$$

Gathering both cases, we always have $|W|\geq k$. Let $S$ be a
$k$-subset of $W$. Note that $W$ is an $(\mathcal S,r)$-core. So
$S\subseteq W\subseteq T$ is an $(\mathcal S,r,k)$-core.

2) To prove the second claim of Lemma \ref{core-form}, note that
$t\geq\lceil\frac{k}{r}\rceil$, and hence we can always find a
$\lceil\frac{k}{r}\rceil$-subset $J$ of $[t]$ such that $i\in J$.
Similar to the proof of 1), for each $\ell\in J$, we can pick a
$W_\ell$ such that $W_i=I$, $\cup_{\ell\in J}W_\ell$ is an
$(\mathcal S,r)$-core and $|\cup_{\ell\in J}W_\ell|\geq k$. Let
$S$ be a $k$-subset of $\cup_{\ell\in J}W_\ell$ such that
$I\subseteq S$. Then $S$ is an $(\mathcal S,r,k)$-core and
$I\subseteq S$.
\end{proof}
\vskip 10pt

{\it Example \ref{eg-core} further continued}: Consider the
$(\mathcal A,\Psi)$-frame $\mathcal S$ in Example \ref{eg-core}.
Let $k=7$. Then $\mathcal S$ satisfies the conditions of Lemma
\ref{core-form}. We consider the following two instances:

Instance 1: $T=\{2,3,4,6,7,8,14,15,16,17,19,23,24,28\}$. As in the
proof of Lemma \ref{core-form}, $J=\{\ell;|T\cap S_\ell|\geq
r\}=\{1,2,4\}$ and $|J|=3=\lceil\frac{k}{r}\rceil$. Let
$W_1=\{2,3,4\}$, $W_2=\{6,7\}$, $W_4=\{14,15,16\}$, $W_5=\{19\}$,
$W_6=\{23,24\}$, $W_7=\{28\}$ and $W_\ell=\emptyset$ for
$\ell\in\{3,8\}$. Then
$|W|=|\cup_{\ell=1}^8W_\ell|\geq|\cup_{\ell\in J}W_\ell|\geq k=7$.

Instance 2: $T=\{2,3,4,6,7,8,10,11,14,15,19,23,24,28\}$. Then
$J=\{\ell;|T\cap S_\ell|\geq r\}=\{1,2\}$ and
$|J|<\lceil\frac{k}{r}\rceil$. Let $W_1=\{2,3,4\}, W_2=\{6,7\}$,
$W_3=\{10,11\}, W_4=\{14,15\}, W_5=\{19\},
W_6=\{23,24\},W_7=\{28\}$ and $W_8=\emptyset$. Then $|W|=$
$|\cup_{\ell=1}^8W_\ell|\geq|T|-|J|(\delta-1)\geq k=7$.

\vskip 10pt
\begin{rem}\label{nor-core}
Let $\mathcal S$ be an $(\mathcal A,\Psi)$-frame as in Definition
\ref{frame}. For each $j\in[\alpha]$ and $i\in A_j$, let $U_i$ be
an $r$-subset of $S_i$ such that $\xi_j\in U_i$. For each $i\in
B$, let $U_i$ be an $r$-subset of $S_i$. Let
$$\Omega_0=\cup_{i\in[t]}U_i.$$ Then
by Definition \ref{g-core}, $\Omega_0$ is an $(\mathcal
S,r)$-core. Clearly,
$$|\Omega_0|=n-t(\delta-1)=|\cup_{j=1}^\alpha A_j|(r-1)+\alpha+
|B|r.$$
\end{rem}
\vskip 10pt

\begin{exam}\label{eg-omg0}
In Example \ref{eg-core}, let $k=7$, then $\Omega_0=\{1,$
$2,3,6,7,10,11,14,15,16,19,20,23,24,25,28,29,30,33,34,$ $35\}$ is
an $(\mathcal S,r)$-core obtained by the process of Remark
\ref{nor-core}.
\end{exam}
\vskip 10pt

\begin{lem}\label{core-exist}
Let $\mathcal S$ be an $(\mathcal A,\Psi)$-frame as defined in Definition
\ref{frame} and $\Omega_0$ be what's described in Remark \ref{nor-core}. Suppose
$\Omega_0\subseteq\Omega\subseteq[n], S_0\subseteq\Omega$ and
$i\in[t]$. If $\lambda\in S_i\backslash\Omega$ and
$S_0\cup\{\lambda\}$ is an $(\mathcal S,r,k)$-core, then there exists
an $\eta\in S_i\cap\Omega$ such that $S_0\cup\{\eta\}$ is an
$(\mathcal S,r,k)$-core.
\end{lem}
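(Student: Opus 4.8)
The plan is to prove the lemma by a \emph{swap} argument: replace $\lambda$ by a suitable element of $\Omega$ without destroying any of the three defining conditions of an $(\mathcal S,r)$-core. First note that since $S_0\subseteq\Omega$ while $\lambda\in S_i\setminus\Omega$, we have $\lambda\notin S_0$, so $|S_0\cup\{\lambda\}|=k$ forces $|S_0|=k-1$; hence any $\eta\notin S_0$ will give $|S_0\cup\{\eta\}|=k$, and all candidate elements will be drawn from $(S_i\cap\Omega)\setminus S_0$. The routine ingredient I would establish up front is the counting fact $|S_i\cap\Omega|\geq r$: the set $U_i$ of Remark \ref{nor-core} satisfies $U_i\subseteq\Omega_0\subseteq\Omega$, $U_i\subseteq S_i$ and $|U_i|=r$; moreover $\xi_j\in U_i\subseteq\Omega$ whenever $i\in A_j$, a fact I will use repeatedly. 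The proof then case-splits on the location of $i$ in the partition $\mathcal A=\{A_1,\dots,A_\alpha,B\}$.

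If $i\in B$, condition (2) of Definition \ref{frame} makes $S_i$ disjoint from every other $S_\ell$, so the situation reduces to the pure-partition argument used in proving Claim~3 of Theorem \ref{opt-suf-1}. Condition (3) of Definition \ref{g-core} applied to $S_0\cup\{\lambda\}$ gives $|S_0\cap S_i|\le r-1$, so $(S_i\cap\Omega)\setminus S_0$ is nonempty; any $\eta$ in it lies in no other block, and the three core conditions for $S_0\cup\{\eta\}$ are inherited unchanged from those of $S_0$, with only the count $|(S_0\cup\{\eta\})\cap S_i|\le r$ to be checked.

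The substantive case is $i\in A_j$, where $S_i$ meets the other blocks of $A_j$ only in the common point $\xi_j$. I would first record that $\lambda\ne\xi_j$ (as $\xi_j\in\Omega$ but $\lambda\notin\Omega$), so $\lambda\in S_i\setminus\{\xi_j\}$ and, by condition (1) of Definition \ref{frame}, $\lambda$ belongs to no block other than $S_i$. Then I split according to which clause of Definition \ref{g-core} governs $A_j$ for $S_0\cup\{\lambda\}$: (B1) if $\xi_j\in S_0$, condition (1) forces $|S_0\cap S_i|\le r-1$; (B2a) if $\xi_j\notin S_0$ and $i$ is the distinguished index $i_j$ of condition (2), again $|S_0\cap S_i|\le r-1$; and (B2b) if $\xi_j\notin S_0$ and $i\ne i_j$, then condition (2) gives the sharper bound $|S_0\cap S_i|\le r-2$. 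In (B1) and (B2a) the inequality $|S_i\cap\Omega|\ge r$ yields an $\eta\in(S_i\cap\Omega)\setminus S_0$, and I verify the core is preserved for whatever value $\eta$ takes; in particular, in (B2a) the choice $\eta=\xi_j$ is admissible precisely because it reactivates condition (1) with every block-count of $A_j$ still bounded by $r$, while any $\eta\ne\xi_j$ is localized to $S_i$ and keeps condition (2) with distinguished index $i_j'=i$.

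The main obstacle is case (B2b): there the natural candidate $\eta=\xi_j$ can \emph{fail}, since adjoining $\xi_j$ would raise the $S_{i_j}$-count to $r+1$ and violate condition (1). The fix is exactly the stronger bound $|S_0\cap S_i|\le r-2$, which makes $|(S_i\cap\Omega)\setminus S_0|\ge 2$ and hence supplies a candidate $\eta\ne\xi_j$. Such an $\eta$ lies only in $S_i$, so $\xi_j\notin S_0\cup\{\eta\}$ remains true and condition (2) is preserved with the \emph{same} distinguished index $i_j$: the count $|(S_0\cup\{\eta\})\cap S_i|\le r-1$ while all other block-counts are untouched. In every case the localization of $\eta$ to $S_i\subseteq\cup_{\ell\in A_j}S_\ell$, together with the disjointness in condition (2) of Definition \ref{frame}, leaves the conditions for the remaining blocks $A_{j'}$ $(j'\ne j)$ and for $B$ identical to those already holding for the core $S_0\cup\{\lambda\}$, making their verification routine. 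Assembling the cases produces the desired $\eta\in S_i\cap\Omega$ with $S_0\cup\{\eta\}$ an $(\mathcal S,r,k)$-core.
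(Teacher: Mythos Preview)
Your proposal is correct and follows essentially the same swap argument as the paper's proof: establish $|S_i\cap\Omega|\ge r$ via $U_i$, dispose of the case $i\in B$ by disjointness, and for $i\in A_j$ split on whether $\xi_j\in S_0$ and on the role of $i$ relative to the distinguished index of condition~(2), using the sharper bound $|S_0\cap S_i|\le r-2$ in the problematic subcase to avoid $\xi_j$. Your decomposition (B2a)/(B2b) is logically equivalent to the paper's Subcases~2.1/2.2 (since $i$ can serve as the distinguished index for $S_0\cup\{\lambda\}$ iff $|S_0\cap S_{i'}|\le r-1$ for all $i'\in A_j$), and you are slightly more explicit than the paper in verifying both possibilities $\eta=\xi_j$ and $\eta\neq\xi_j$ in (B2a).
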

\begin{proof}
By the construction of $\Omega_0$, $|S_i\cap\Omega_0|=r$. Since
$\Omega_0\subseteq\Omega$, so
$$|S_i\cap\Omega|\geq r.$$ Since $S_0\cup\{\lambda\}$ is
an $(\mathcal S,r,k)$-core, by Definition \ref{g-core},
$$|S_0|=k-1$$ and $$|S_0\cap S_i|\leq r-1.$$ Thus, we can find an
$\eta\in(S_i\cap\Omega)\backslash S_0$.

If $i\in B$, then by Definition \ref{frame}, $\eta\notin S_{i'},
\forall i'\in[t]\backslash\{i\}$. Then $S_0\cup\{\eta\}$ is an
$(\mathcal S,r,k)$-core.

Now, suppose $i\in A_j$ for some $j\in[\alpha]$. We need to
consider the following two cases.

Case 1: $\xi_j\in S_0$. Since $\eta\in(S_i\cap\Omega)\backslash
S_0$, then $\eta\neq\xi_j$ and $\eta\notin S_{i'}, \forall
i'\in[t]\backslash\{i\}$. Then $S_0\cup\{\eta\}$ is an $(\mathcal
S,r,k)$-core.

Case 2: $\xi_j\notin S_0$. Since $S_0\cup\{\lambda\}$ is an
$(\mathcal S,r,k)$-core, from Definition \ref{g-core}, we differentiate
the following two sub-cases:

Subcase 2.1: $|S_0\cap S_{i'}|\leq r-1, \forall i'\in A_j$. In that case, it is
clear that $S_0\cup\{\eta\}$ is an $(\mathcal S,r,k)$-core.

Subcase 2.2: There is an $i_j\in A_j\backslash\{i\}$ such that
$|S_0\cap S_{i_j}|=r$, $|S_0\cap S_{i}|\leq r-2$ and $|S_0\cap
S_{i'}|\leq r-1, \forall i'\in A_j\backslash\{i_j,i\}$. In that case, we have
$$|(S_i\cap\Omega)\backslash S_0|\geq 2.$$ Let
$\eta\in(S_i\cap\Omega)\backslash(S_0\cup\{\xi_j\})$, then
$\eta\neq\xi_j$ and $\eta\notin S_{i'}, \forall
i'\in[t]\backslash\{i\}$. It then follows that  $S_0\cup\{\eta\}$ is an
$(\mathcal S,r,k)$-core.
\end{proof}
\vskip 10pt

{\it Example \ref{eg-core} and \ref{eg-omg0} continued:}
Consider again Example \ref{eg-core}. Let $k=7$,
$\Omega=\Omega_0\cup\{4,5,8\}$ and $\lambda=9\in S_2$, where
$\Omega_0$ is as in Example \ref{eg-omg0}. We can easily verify the following:

Let $S_0=\{1,2,3,6$,
$10,14\}$; Then $S_0\cup\{9\}$ is an $(\mathcal S,r,k)$-core. If we further let
$\eta=7\in S_2$, then $S_0\cup\{\eta\}$ is also an $(\mathcal
S,r,k)$-core.

Let $S_0'=\{2,3,6,7,14,15\}$; Then $S_0'\cup\{9\}$
is an $(\mathcal S,r,k)$-core. If we further let $\eta'=8\in S_2$, then
$S_0'\cup\{\eta\}$ is also an $(\mathcal S,r,k)$-core.

Let
$S_0''=\{2,3,4,10,11,15,23\}$; Then $S_0''\cup\{9\}$ is an
$(\mathcal S,r,k)$-core. If we further let $\eta''=6\in S_2$, then
$S_0''\cup\{\eta''\}$ is also an $(\mathcal S,r,k)$-core.

\vskip 10pt
\begin{lem}\label{code-extd}
Let $\mathcal S$ be an $(\mathcal A,\Psi)$-frame defined in Definition
\ref{frame} and $\Omega_0$ be what's defined in Remark \ref{nor-core}. Let
$\Omega_0\subseteq\Omega\subseteq[n]$ and $\mathcal
G=\{G_\ell\in\mathbb F_q^k; \ell\in\Omega\}$ such that for any
$(\mathcal S,r,k)$-core $S\subseteq\Omega$, the vectors in
$\{G_\ell;\ell\in S\}$ are linearly independent. Suppose $i\in[t]$
and $S_i\backslash\Omega\neq\emptyset$. If $q\geq\binom{n}{k-1}$,
then for any $\lambda\in S_i\backslash\Omega$, there is a
$G_\lambda\in\langle\{G_\ell\}_{\ell\in S_i\cap\Omega}\rangle$
such that for any $(\mathcal S,r,k)$-core
$S\subseteq\Omega\cup\{\lambda\}$, the vectors in
$\{G_\ell;\ell\in S\}$ are linearly independent.
\end{lem}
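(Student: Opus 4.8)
The plan is to mimic the argument of Claim~3 in the proof of Theorem~\ref{opt-suf-1}, replacing the simple disjointness reasoning used there by the more delicate core-extension result of Lemma~\ref{core-exist}, which is tailored to the $(\mathcal A,\Psi)$-frame structure. First I would collect the obstructions: let $\Lambda$ be the set of all $S_0\subseteq\Omega$ for which $S_0\cup\{\lambda\}$ is an $(\mathcal S,r,k)$-core. Because $S_0\cup\{\lambda\}$ has size $k$ and $\lambda\notin\Omega\supseteq S_0$, every $S_0\in\Lambda$ is a $(k-1)$-subset of $[n]$; hence $|\Lambda|\leq\binom{n}{k-1}\leq q$. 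The goal is then to choose $G_\lambda$ inside $\langle\{G_\ell\}_{\ell\in S_i\cap\Omega}\rangle$ but outside every $\langle\{G_\ell\}_{\ell\in S_0}\rangle$, $S_0\in\Lambda$.

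The key step is to show that $\langle\{G_\ell\}_{\ell\in S_i\cap\Omega}\rangle\nsubseteq\langle\{G_\ell\}_{\ell\in S_0}\rangle$ for each $S_0\in\Lambda$. Here I would invoke Lemma~\ref{core-exist}: since $\lambda\in S_i\backslash\Omega$ and $S_0\cup\{\lambda\}$ is an $(\mathcal S,r,k)$-core, there is an $\eta\in S_i\cap\Omega$ such that $S_0\cup\{\eta\}$ is again an $(\mathcal S,r,k)$-core. By the standing hypothesis on $\Omega$, the vectors $\{G_\ell;\ell\in S_0\cup\{\eta\}\}$ are linearly independent, so $G_\eta\notin\langle\{G_\ell\}_{\ell\in S_0}\rangle$ while $G_\eta\in\langle\{G_\ell\}_{\ell\in S_i\cap\Omega}\rangle$; this furnishes the required non-containment. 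With this in hand, set $X=\langle\{G_\ell\}_{\ell\in S_i\cap\Omega}\rangle$ and, for each $S_0\in\Lambda$, $X_{S_0}=\langle\{G_\ell\}_{\ell\in S_0}\rangle$, and apply Lemma~\ref{sub-space}, whose hypothesis $q\geq|\Lambda|$ is met, to conclude $X\nsubseteq\bigcup_{S_0\in\Lambda}X_{S_0}$. Choosing $G_\lambda$ in $X\backslash\bigcup_{S_0\in\Lambda}X_{S_0}$ yields a vector with $G_\lambda\in\langle\{G_\ell\}_{\ell\in S_i\cap\Omega}\rangle$ for which $\{G_\ell\}_{\ell\in S_0\cup\{\lambda\}}$ is linearly independent for every $S_0\in\Lambda$.

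Finally I would verify the conclusion for an arbitrary $(\mathcal S,r,k)$-core $S\subseteq\Omega\cup\{\lambda\}$ by cases: if $\lambda\notin S$ then $S\subseteq\Omega$ and independence holds by hypothesis; if $\lambda\in S$ then $S\backslash\{\lambda\}\in\Lambda$ and independence follows from the selection of $G_\lambda$. I expect the only genuine subtlety to be the invocation of Lemma~\ref{core-exist}, i.e.\ producing the swap element $\eta$ inside $S_i\cap\Omega$; the frame structure (the shared vertices $\xi_j$ and the asymmetric core condition of Definition~\ref{g-core}) is exactly what makes this non-obvious. Since that lemma is already established, the remaining argument is a routine transcription of the network-coding-style selection used for Algorithm~1, and the field-size bound $q\geq\binom{n}{k-1}$ enters only through the counting $|\Lambda|\leq\binom{n}{k-1}$.
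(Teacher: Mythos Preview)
Your proposal is correct and follows essentially the same approach as the paper's proof: define $\Lambda$, invoke Lemma~\ref{core-exist} to get the swap element $\eta$ and hence non-containment of the local span in each obstruction space, then apply Lemma~\ref{sub-space} using $q\geq\binom{n}{k-1}\geq|\Lambda|$ to select $G_\lambda$. Your final case split on whether $\lambda\in S$ is a nice explicit verification that the paper leaves implicit.
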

\begin{proof}
Let $\Lambda$ be the set of all $S_0\subseteq\Omega$ such that
$S_0\cup\{\lambda\}$ is an $(\mathcal S,r,k)$-core. For any
$S_0\in\Lambda$, by Lemma \ref{core-exist}, there is an $\eta\in
S_i\cap\Omega$ such that $S_0\cup\{\eta\}$ is an $(\mathcal
S,r,k)$-core. From the assumptions, $\{G_\ell\}_{\ell\in
S_0\cup\{\eta\}}$ is linearly independent. Hence
$$G_\eta\notin\langle\{G_\ell\}_{\ell\in S_0}\rangle.$$ Thus,
$$\langle\{G_\ell\}_{\ell\in
S_i\cap\Omega}\rangle\nsubseteq\langle\{G_\ell\}_{\ell\in
S_0}\rangle.$$ Since $q\geq\binom{n}{k-1}\geq |\Lambda|$, by Lemma
\ref{sub-space},
$$\langle\{G_\ell\}_{\ell\in
S_i\cap\Omega}\rangle\nsubseteq(\cup_{S_0\in\Lambda}\langle\{G_\ell\}_{\ell\in
S_0}\rangle).$$ Let $G_\lambda\in\langle\{G_\ell\}_{\ell\in
S_i\cap\Omega}\rangle\backslash(\cup_{S_0\in\Lambda}\langle\{G_\ell\}_{\ell\in
S_0}\rangle)$. Then for any $(\mathcal S,r,k)$-core
$S\subseteq\Omega\cup\{\lambda\}$, the vectors in $\{G_\ell;\ell\in
S\}$ are linearly independent.
\end{proof}
\vskip 10pt

The second construction method for optimal $(r,\delta)_a$ codes
is illustrated in the proof of the following theorem.

\vskip 10pt
\begin{thm}\label{opt-suf}
Let $\mathcal S$ be an $(\mathcal A,\Psi)$-frame  in Definition
\ref{frame}. Suppose $t\geq\lceil\frac{k}{r}\rceil$ and for any
$\lceil\frac{k}{r}\rceil$-subset $J$ of $[t]$, $|\cup_{i\in J}S_i|
\geq k+\lceil\frac{k}{r}\rceil(\delta-1)$. If
$q\geq\binom{n}{k-1}$, then there exists an optimal $(r,\delta)_a$
linear code over $\mathbb F_q$.
\end{thm}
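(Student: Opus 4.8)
The plan is to mirror the two-step construction used in the proof of Theorem \ref{opt-suf-1}, replacing the partition machinery there with the frame machinery developed in Lemmas \ref{core-form}, \ref{core-exist} and \ref{code-extd}. First I would fix, for each $i\in[t]$, the $r$-subset $U_i\subseteq S_i$ of Remark \ref{nor-core} (taking $\xi_j\in U_i$ whenever $i\in A_j$), set $\Omega_0=\cup_{i\in[t]}U_i$ and $L=|\Omega_0|$, and build an $[L,k]$ MDS code $\mathcal C_0$ over $\mathbb F_q$ whose columns are indexed by $\Omega_0$. Since $q\geq\binom{n}{k-1}\geq n>L$, such an MDS code exists. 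To legitimise this step I must check $L\geq k$: picking any $\lceil\frac{k}{r}\rceil$-subset $J$ of $[t]$, the set $\cup_{i\in J}U_i\subseteq\Omega_0$ is exactly a core-forming union of the type analysed in Lemma \ref{core-form}, so the counting identities (\ref{eqn:31}) and (\ref{eqn:32}), which rest on the hypothesis $|\cup_{i\in J}S_i|\geq k+\lceil\frac{k}{r}\rceil(\delta-1)$, give $|\cup_{i\in J}U_i|\geq k$ and hence $L\geq k$.

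Second, I would extend $\mathcal C_0$ to the full generating matrix $G=(G_1,\dots,G_n)$ by the same greedy sweep as Algorithm 1: running $i$ from $1$ to $t$ and, while $S_i\setminus\Omega\neq\emptyset$, picking $\lambda\in S_i\setminus\Omega$ and a vector $G_\lambda\in\langle\{G_\ell;\ell\in S_i\cap\Omega\}\rangle$ that keeps every $(\mathcal S,r,k)$-core inside $\Omega\cup\{\lambda\}$ linearly independent, before updating $\Omega\leftarrow\Omega\cup\{\lambda\}$. The invariant to maintain is that for every $(\mathcal S,r,k)$-core $S\subseteq\Omega$ the columns $\{G_\ell;\ell\in S\}$ are linearly independent; this holds initially because $\mathcal C_0$ is MDS (any $k$ of its columns are independent, in particular those of a core contained in $\Omega_0$), and it is preserved at each step precisely by Lemma \ref{code-extd}, which guarantees the required $G_\lambda$ exists as soon as $q\geq\binom{n}{k-1}$. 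Thus the algorithm terminates successfully and produces $G$.

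It then remains to verify two properties of the resulting code $\mathcal C$, exactly as in Claims 1 and 2 of Theorem \ref{opt-suf-1}. For all-symbol locality I would show, via Remark \ref{rem-loty}, that for each $i$ and each $r$-subset $I\subseteq S_i$ one has $\text{Rank}(\{G_\ell\}_{\ell\in I})=\text{Rank}(\{G_\ell\}_{\ell\in S_i})=r$: the upper bound $\le r$ follows from the algorithm placing every $G_\lambda$ in $\langle\{G_\ell;\ell\in S_i\cap\Omega\}\rangle$ (inductively the span of the $r$ vectors indexed by $U_i$), while the matching lower bound comes from part 2) of Lemma \ref{core-form}, which embeds $I$ into an $(\mathcal S,r,k)$-core and so forces $\{G_\ell;\ell\in I\}$ to be independent. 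For optimality I would invoke Lemma \ref{fact} together with the bound (\ref{eqn:1}): it suffices that every $T\subseteq[n]$ with $|T|=k+(\lceil\frac{k}{r}\rceil-1)(\delta-1)$ satisfies $\text{Rank}(\{G_\ell\}_{\ell\in T})=k$, and part 1) of Lemma \ref{core-form} supplies an $(\mathcal S,r,k)$-core $S\subseteq T$ whose columns are independent, giving rank $k$.

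The genuinely new content over Theorem \ref{opt-suf-1} is confined to the frame-specific combinatorics, and that has already been discharged: the overlapping structure of the $S_i$ (sharing the pivots $\xi_j$) and the three-part core definition of Definition \ref{g-core} enter only through Lemmas \ref{core-form}, \ref{core-exist} and \ref{code-extd}. Consequently the main obstacle is not in the theorem proof itself but in making sure these lemmas are applied under their stated hypotheses; in particular the subtle point deserving care is the $L\geq k$ verification, where the shared elements $\xi_j$ shrink $|\cup_{i\in J}U_i|$ below $r|J|$ and one must rely on the exact count (\ref{eqn:31}) rather than a naive bound.
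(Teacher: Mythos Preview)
Your proposal is correct and follows essentially the same route as the paper: build $\Omega_0$ from Remark \ref{nor-core}, start with an $[L,k]$ MDS code, run the greedy extension (the paper calls it Algorithm 2) with Lemma \ref{code-extd} supplying each $G_\lambda$, and then verify locality and optimality via parts 2) and 1) of Lemma \ref{core-form} respectively. The one cosmetic difference is your $L\geq k$ check: the paper uses the cleaner identity $|\cup_{i\in J}U_i|=|\cup_{i\in J}S_i|-|J|(\delta-1)$ (valid because each $U_i$ drops exactly $\delta-1$ points from $S_i$ and the overlaps among the $S_i$ in $J$ are exactly the shared pivots $\xi_j$, all of which lie in the $U_i$), whereas you appeal to the internal counts (\ref{eqn:31})--(\ref{eqn:32}) from the proof of Lemma \ref{core-form}; both arrive at the same bound.
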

\begin{proof}
Let $\Omega_0$ be what's described in Remark \ref{nor-core} and $L=|\Omega_0|$.
Clearly, $$L=n-t(\delta-1).$$ Since
$t\geq\lceil\frac{k}{r}\rceil$, let $J$ be a
$\lceil\frac{k}{r}\rceil$-subset of $[t]$; then from the assumptions,
$$|\cup_{i\in J}S_i|\geq k+\lceil\frac{k}{r}\rceil(\delta-1)
=k+|J|(\delta-1).$$
By Remark \ref{nor-core}, $\cup_{i\in J}U_i\subseteq \Omega_0$. Hence
$$L=|\Omega_0|\geq|\cup_{i\in J}U_i|=\cup_{i\in J}S_i|-
|J|(\delta-1)\geq k.$$

The construction of an optimal $(r,\delta)_a$ code consists of the
following two steps.

\emph{Step 1}: Construct an $[L,k]$ MDS code $\mathcal C_0$ over
$\mathbb F_q$. Such an MDS code exists when
$q\geq\binom{n}{k-1}\geq n>L$. Let $G'$ be a generating matrix of
$\mathcal C_0$. We index the columns of $G'$ by $\Omega_0$, i.e.,
$G'=(G_\ell)_{\ell\in\Omega_0}$, where $G_\ell$ is a column of
$G', \forall\ell\in\Omega_0$.

\emph{Step 2}: Extend the code $\mathcal C_0$ to an optimal
$(r,\delta)_a$ code $\mathcal C$. This can be achieved by the
following algorithm, which appears similar to Algorithm 1 (on the surface)
but is actually different (in details).

\vspace{0.12in} \noindent \textbf{Algorithm 2:}

\noindent 1. ~ Let $\Omega=\Omega_0$.

\noindent 2. ~ $i$ runs from $1$ to $t$.

\noindent 3. ~ ~ While $S_{i}\backslash\Omega\neq\emptyset$:

\noindent 4. ~ ~ ~ ~Pick a $\lambda\in S_{i}\backslash\Omega$ and
let $G_\lambda\in\langle\{G_\ell;~\ell\in S_i\cap\Omega\}\rangle$

~ ~ ~ ~be such that for any $(\mathcal S,r,k)$-core
$S~\subseteq\Omega~\cup~\{\lambda\}$,

~ ~ ~ ~$\{G_\ell; ~\ell\in S\}$ is linearly independent.

\noindent 5. ~ ~ ~ ~$\Omega=\Omega\cup\{\lambda\}$.

\noindent 6. ~ Let $\mathcal C$ be the linear code generated by
the matrix $~G=$

~ $(G_1,\cdots,G_n)$.

\vspace{0.12in} Since $G'=(G_\ell)_{\ell\in\Omega_0}$ is a
generating matrix of the MDS code $\mathcal C_0$, so for any
$(\mathcal S,r,k)$-core $S\subseteq\Omega_0$, $\{G_\ell;\ell\in
S\}$ is linearly independent. Then in Algorithm 2, by induction,
we can assume that for any $(\mathcal S,r,k)$-core
$S\subseteq\Omega$, $\{G_\ell;\ell\in S\}$ is linearly
independent. By Lemma \ref{code-extd}, in line 4 of Algorithm 2,
we can always find a $G_\lambda$ satisfying the requirement.
Hence, by induction, the collection $\{G_\ell;\ell\in[n]\}$
satisfies the condition that for any $(\mathcal S,r,k)$-core
$S\subseteq[n]$, $\{G_\ell;\ell\in S\}$ is linearly independent.
Moreover, since in line 4 of Algorithm 2, we can choose a
$G_\lambda\in\langle\{G_\ell;\ell\in S_i\cap\Omega\}\rangle$,
which satisfies
$$\text{Rank}(\{G_\ell\}_{\ell\in(S_i\cap\Omega)\cup\{\lambda\}})
=\text{Rank}(\{G_\ell\}_{\ell\in S_i\cap\Omega}).$$ By induction,
\begin{align}
\text{Rank}(\{G_\ell\}_{\ell\in
S_i})&=\text{Rank}(\{G_\ell\}_{\ell\in
S_i\cup\Omega_0})\nonumber \\
&=\text{Rank}(\{G_\ell\}_{\ell\in U_i}) \nonumber\\
&=r.\nonumber
\end{align}
For any $i\in[t]$ and $I\subseteq S_i$ of size $|I|=r$, by Claim 2) of
Lemma \ref{core-form}, there is an $(\mathcal S,r,k)$-core $S$
such that $I\subseteq S$. Hence $\{G_\ell;\ell\in S\}$ is linearly
independent. Thus, $$\text{Rank}(\{G_\ell\}_{\ell\in I})=r.$$ Therefore,
by Definition \ref{def-locality} and Remark \ref{rem-loty},
$\mathcal C$ is an $(r,\delta)_a$ code.

Finally, we prove that the minimum distance of $\mathcal C$ is
$d=n-k+1-(\lceil\frac{k}{r}\rceil-1)(\delta-1)$.

Suppose $T\subseteq[n]$ and
$|T|=k+(\lceil\frac{k}{r}\rceil-1)(\delta-1)$. By 1) of Lemma
\ref{core-form}, there is an $S\subseteq T$ which is an $(\mathcal
S,r,k)$-core. Therefore, $$\text{Rank}(\{G_\ell; \ell\in
T\})=\text{Rank}(\{G_\ell; \ell\in S\})=k.$$ By the minimum distance bound in (\ref{eqn:1}) and
Lemma \ref{fact}, the minimum distance of $\mathcal C$ is
$$d=n-k+1-(\lceil\frac{k}{r}\rceil-1)(\delta-1).$$ Hence
$\mathcal C$ is an optimal $(r,\delta)_a$ code.
\end{proof}
\vskip 10pt

{\it Example \ref{eg-core} continued:}
Consider the $(\mathcal A,\Psi)$-frame $\mathcal S$ in Example
\ref{eg-core}. Let $k=7$. Then it is obvious $\mathcal S$ satisfies the
conditions of Theorem \ref{opt-suf}. Thus, we can use Algorithm 2 to
construct an optimal $(r,\delta)_a$ linear code over the field of
size $q\geq\binom{n}{k-1}=\binom{37}{6}$. Note that $r=\delta=3$.
Hence, $(r+\delta-1)\nmid n$ and this is a new optimal $(r,\delta)_a$
code.

\vskip 5pt
As applications of Theorem \ref{opt-suf}, in the following, we
show that  optimal $(r,\delta)_a$ codes exist for two other
sets of coding parameters. From Claim 2) of Lemma \ref{low-bound}, we know that
$\frac{n}{r+\delta-1}\geq\frac{k}{r}$ is a necessary condition for
the existence of optimal $(r,\delta)_a$ linear codes. Thus we will assume
$\frac{n}{r+\delta-1}\geq\frac{k}{r}$ in  the following discussion.

\vskip 10pt
\begin{thm}\label{opt-ext-3}
Suppose $n=w(r+\delta-1)+m$ and $k=ur+v$, where $0<m<r+\delta-1$
and $0<v<r$. Suppose $w\geq r+\delta-1-m$ and $r-v\geq u$. If
$q\geq\binom{n}{k-1}$, then there exists an optimal $(r,\delta)_a$
linear code over $\mathbb F_q$.
\end{thm}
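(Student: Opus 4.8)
The plan is to reduce the claim to Theorem~\ref{opt-suf} by exhibiting a single $(\mathcal A,\Psi)$-frame $\mathcal S$ over $[n]$ whose $\lceil k/r\rceil$-wise unions meet the required lower bound; an optimal $(r,\delta)_a$ code over any $\mathbb F_q$ with $q\geq\binom{n}{k-1}$ then follows directly from that theorem. Concretely, I would set $t=w+1$ and use a \emph{single} overlap-cluster ($\alpha=1$): build $p=r+\delta-m$ blocks $S_\ell$ that all share one common coordinate $\xi_1$, and fill the rest of $[n]$ with $b=w-(r+\delta-1-m)$ pairwise-disjoint $(r+\delta-1)$-blocks forming $B$. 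Explicitly, one may take $\xi_1=1$, let each $S_\ell$ with $\ell\in A_1$ consist of $\{1\}$ together with a private block of $r+\delta-2$ fresh coordinates, and let the $S_\ell$ with $\ell\in B$ be disjoint $(r+\delta-1)$-blocks exhausting $[n]$.

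Before checking the union bound I would confirm the frame is well-defined. Since $m\leq r+\delta-2$ we have $p=r+\delta-m\geq 2$, so the cluster genuinely overlaps; since $w\geq r+\delta-1-m$ we have $b=w-(r+\delta-1-m)\geq 0$; and a direct count gives $\bigl(1+p(r+\delta-2)\bigr)+b(r+\delta-1)=w(r+\delta-1)+m=n$, so Conditions (1)--(2) of Definition~\ref{frame} hold. For the size requirement $t\geq\lceil k/r\rceil$, I would invoke the standing hypothesis $\frac{n}{r+\delta-1}\geq\frac{k}{r}$ (necessary by Lemma~\ref{low-bound}): monotonicity of the ceiling yields $w+1=\lceil\frac{n}{r+\delta-1}\rceil\geq\lceil\frac{k}{r}\rceil=u+1$, hence $t=w+1\geq u+1$.

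The crux is the intersection bound: for every $(u+1)$-subset $J\subseteq[t]$ one needs $|\bigcup_{i\in J}S_i|\geq k+(u+1)(\delta-1)$. Writing $j_A=|J\cap A_1|$, the shared coordinate $\xi_1$ creates exactly $\max(j_A-1,0)$ collisions, so $|\bigcup_{i\in J}S_i|=(u+1)(r+\delta-1)-\max(j_A-1,0)$. Substituting $k=ur+v$ and cancelling the $(u+1)(\delta-1)$ terms, the target inequality is equivalent to $j_A\leq r-v+1$. Since only one cluster is present, $j_A\leq\min(p,u+1)$, so it suffices to prove $\min(p,u+1)\leq r-v+1$. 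I expect this to be the main obstacle, and would dispatch it by a short case split on $p=r+\delta-m$: if $p\leq r-v+1$ the bound is immediate; if $p\geq r-v+2$, then the hypothesis $u\leq r-v$ gives $u+1\leq r-v+1<p$, so $\min(p,u+1)=u+1\leq r-v+1$. (Equivalently, $m<v+\delta-1$ forces the large-$p$ branch and consumes the hypothesis $u\leq r-v$, while $m\geq v+\delta-1$ forces $p\leq r-v+1$ outright.)

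With $\mathcal S$ thus verified to satisfy both hypotheses of Theorem~\ref{opt-suf}, that theorem produces an optimal $(r,\delta)_a$ linear code over any $\mathbb F_q$ with $q\geq\binom{n}{k-1}$, which completes the argument.
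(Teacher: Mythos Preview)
Your proposal is correct and follows essentially the same route as the paper: a single overlap-cluster $A_1$ of size $r+\delta-m$ (the paper's $\ell+1$) sharing one common coordinate, together with $w-(r+\delta-1-m)$ disjoint $(r+\delta-1)$-blocks in $B$, then an appeal to Theorem~\ref{opt-suf}. The only cosmetic difference is that your case split on $p$ is unnecessary: since $j_A\leq |J|=u+1\leq r-v+1$ by the hypothesis $u\leq r-v$, the bound $j_A\leq r-v+1$ holds outright, which is exactly how the paper handles it.
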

\begin{proof}
Let $t=w+1$. Note that we have assumed that
$\frac{n}{r+\delta-1}\geq\frac{k}{r}$. Then
$$t=w+1=\lceil\frac{n}{r+\delta-1}\rceil\geq\lceil\frac{k}{r}\rceil=
u+1.$$ Let
$$\ell=r+\delta-1-m$$ and
\begin{align}
L=(\ell+1)(r+\delta-2)+1.\label{eqn:33}
\end{align} Then from the assumptions,
$w\geq(r+\delta-1)-m=\ell$. Therefore $$t=w+1\geq \ell+1$$ and
\begin{eqnarray}
n-L&=&(w-\ell)(r+\delta-1)\nonumber
\\&=&(t-\ell-1)(r+\delta-1).
\label{eqn:34}
\end{eqnarray}
From equation (\ref{eqn:33}), $L-1=(\ell+1)(r+\delta-2)$. The set $[2,L]$ can
be partitioned into $\ell+1$ mutually disjoint subsets, say,
$T_1,\cdots,T_{\ell+1}$, each of size $r+\delta-2$. Let
$$S_i=\{1\}\cup T_i, i=1,\cdots,\ell+1.$$
Moreover, from equation (\ref{eqn:34}), the set $[L+1,n]$ can be partitioned
into $t-(\ell+1)$ mutually disjoint subsets, say,
 $S_{\ell+2},\cdots,S_{t}$, each of size $r+\delta-1$.

Let $\alpha=1$ and $A_1=\{1,\cdots,\ell+1\},B=\{\ell+1,\cdots,t\}$,
$\mathcal A=\{A_1,B\}$,  and $\Psi=\{1\}$. Then $\mathcal
S=\{S_1,\cdots,S_t\}$ is an $(\mathcal A,\Psi)$-frame. For any
$\lceil\frac{k}{r}\rceil$-subset $J$ of $[t]$, since $r-v\geq u$,
then
$$|J|=\lceil\frac{k}{r}\rceil=u+1\leq r-v+1.$$ Let
$J_1=J\cap\{1,\cdots,\ell+1\},$ and
$J_2=J\backslash\{1,\cdots,\ell+1\}$. By the construction of
$\mathcal S$, we have
\begin{eqnarray*} |\cup_{i\in J}S_i|&=&|J_1|(r+\delta-2)+1+|J_2|(r+\delta-1)
\\&=&|J|(r+\delta-1)-|J_1|+1\\&\geq&|J|(r+\delta-1)-|J|+1
\\&\geq&|J|(r+\delta-1)-(r-v+1)+1\\&=&(|J|-1)r+v+|J|(\delta-1)
\\&=&ur+v+\lceil\frac{k}{r}\rceil(\delta-1)
\\&=&k+\lceil\frac{k}{r}\rceil(\delta-1).
\end{eqnarray*}
By Theorem \ref{opt-suf}, if $q\geq\binom{n}{k-1}$, then there
exists an optimal $(r,\delta)_a$ code over $\mathbb F_q$.
\end{proof}
\vskip 10pt

\begin{thm}\label{opt-ext-4}
Suppose $n=w(r+\delta-1)+m$ and $k=ur+v$, where $0<m<r+\delta-1$
and $0<v<r$. Suppose $w+1\geq 2(r+\delta-1-m)$ and $2(r-v)\geq u$.
If $q\geq\binom{n}{k-1}$, then there exists an optimal
$(r,\delta)_a$ linear code over $\mathbb F_q$.
\end{thm}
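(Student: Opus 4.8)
The plan is to mirror the proof of Theorem~\ref{opt-ext-3}, but to replace its single large group of overlapping sets by $\ell:=r+\delta-1-m$ \emph{pairs} of overlapping sets. Concretely, I will exhibit an $(\mathcal A,\Psi)$-frame (Definition~\ref{frame}) on $[n]$ consisting of $t=w+1$ blocks of size $r+\delta-1$, organised so that the $\ell$ shared points $\xi_1,\dots,\xi_\ell$ each lie in exactly two blocks while the remaining blocks are pairwise disjoint, and then invoke Theorem~\ref{opt-suf}. The use of size-two groups (rather than one big group) is the whole point: it is what upgrades the admissible range from $r-v\ge u$ in Theorem~\ref{opt-ext-3} to $2(r-v)\ge u$ here.

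For the construction I would set $\alpha=\ell$, let each $A_j=\{2j-1,2j\}$ for $j\in[\ell]$ index a pair of $(r+\delta-1)$-sets meeting in a single new point $\xi_j$ and otherwise disjoint, and let $B=\{2\ell+1,\dots,t\}$ index pairwise-disjoint full blocks. Each pair covers $2(r+\delta-1)-1$ points, so the $\ell$ pairs together with the $t-2\ell$ singleton blocks cover $(w+1)(r+\delta-1)-\ell=w(r+\delta-1)+m=n$ points, and I identify this union with $[n]$. The hypothesis $w+1\ge 2(r+\delta-1-m)=2\ell$ is exactly what guarantees $t-2\ell\ge 0$, i.e.\ that there are enough blocks to realise the $\ell$ pairs; and $\ell\ge 1$ since $m<r+\delta-1$. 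One checks directly that conditions (1) and (2) of Definition~\ref{frame} hold for this $\mathcal S$.

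It then remains to verify the two hypotheses of Theorem~\ref{opt-suf}. First, the standing assumption $\tfrac{n}{r+\delta-1}\ge\tfrac{k}{r}$ (Claim~2 of Lemma~\ref{low-bound}) gives $t=w+1=\lceil n/(r+\delta-1)\rceil\ge\lceil k/r\rceil=u+1$. Second, for any $(u+1)$-subset $J$ of $[t]$, the same union computation as in Theorem~\ref{opt-ext-3} yields $|\cup_{i\in J}S_i|=(u+1)(r+\delta-1)-\rho$, where $\rho$ is the internal overlap of the blocks indexed by $J$; since all blocks in distinct pairs and in $B$ are pairwise disjoint, $\rho$ equals $\sum_{j}(\,|J\cap A_j|-1\,)^{+}$. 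The target inequality $|\cup_{i\in J}S_i|\ge k+(u+1)(\delta-1)$ is thus equivalent to $\rho\le r-v$. Because every group has size two, a group contributes $1$ to $\rho$ precisely when both of its blocks lie in $J$ and $0$ otherwise, so $\rho$ is just the number of fully selected pairs, which is at most $\min\{\lfloor(u+1)/2\rfloor,\ell\}\le\lfloor(u+1)/2\rfloor$. From $u\le 2(r-v)$ we get $\lfloor(u+1)/2\rfloor\le r-v$, hence $\rho\le r-v$, and Theorem~\ref{opt-suf} produces the desired optimal $(r,\delta)_a$ code over any $\mathbb F_q$ with $q\ge\binom{n}{k-1}$.

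The one genuinely load-bearing idea is the choice of pairs: halving through $\lfloor(u+1)/2\rfloor$ is exactly what doubles the threshold in $v$, while using $\ell$ pairs (each saving a single point) still realises the \emph{precise} total overlap $\ell$ forced by packing $w+1$ size-$(r+\delta-1)$ blocks into $[n]$. The only place to be careful is the adversarial maximisation of $\rho$ over $J$: I must confirm that filling complete pairs is optimal (partial pairs and $B$-blocks contribute nothing to $\rho$) and that the bound survives the cap by the number $\ell$ of available pairs, which the $\min$ above records. With that combinatorial count in hand, the remainder is a direct appeal to Theorem~\ref{opt-suf}.
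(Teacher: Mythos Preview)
Your proposal is correct and follows essentially the same approach as the paper: the paper also sets $\ell=r+\delta-1-m$, builds an $(\mathcal A,\Psi)$-frame with $A_j=\{2j-1,2j\}$ for $j\in[\ell]$ and $B=[2\ell+1,t]$, and then bounds the overlap count (the paper's $|\Gamma(J)|$, your $\rho$) by $\lfloor(u+1)/2\rfloor\le r-v$ before invoking Theorem~\ref{opt-suf}. Your exposition of why pairs (rather than one big group) buy the factor of two is a nice addition, but the argument itself matches the paper's.
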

\begin{proof}
Let $t=w+1$. Note that we have assumed that
$\frac{n}{r+\delta-1}\geq\frac{k}{r}$. Then
$$t=w+1=\left\lceil\frac{n}{r+\delta-1}\right\rceil\geq
\left\lceil\frac{k}{r}\right\rceil=
u+1.$$ Let
$$\ell=(r+\delta-1)-m$$ and
\begin{align}
L=\ell(2(r+\delta-1)-1).\label{eqn:35}
\end{align}
Then by assumption, $t=w+1\geq 2(r+\delta-1-m)=2\ell$. It then follows that
\begin{align}
n-L=(t-2\ell)(r+\delta-1)\geq 0.\label{eqn:36}
\end{align}
From equation (\ref{eqn:35}), the set $[L]$ can be partitioned
into $\ell$ mutually disjoint subsets, say, $T_1,\cdots,T_{\ell}$,
each of size $2(r+\delta-1)-1$. For each $i\in\{1,\cdots,\ell\}$,
we can find two subsets $S_{2i-1},S_{2i}$ of $T_i$ such that
$$|S_{2i-1}|=|S_{2i}|=r+\delta-1$$ and $$S_{2i-1}\cup
S_{2i}=T_i.$$ Then
$$|S_{2i-1}\cap S_{2i}|=1.$$ Let $S_{2i-1}\cap
S_{2i}=\{\xi_i\}$ and $\Psi=\{\xi_1,\cdots,\xi_\ell\}$.

Moreover, from Equation (\ref{eqn:36}), the set $[L+1,n]$ can be partitioned
into $t-2\ell$ mutually disjoint subsets, say
$S_{2\ell+1},\cdots,S_{t}$, each of size $r+\delta-1$.

Let $A_i=\{2i-1,2i\}, i=1,\cdots,\ell$, $B=[2\ell+1,t]$ and
$\mathcal A=\{A_1,\cdots,A_\ell,B\}$. Then $\mathcal
S=\{S_1,\cdots,S_t\}$ is an $(\mathcal A,\Psi)$-frame. For any
$\lceil\frac{k}{r}\rceil$-subset $J$ of $[t]$. Since $2(r-v)\geq
u$, then
\begin{align}
|J|=\lceil\frac{k}{r}\rceil=u+1\leq 2(r-v)+1.
\label{eqn:37}
\end{align} Let
$\Gamma(J)=\{j\in[\ell];A_j\subseteq J\}$. Then
\begin{align}
|J|\geq|\cup_{j\in\Gamma(J)}A_j|=2|\Gamma(J)|.\label{eqn:38}
\end{align}
Combining (\ref{eqn:37}) an (\ref{eqn:38}), we have
$$|\Gamma(J)|\leq\frac{|J|}{2}\leq\frac{2(r-v)+1}{2}=r-v+\frac{1}{2}.$$
Since $|\Gamma(J)|$ is an integer, then $$|\Gamma(J)|\leq r-v.$$
By the construction of $\mathcal S$, we have
\begin{eqnarray*} |\cup_{i\in J}S_i|&=&|J|(r+\delta-1)-|\Gamma(J)|
\\&\geq&|J|(r+\delta-1)-(r-v)\\&=&(|J|-1)r+v+|J|(\delta-1)
\\&=&k+\lceil\frac{k}{r}\rceil(\delta-1).
\end{eqnarray*}
By Theorem \ref{opt-suf}, if $q\geq\binom{n}{k-1}$, then there
exists an optimal $(r,\delta)_a$ code over $\mathbb F_q$.
\end{proof}
\vskip 10pt

We now provide some discussions of Theorem \ref{opt-ext-4}. Since
$0<m<r+\delta-1$, then $2(r+\delta-1-m)<2(r+\delta-1)$. Given
$k,r$ and $\delta$, let
$\alpha=\text{max}\{2(r+\delta-1),\lceil\frac{k}{r}\rceil\}$. Then
the conditions $w+1\geq 2(r+\delta-1-m)$ and $w\geq u$ can always
be satisfied when $n\geq\alpha(r+\delta-1)$. On the other hand,
when $\frac{k}{3}<r<k$ and $r\neq\frac{k}{2}$, then $u=1$ or $2$
and $r-v\geq 1$, which leads to $2(r-v)\geq u$. By Theorem \ref{opt-ext-4},
there exist optimal $(r,\delta)_a$ codes when
$n\geq\alpha(r+\delta-1)$, $\frac{k}{3}<r<k$ and
$r\neq\frac{k}{2}$.




\vspace{0.2cm}\begin{center}
\begin{tabular}{|p{3.5mm}|p{3.5mm}|p{3.5mm}|p{3.5mm}|p{3.5mm}|p{3.5mm}
|p{3.5mm}|p{3.5mm}|p{3.5mm}|p{3.5mm}|p{3.5mm}|}
\hline $r\backslash k$ & \small11 & \small12 & \small13 & \small14 & \small15
& \small16 & \small17 & \small18 & \small19 & \small20 \\
\hline \small$2$ & \small{E$_M$} & \small{E$_M$} & \small{E$_M$} &
\small{E$_M$} & \small{E$_M$} & \small{E$_M$} & \small{E$_M$}
& \small{E$_M$} & \small{E$_M$} & \small{E$_M$} \\
\hline \small$3$ & \small N$_{11}$ & \small N$_{10}$ & \small
E$_{27}$ & \small E$_{27}$ & \small N$_{10}$ & \small N$_{11}$ &
\small N$_{11}$ & \small N$_{10}$ & \small N$_{11}$
& \small N$_{11}$ \\
\hline \small $4$ & \small E$_{27}$ & \small N$_{10}$ & \small
E$_{27}$ & \small E$_{27}$ & \small N$_{11}$ & \small N$_{10}$ &
\small E$_{27}$ & \small E$_{27}$ & \small N$_{11}$
& \small N$_{10}$ \\
\hline \small $5$ & \small E$_{16}$ & \small E$_{27}$ & \small
E$_{27}$ & \small E$_{27}$ & \small N$_{10}$ & \small E$_{27}$ &
\small E$_{27}$ & \small E$_{27}$ & \small N$_{12}$
& \small N$_{10}$ \\
\hline \small $6$ & \small{E$_M$} & \small{E$_M$} & \small{E$_M$}
& \small{E$_M$} & \small{E$_M$}
& \small{E$_M$} & \small{E$_M$} & \small{E$_M$} & \small{E$_M$} & \small{E$_M$} \\
\hline \small $7$ & \small E$_{26}$ & \small E$_{26}$ & \small
E$_{26}$ & \small N$_{10}$ & \small E$_{26}$ & \small E$_{26}$ &
\small E$_{26}$ & \small E$_{26}$
& \small E$_{26}$ & \small $\sim$ \\
\hline \small $8$ & \small{E$_M$} & \small{E$_M$} & \small{E$_M$}
& \small{E$_M$} & \small{E$_M$}
& \small{E$_M$} & \small{E$_M$} & \small{E$_M$} & \small{E$_M$} & \small{E$_M$} \\
\hline \small $9$ & \small E$_{16}$ & \small E$_{16}$ & \small
E$_{16}$ & \small E$_{26}$ & \small E$_{26}$ & \small E$_{26}$ &
\small E$_{26}$ & \small N$_{10}$ & \small E$_{16}$
& \small E$_{16}$ \\
\hline \small $10$ & \small $\sim$ & \small $\sim$ & \small $\sim$
& \small $\sim$ & \small $\sim$
& \small $\sim$ & \small $\sim$ & \small $\sim$ & \small $\sim$ & \small N$_{10}$ \\
\hline \small $11$ & \small{E$_M$} & \small{E$_M$} & \small{E$_M$}
& \small{E$_M$} & \small{E$_M$}
& \small{E$_M$} & \small{E$_M$} & \small{E$_M$} & \small{E$_M$} & \small{E$_M$} \\
\hline
\end{tabular}

\vspace{0.5cm}\footnotesize{Table 1. ~ Existence of optimal
$(r,\delta)_a$ codes for parameters $n=60,\delta=5, 2\leq r\leq
11$ and $11\leq k\leq 20$.}
\end{center}

\section{Conclusions}
We have investigated the structure properties and construction methods
of optimal $(r,\delta)_a$ linear codes, whose length and
dimension  are  $n$ and $k$ respectively.
A structure theorem for optimal $(r,\delta)_a$ code with $r|k$ is first
obtained. We next derived two sets of parameters where no
optimal $(r,\delta)_a$ linear codes could exist (over any field), as well as identified four sets of parameters where optimal $(r,\delta)_a$ linear codes
exist over any field of size $q\geq\binom{n}{k-1}$. Some of these existence conditions were reported in the literature before, but the minimum field size we derived is (considerably) smaller than those derived in the previous works.
Our results have considerably substantiated the results in terms of constructing  optimal $(r,\delta)_a$ codes, and there are now only two small holes (two subcases with specific parameters) where the existence results are unknown.
 Except for these two small subcases, for all the other cases, given each tuple of $(n,k,r,\delta)$, either an
optimal $(r,\delta)_a$ linear code does not exist or an optimal $(r,\delta)_a$
linear code can be constructed using a deterministic algorithm.

As an illustrative summary of our results, we also provide in
Table 1 an example of the existence of optimal $(r,\delta)_a$
linear codes for the parameters of $n=60$, $\delta=5, 2\leq r\leq
11$ and $11\leq k\leq 20$. In this table, E$_M$  means that
optimal $(r,\delta)_a$ linear codes can be constructed by the
method in \cite{Prakash12} or by our Theorem \ref{opt-ext-1} and
Algorithm 1 (which requires a substantially smaller field);
E$_{16}~($resp. E$_{26}$, E$_{27})$ means optimal $(r,\delta)_a$
linear codes can be constructed by Theorem \ref{opt-ext-2}
$($resp. Theorem \ref{opt-ext-3}, Theorem \ref{opt-ext-4}$)$;
N$_{10}~($resp. N$_{11})$ means optimal $(r,\delta)_a$ linear
codes do not exist according to Theorem \ref{non-exst} $($resp.
Theorem \ref{non-exst-1}$)$; and $\sim$ means we do not yet know
whether an optimal $(r,\delta)_a$ linear code exists or not.

\end{document}